\newtheorem{thm}{Theorem}
\newtheorem{lmma}{\large Lemma}
\newtheorem{propn}{Proposition}
\newtheorem{conjet}{\large Conjecture}
\newtheorem{problm}{Problem}
\newtheorem{define}{Definition}
\newtheorem{estimate}{\large Estimation Procedure}
\newtheorem{estR}{\large Estimation Procedure and Result}
\newtheorem{resl}{\large Result}
\newtheorem{ques}{\large Question}
\newtheorem{remrk}{Remark}
\newtheorem{ass}{Assumption}
\newtheorem{coroll}{Corollary}
\newenvironment{theorem}{\begin{thm}\hskip-6pt{\bf}\enspace
    \sl}{\end{thm}}
\newenvironment{lemma}{\begin{lmma}\hskip-6pt{\bf}\enspace \sl}{\end{lmma}}
\newenvironment{definition}{\begin{define}\hskip-6pt{\bf}\enspace
    \sl}{\end{define}}
\newenvironment{remark}{\begin{remrk}\hskip-6pt{\bf}\enspace
    \rm}{\end{remrk}}
\newtheorem{example}{Example}
\newcommand{\be}{\begin{eqnarray*}}
\newcommand{\ee}{\end{eqnarray*}}
\newcommand{\ben}{\begin{eqnarray}}
\newcommand{\een}{\end{eqnarray}}
\newcommand{\var}{{\rm Var}}
\newcommand{\cov}{{\rm Cov}}
\newcommand{\tr}{{\rm tr}}
\newcommand{\argmin}{\operatornamewithlimits{argmin}}
\newcommand{\grad}{\nabla}
\newcommand{\E}{\mathbb{E}}
\newcommand{\twonorm}[1]{\left\lVert#1\right\rVert_2}
\newcommand{\shtwonorm}[1]{\lVert#1\rVert_2}
\newcommand{\fnorm}[1]{\left\lVert#1\right\rVert_F}
\newcommand{\norm}[1]{\left\lVert#1\right\rVert}
\newcommand{\abs}[1]{\left\lvert#1\right\rvert}
\newcommand{\minus}{\ensuremath{-}}
\newcommand{\M}{{\mathcal{M}}}
\newcommand{\V}{\mathcal{V}}
\newcommand{\N}{{\mathcal N}}
\newcommand{\ora}{{\rm oracle}}
\newcommand{\half}{\ensuremath{\frac{1}{2}}}
\newcommand{\inv}[1]{\frac{1}{#1}}
\def\conv{\mathop{\text{\rm conv}\kern.2ex}}
\newcommand{\RE}{\textnormal{\textsf{RE}}}
\newcommand{\ip}[1]{\;\langle{\,#1\,}\rangle\;}
\newcommand{\maxnorm}[1]{\ensuremath{\left|#1\right|_{\max}}}
\newcommand{\expct}[1]{\ensuremath{\mathbb E}\left[#1\right]}
\newcommand{\silent}[1]{}
\newcommand{\mvec}[1]{\rm{vec}\left\{\,#1\,\right\}}
\newcommand{\ve}{\varepsilon}
\def\qed{\hskip1pt $\;\;\scriptstyle\Box$}
\def\Ber{\mathop{\text{Bernoulli}\kern.2ex}}
\def\supp{\mathop{\text{supp}\kern.2ex}}
\def\corr{\mathop{\text{corr}\kern.2ex}}
\def\prec{\mathop{\text{precision}\kern.2ex}}
\def\recall{\mathop{\text{recall}\kern.2ex}}
\def\mnorm{\mathcal{N}_{f,m}\kern.2ex}
\def\var{\mathop{\text{Var}\kern.2ex}}
\def\ess{\mathop{\text{ess}\kern.2ex}}
\def\dom{\mathop{\text{dom}\kern.2ex}}
\def\lin{\mathop{\text{lin}\kern.2ex}}
\newcommand{\func}[1]{\ensuremath{\mathrm{#1}}}
\newcommand{\diag}{\func{diag}}
\newcommand{\probto}{\mbox{$\stackrel{{\rm P}}{\longrightarrow}\,$}}
\let\hat\widehat
\let\tilde\widetilde
\def\supp{\mathop{\text{\rm supp}\kern.2ex}}
\def\argmin{\mathop{\text{arg\,min}\kern.2ex}}
\newcommand{\prob}[1]{\ensuremath{\mathbb P}\left(#1\right)}
\newcommand{\beq}{\begin{equation}}
\newcommand{\eeq}{\end{equation}}
\newcommand{\bnum}{\begin{enumerate}}
\newcommand{\enum}{\end{enumerate}}
\newcommand{\bit}{\begin{itemize}}
\newcommand{\eit}{\end{itemize}}
\newcommand{\bens}{\begin{eqnarray*}}
\newcommand{\eens}{\end{eqnarray*}}
\newcommand{\R}{{\mathbb R}}
\newcommand{\B}{\ensuremath{\mathcal B}}
\newcommand{\SM}{\ensuremath{{\mathcal{S}^m}}}
\newcommand{\e}{\epsilon}
\newcommand{\vp}{\varphi}
\newcommand{\onem}{\textstyle \frac{1}{m}}
\newcommand{\onen}{\textstyle \frac{1}{n}}
\newcommand{\ignore}[1]{}{}
\begin{document}

\begin{frontmatter}
\title{Non-separable covariance models for spatio-temporal data, with applications to neural encoding
  analysis\thanksref{T1}}
\runtitle{Non-separable covariance models}
\thankstext{T1}{The research is supported by DMS-13-16731, and the Elizabeth Caroline Crosby Research Award to SZ from the University of Michigan.}

\begin{aug}
\author{\fnms{Seyoung} \snm{Park}$^*$\ead[label=e1]{seyoung.park@yale.edu}},
\author{\fnms{Kerby} \snm{Shedden}$^{\ddagger}$\ead[label=e2]{kshedden@umich.edu}}
\and
\author{\fnms{Shuheng} \snm{Zhou}$^{\dagger}$
\ead[label=e3]{shuhengz@umich.edu}}

\runauthor{Park, Shedden, Zhou}

\affiliation{Yale University\thanksmark{m1} and University of Michigan\thanksmark{m2}}

\address{ 
School of Public Health, Yale University$^*$\\
\printead{e1}}

\address{
University of Michigan$^{\ddagger \dagger}$\\
\printead{e2}\quad \printead{e3}}
\end{aug}

\begin{abstract}\quad
Neural encoding studies explore the relationships between 
measurements of neural activity and measurements of a 
behavior that is viewed as a response to that activity. 
The coupling between neural and behavioral measurements is typically imperfect and difficult to measure.
To enhance our ability to understand neural encoding relationships, 
we propose that a behavioral measurement may be decomposable 
as a sum of two latent components, such that the direct neural
influence and prediction is primarily localized to the component which
encodes temporal dependence. For this purpose,
we propose to use a non-separable Kronecker sum covariance model to
characterize the behavioral data as the sum of terms with exclusively
trial-wise, and exclusively temporal dependencies.  
We then utilize a corrected form of Lasso regression in combination with the nodewise regression approach for 
estimating the conditional independence relationships between and among
variables for each component of the behavioral data, 
where normality is necessarily assumed.
We provide the rate of convergence for estimating the precision
matrices associated with the temporal as well as spatial components in the Kronecker sum model. 
We illustrate our methods and theory using simulated data, 
and data from a neural encoding study of hawkmoth flight; 
we demonstrate that the neural encoding signal for hawkmoth wing
strokes is primarily  localized to a latent component with temporal
dependence, which is partially obscured by a second component with trial-wise dependencies.
\end{abstract}

\begin{keyword}
\kwd{Kronecker sum}
\kwd{errors-in-variables regression}
\kwd{space-time covariance model}
\kwd{neural data analysis}
\end{keyword}

\end{frontmatter}

\section{Introduction}

Statistics, machine learning and a broad range of application areas
such as genomics, neuroscience, and spatio-temporal modeling~\citep{Smith03,BCW08,WJS08,Yu09,KLLZ13}
that heavily rely on large-scale automated data analysis are sparking
one another's development in recent years.
In this setting, an important role for statistics is to provide models
that can accommodate the science and design computational efficient 
methods for dealing with large, complex and high dimensional
data arise from these application domains.

Neuroscience experiments often involve a large number of trials over
varying experimental conditions, often on only a modest number of
subjects.  In an experimental setting,~\cite{Sponberg:2015} are able to directly measure
neural activity and motion characteristics of hawkmoths, which are
known to be agile flyers whose flights are controlled by their complex neural systems.
Motor control is a complex process that involves the coordination of
many muscles to produce a complex movement.  
Animal flight is an especially intricate type of motion that is
produced by wing strokes that take place at high temporal frequency.
Studying neurocontrol of flight therefore involves a regression
relationship between measurements of neural activity and measurements of motion which are high-frequency, 
high-dimensional, and affected by substantial measurement error.

We first review the errors-in-variables (EIV) regression model with dependent measurements as studied in~\cite{RZ15}. 
Suppose that we observe $y\in \R^n$ and $X \in \R^{n \times m}$ in the following regression model 
\begin{eqnarray}
\label{eq::oby} 
y & =  & X_0 \beta^* + \ve, \quad \text{ where } \; \quad  X  =  X_0 + W,
\end{eqnarray}
where $\beta^* \in \R^m$ is an unknown vector to be estimated,
$X_0$ is an $n \times m$ design matrix and $W$ is a mean zero $n \times m$ random noise matrix, 
independent of $X_0$  and $\ve$, whose columns are also independent and each consists of dependent elements.
That is, we consider $\E \omega^j \otimes \omega^j = B$ for all $j=1, \ldots, m$, 
where $\omega^j$ denotes the $j^{th}$ column vector of $W$. Here $\otimes$ denotes the Kronecker Product.
Here we assume that the noise vector $\ve \in \R^n$ is independent of $W$ or
$X_0$, with independent entries $\ve_{j}$ satisfying $\E[\ve_{j}] = 0$ and
$\norm{\ve_{j}}_{\psi_2} \leq M_{\ve}$, 
where recall the  $\psi_2$ condition on a scalar random variable $V$
is equivalent to the subgaussian tail decay of $V$, which means
$\prob{|V| >t} \leq 2 \exp(-t^2/c^2), \; \; \text{for all} \; \; t>0$
for some constant $c$.

Using data from a study of \citet{Sponberg:2015},   
we show that the hawkmoth  flight motion data can be decomposed
additively into two components $X_0$ and $W$,
(cf. \eqref{eq::addmodel}),  
where $X_0$ contains most of the temporal correlation that is related to neural firing time difference $\Delta$, 
and $W$ is minimally related to neural activity and therefore can be
viewed as noise in this setting; moreover, our covariance model allows
dependencies among trials to be explicitly modeled, which is
significantly different from those measurement error models analyzed
in the literature. This novel additive model allows the
spatio-temporal features in observation 
data $X$ to be parsimoniously specified, 
in the sense that its covariance on $\mvec{X}$ can be succinctly
written as the  Kronecker sum as in \eqref{eq::addmodel}. 
For an $n \times m$ matrix $X$, $\mvec{X}$ is obtained by stacking the columns of
the matrix $X$ into a vector in $\R^{mn}$.

\subsection{Models and methods}
\label{sec::envregr}
We begin with a model considered in~\cite{RZ15}.  
Denote by $Z$ a subgaussian random ensemble where $Z_{ij}$ are
independent subgaussian random variables such that 
\ben
\label{eq::Zdef}
\expct{Z_{ij}} = 0 \; \text{ and } \; \; \norm{Z_{ij}}_{\psi_2} \leq K, \; \;  \mathrm{for\ all}\  i,j
\een 
for some finite constant $K$.
Let $Z_1, Z_2$ be independent copies of the subgaussian random matrix $Z$ as in~\eqref{eq::Zdef}.
Assume that the random matrix $X$ in~\eqref{eq::oby} satisfies
\ben
\label{eq::addmodel} 
X  & = & Z_1 A^{1/2} + B^{1/2} Z_2 
\; \sim \; \M_{n,m}(0, A \oplus B)\\
\nonumber
&& \text{ where } \; \;   A \oplus B := A \otimes I_n + I_m \otimes B
\een
denotes the Kronecker sum of positive definite $A \in \R^{m \times m}$
and $B \in \R^{n \times n}$. 
We use $X \sim \M_{n,m}(0,  A \oplus B)$ to denote the subgaussian random
matrix $X_{n \times m}$ which is generated using \eqref{eq::addmodel}
with $Z_1, Z_2$ being independent copies of $Z$ as in~\eqref{eq::Zdef}. 
When $Z_{i, j k} \sim \N(0, 1)$ for all $i =1, 2$ and all $j, k$, we
use $X \sim \N_{n,m}(0, A \oplus B)$, which is equivalent to say that
the  $n$ by $m$ random matrix $X$ follows $\mvec{X} \sim \N(0, A
\oplus B)$.
 In this covariance model, the first component, 
$A \otimes I_n$, describes the covariance of the {\it signal} $X_0 =
Z_1 A^{1/2}$, which is an ${n \times m}$ random design matrix with
independent subgaussian row vectors, and the other component, $I_m
\otimes B$, describes the covariance for the {\it noise matrix } $W
=B^{1/2} Z_2$, which contains independent subgaussian column vectors
$w^1, \ldots, w^m$, independent of $X_0$.  
This leads to a non-separable class of models for the observation $X$.

Suppose that $\hat\tr(B)$ is an estimator for $\tr(B)$; for example, 
as constructed in~\eqref{eq::trBest} for a single copy of data $X$ as in \eqref{eq::addmodel}. 
Let 
\ben
\label{eq::hatGamma}
\hat\Gamma & = & 
\inv{n} X^T X - \inv{n} \hat\tr(B)  I_{m}\;\;
\; \text{ and } \;
\hat\gamma \; = \; \inv{n} X^T y.
\een
For chosen parameters $\lambda$ and $b_1$, we exploit the following regularized estimation with the $\ell_1$-norm penalty to estimate $\beta^*$ (\citet{RZ15}):
\begin{eqnarray}
\label{eq::origin} \; \; 
\hat \beta & = & \argmin_{\beta: \norm{\beta}_1 \le b_1} \frac{1}{2} \beta^T \hat\Gamma \beta 
- \ip{\hat\gamma, \beta} + \lambda \|\beta\|_1. 
\end{eqnarray}
where $b_1$ is understood be chosen so that $\norm{\beta^*}_1 = \sum_{j=1}^m \abs{\beta^*_j} \le b_1$.
For replicated data, we can use replicates to obtain an estimator $\hat{B}$ for covariance $B$, 
using methods to be described in Subsection~\ref{sec:replicate}.
Then we can compute the trace of $\hat{B}$ and denote that by $\hat \tr(B) = \tr(\hat{B})$.
The non-convex optimization function in \eqref{eq::origin} can be solved efficiently using the composite gradient descent 
algorithm; see \citet{ANW12}, \citet{LW12}, and \citet{RZ15} for details.
An estimator similar to~\eqref{eq::origin} was considered by~\cite{LW12}, 
which is a variation of the  Lasso \citep{Tib96} or the Basis
Pursuit~\citep{Chen:Dono:Saun:1998} estimator. 
For the related Conic programming estimators (and related Dantzig
selector-type) for estimating $\beta^*$ in~\eqref{eq::oby}, see~\cite{RT10,RT13}, \cite{BRT14}, and \cite{RZ15}.

We are interested in studying the estimating the inverse covariance for $A$ and $B$, which we will
describe in Section~\ref{sec:graphical}.  
In Section \ref{sec:graphical}, the corrected Lasso estimator will
be used in the  nodewise regression procedure to estimate the concentration matrices 
 $\Theta = A^{-1}$ and $\Omega=B^{-1}$. We first focus on estimating the structures
by using the nodewise regression-based approach as in \citet{MB06}.
We then apply the refit procedure as in~\citet{Yuan10} and \citet{LW12} to obtain the final estimates of 
$\Theta$ and $\Omega$. 
In our numerical examples,  the estimated 
concentration matrices are subject to a suitable level of thresholding \citep{ZRXB11}.

\subsection{Data analysis}

The ability of flying animals to turn while in flight is controlled by
the firing of neurons located in the left and right dorsolongitudinal
muscles (DLMs).  Recently, \cite{Sponberg:2012} and
~\cite{Sponberg:2015} conducted experiments and developed an analytic
approach to better understand this neural-motor control mechanism.  
In these studies, torque profiles were obtained for individual wing strokes of
hawkmoths (a large bird-like moth).   
The torque profiles were sampled at high frequency, yielding around 500 measurements per wing stroke.
The spike times of neuronal firing for the left DLM ($t_L$) and right DLM ($t_R$) were also
measured for each wing stroke.

These spike times represent the time at which neuronal firing occurred immediately prior to each wing stroke. 
There is one pair of $(t_L, t_R)$ measurements per wing stroke and  298-928 wing strokes are recorded for each moth.
It is generally accepted that $t_L \approx t_R$ during straight flight
and $t_L < t_R$ or $t_L > t_R$ when the moth makes a left or a right turn respectively.  
The measured difference in DLM firing $\Delta = t_L - t_R$ 
may be taken as a summary measure of the neural signal for
turning.  The observed torque profile reflects
the actual turning behavior during one wing stroke.  While this profile can be summarized through a
simple measure such as the average torque, it is not clear what features of the torque profile are most tightly linked to the neural
signal represented by $\Delta$, and therefore may be presumed to be most directly under neural control.

\citet{Sponberg:2015} aim to capture the variation in movement that relates to the neural signals.
They apply the partial least squares (PLS) to extract the encoded features of movement based
on the cross-covariance of neural signals $(t_L, t_R)$ and torque profiles.
They exploit the extracted motor features to test whether neural signals act as a synergy or independently encode information about movement.

In the present work, we focus on the Kronecker sum model to encode 
the covariance structure for data matrix $X$ collected from each single moth. 
We allow dependencies between and among wing strokes (that is, repeated 
experiments under varying conditions) to be explicitly specified through the covariance 
matrix $B \succ 0$ in \eqref{eq::addmodel} while covariance matrix $A \succ 0$ 
is used to model the temporal dependencies among time points within each wing stroke, 
much like the classical time series analysis. 
The relative contributions of the ``signal" and ``noise" components vary by moth, 
perhaps due to inhomogeneities in the experimental conditions. 
Taking a purely data-driven approach, the neural encoding of motion can therefore be studied 
through the regression relationship between quantitative measures of neural activity in  $\Delta = t_L - t_R$ (as $y$)
and quantitative measures of motion which correspond to wing stroke data in $X_0$ using the model~\eqref{eq::oby}.

\subsection{Contributions}
In the proposed research, we aim to study a general class of
matrix decomposition and regression  problems,
where the design matrix $X_0$ and the random error $W$ may possess
dynamic and complex dependency structures.  
The presence of spatially correlated noise in $W$ across different flights motivates the consideration of 
errors-in-variables regression through the regression function \eqref{eq::oby} in combination with the Kronecker 
sum covariance model for $X$.  
Spatial dependencies in $X$ are understood to encode correlations between and among different trials of wing strokes, 
which are present possibly due to the correlated measurement errors and experimental conditions.
After accounting for this measurement error, the relationship between
neural activity and motion is shown to be stronger. 

The rest of this paper is organized as follows.
In Section~\ref{sec:graphical}, we describe our methods for estimating the concentration matrices.
In Section \ref{sec:theory}, we establish statistical convergence properties for  the inverse covariance estimation problem for the matrix-variate normal distribution with Kronecker sum covariance model.
In Section \ref{sec:simul}, we present simulation study which show
that our proposed methods indeed achieve consistent estimation of $\Theta$ and $\Omega$ in the operator norm. 
In Section \ref{sec:realdata}, 
we apply the Kronecker sum covariance model and its related errors-in-variables regression method to analyze
the hawkmoth neural encoding data. 
We also describe the estimated $\Theta$ and $\Omega$ which encode conditional independence relationships between
time points,  as well as among wing strokes. 
In Section~\ref{sec::conclude}, we conclude. \\

\noindent{\bf Notation.}
For a matrix $A = (a_{ij})_{1\le i,j\le m}$, we use $\twonorm{A}$ to denote its operator norm and
let $\norm{A}_{\max} =\max_{i,j} |a_{ij}|$ denote  the entry-wise max norm. 
Let $\norm{A}_{1} = \max_{j}\sum_{i=1}^m\abs{a_{ij}}$ denote the matrix $\ell_1$ norm.
The Frobenius norm is  $\norm{A}^2_F = \sum_i\sum_j a_{ij}^2$. 
For a square matrix $A$, let ${\rm tr}(A)$ be the trace of $A$, $\diag(A)$ be a diagonal matrix with the same diagonal as
$A$, and $\kappa(\Sigma)$ denote the condition number of $A$.
Let $I_n$ be the $n$ by $n$ identity matrix.  For two numbers $a, b$, $a \wedge b
:= \min(a, b)$ and $a \vee b := \max(a, b)$.
For a function $g: \R^m \to \R$, we write $\grad g$ to denote a
gradient or subgradient, if it exists. 
  Let $(a)_+ := a \vee 0$.
We use $a =O(b)$ or $b=\Omega(a)$ if $a \le Cb$ for some positive absolute
constant $C$ which is independent of $n, m$ or sparsity parameters.
We write $a \asymp b$ if $ca \le b \le Ca$ for some positive absolute
constants $c,C$.
The absolute constants $C, C_1, c, c_1, \ldots$ may change line by
line. 
We list a set of symbols we use throughout the paper in 
Table \ref{table_def} at the end of the paper.

\section{The additive Gaussian graphical models}
\label{sec:graphical}
Consider
\eqref{eq::addmodel} where  we assume $Z_1$ and $Z_2$ are independent  
copies of a Gaussian random ensemble $Z$, where $Z_{ij} \sim \N(0,1)$ for all $i, j$.  
The results in~\cite{RZ15} naturally lead to the following
considerations  for estimating the precision matrix $\Theta := A^{-1}$.
Similarly, we obtain $\hat{\Omega}$, the estimator for precision matrix $\Omega =
B^{-1}$. \\

\noindent {\bf Estimating $\Theta$ via Nodewise Regression}
To construct an estimator for $\Theta = A^{-1}$ with $X = X_0 + W$ as
defined in~\eqref{eq::addmodel}, we obtain $m$ vectors of
$\hat{\beta}^i, i=1, \ldots, m$,  
by solving
\eqref{eq::origin} with $\hat\Gamma$ and $\hat\gamma$ set to be 
\ben
\label{eq::gamma}
\hat\Gamma^{(i)} = \onen X_{\minus i}^T  X_{\minus i} - \hat\tau_B I_{m-1} 
\; \text{ and } \; \hat\gamma^{(i)} \; = \; \onen  X_{\minus i}^T X_i
\een
where $X_{\minus i}$ denotes columns of $X$ without $i$ and
$\hat\tau_B$ is an estimator for $\tau_B = \tr(B)/n$.
For a chosen penalization parameter $\lambda \geq 0$ and a fixed $b_1 > 0$, consider the following  variant of the Lasso estimator
\begin{eqnarray}
\label{eq::hatTheta}
\hat \beta^i & = & 
\argmin_{\beta \in \R^{m-1},  \norm{\beta}_1 \le b_1} \left\{ \frac{1}{2} \beta^T \hat\Gamma^{(i)} \beta 
- \ip{\hat\gamma^{(i)}, \beta} + \lambda \|\beta\|_1\right\}.
\end{eqnarray}
Covariance estimation can be obtained through procedures which involve
calculating variances for the residual errors after obtaining
regression coefficients or through the MLE refit procedure based on the associated edge set 
\citep[cf.][]{Yuan10,ZRXB11}. The choice of $b_1$ in~\eqref{eq::origin} will depend on the model class for $\Theta$,
which will be chosen to provide an upper bound on the matrix $\ell_1$ norm $\norm{\Theta}_1$.
More precisely, we can set $b_1 \asymp \max_{j} (\theta_{jj} \sum_{i\not=j}^m \abs{\theta_{ij}}) \le M
\norm{\Theta}_1$, assuming that $\theta_{jj} \le M$ for some absolute constant $M$.

To solve the non-convex optimization problem~\eqref{eq::hatTheta},   
we use the composite gradient descent algorithm as studied in \citet{ANW12} and \citet{LW12}. Let 
$L(\beta):=    \frac{1}{2} \beta^T \hat\Gamma^{(i)} \beta - \ip{\hat\gamma^{(i)}, \beta}$.
The gradient of the loss function is $\grad L(\beta) = \hat{\Gamma}^{(i)} \beta -\hat\gamma^{(i)}$. The composite gradient
descent algorithm produces a sequence of iterates $\{\beta^{(t)},\ t=0,1,2, \cdots, \}$ by
\begin{equation}
\label{grad_comp_2}
\beta^{(t+1)}= \argmin_{\|\beta\|_1 \le b_1} L(\beta^{(t)})+ <\grad L(\beta^{(t)}), \beta- \beta^{(t)}> + \frac{\eta}{2} \|\beta-\beta^{(t)}\|_2^2 + \lambda \|\beta\|_1 
\end{equation}
with the step size parameter $\eta>0$.

For the Kronecker sum model as in \eqref{eq::oby} and
\eqref{eq::addmodel} to be identifiable,  
we assume the trace of $A$ is known.
\bnum
\item[(A1)]
We assume $\tr(A) = m$ is a known parameter. 
\enum
For example, any $m$-dimensional correlation matrix satisfies (A1).
We defer the discussion on condition (A1) in
Section~\ref{sec::related}. For a model where replicated measurements
of $X_0$ are available, we do not need to assume that (A1) holds.
Assuming $\tr(A)$ or $\tr(B)$ is known is unavoidable as the
covariance model is not identifiable otherwise for a single sample case.
By knowing $\tr(A)$, we can construct an estimator for
$\tr(B)$ following~\cite{RZ15},
\ben
\label{eq::trBest}
\hat\tr(B) &:= &
\onem \big(\fnorm{X}^2 -n \tr(A)\big)_{+} \; \; \;
\text{and } \; \; 
\hat\tau_B  := \onen \hat\tr(B).
\een
Let $Z_1 =(Z_{1, ij})_{n \times m}$ be a Gaussian random ensemble with independent standard normal entries.
To estimate the precision matrix $\Theta$,  first consider the following regressions, where we regress one variable against
all others: for $X_0 = Z_1 A^{1/2}$ defined in \eqref{eq::addmodel},
the $j$th column of $X_0$ satisfies  
\begin{eqnarray}
\label{eq::regr}
&& X_{0,j}   =  
X_{0, -j}\beta^{j} + V_{0, j}, \\
&&\ \text{where}\;\; V_{0,j} \sim {\N}(0_n, \sigma_{V_j}^2 I_n)\
 \mbox{is independent of}\ X_{0, -j},  \label{eqLreggg2}
\end{eqnarray}
where $X_{0,j}$ denotes the $j$th column of $X_0$, $X_{0,-j}$ denotes the matrix of $X_0$ with its $j$th column removed,
$\sigma_{V_j}^2 := A_{jj}-A_{j,-j} A_{-j,-j}^{-1} A_{-j,j}$, 
and $\beta^{j} \in \R^{m-1}$ corresponds to a vector of regression coefficients. 
One can verify that 
\begin{eqnarray}
\label{eq::cond-beta}
 \Theta_{j j}  & = & (A_{jj} -A_{j, \minus j} \beta^{j})^{-1}\; \text{
   and }\;  \Theta_{j, \minus j} \;  = \; -(A_{jj} -A_{j,   \minus j} \beta^{j})^{-1} \beta^{j}.
\end{eqnarray}  
Now the $j^{th}$ column of $X$ in \eqref{eq::oby} can be written as
\bens
\label{eq::additive}
&&X_j  =  X_{0,-j} \beta^{j} + V_{0,j} + W_j  
=: X_{0,-j} \beta^{j}  + \ve_j,  \quad j=1, \ldots, m, \\
&& \text{where we  observe } X_{-j} = X_{0, -j} + W_{-j}, \; \; \text{which resembles \eqref{eq::oby}.}
\eens
The noise $\ve_j =   V_{0,j} + W_j$ is independent of $\{X_{i}; i \neq j\}\ (i=1,\ldots,m)$, but the components of $\ve_j$ are
correlated due to $W_j$.  Thus this fits in the errors-in-variables
framework  despite the complication due to the dependence within
components of $\ve_j$ for all $j$. 
We estimate $\beta^i$ for each $i$ with \eqref{eq::hatTheta} and 
then obtain $\Theta$ by using \eqref{eq::cond-beta}. 
We summarize the algorithm as follows:\\

\noindent
\textbf{Algorithm 1: Input ($\hat\Gamma^{(j)}, \hat\gamma^{(j)}$) as
  in~\eqref{eq::gamma} and  $\hat\Gamma$ as in~\eqref{eq::hatGamma}} \\
\noindent{\textbf{(1)}}
Perform $m$ regressions using \eqref{eq::hatTheta} to obtain vectors of
$\hat{\beta}^{j} \in \R^{m-1}$, $j=1, \ldots, m$, with  
penalization parameters $\lambda^{(j)}, b_1 > 0$ to be specified. \\
\noindent{\textbf{(2)}}
Construct $\tilde{\Theta} \in \R^{m \times m}$  in view of \eqref{eq::cond-beta}, 
such that for $j=1, \ldots, m$,
\bens
\tilde{\Theta}_{j, \minus j} = -(\hat\Gamma_{jj} - \hat\Gamma_{j,
    \minus j} \hat{\beta}^j)^{-1}\hat{\beta}^j, \; \text{and}  \; \tilde{\Theta}_{j j} = (\hat\Gamma_{jj} - \hat\Gamma_{j,
   \minus j} \hat{\beta}^{j})^{-1}.
\eens
\noindent{\textbf{(3)}}
Project $\tilde{\Theta}$ onto the space $\SM$ of $m \times m$ symmetric matrices: 
\[
\hat\Theta = \argmin_{\Theta \in \SM} \norm{\Sigma - \tilde{\Theta}}_1.
\]

Our theoretical results show that this procedure indeed achieves consistent estimation of
$\Theta$ in the operator norm under suitable
assumptions.

\subsection{Related work}
\label{sec::related}
For matrix-variate data with two-way dependencies,   
prior work depended on a large number of replicated data to obtain certain convergence guarantees, 
even when the data is observed in full and free of measurement error;
see for example~\cite{Dut99}, \cite{WJS08}, \cite{LT12}, and \cite{THZ13}.
A recent line of work on matrix variate models \citep{KLLZ13,Zhou14a,RZ15}
have focused on the design of estimators and efficient algorithms
while establishing theoretical properties by using the Kronecker sum and product covariance models when a single or a small
number of replicates are available from such matrix-variate
distributions; See also~\cite{Efr09}, \cite{AT10}, and \cite{HSZ15} for related models and applications.
Among these models, the Kronecker sum provides a
covariance or precision matrix which is sparser than the
Kronecker product (inverse) covariance model.

Variants of the linear errors-in-variables models in the high
dimensional setting has been considered in recent work~\citep{RT10,LW12,RT13,BRT14,CC13,SFT14}, where oblivion in
the covariance structure for row or columns of $W$, and a general dependency condition in the single data matrix $X$ are
not simultaneously allowed.  
The second key difference between our framework and
the existing work is that we assume that only one observation matrix $X$ with the single measurement
error matrix $W$ is available. Assuming (A1) allows us to estimate $\E W^T W$ as required
in the estimation procedure \eqref{eq::hatGamma} directly, given the
knowledge that $W$ is composed of independent column vectors.  
In contrast, existing work needs to assume that the covariance matrix 
$\Sigma_W := \onen \E W^T W$ of the independent row vectors
of $W$ or its functionals are either known a priori, or can be 
estimated from a dataset independent of $X$, or from replicated $X$
measuring the same $X_0$; see for example \cite{carr:rupp:2006}, \cite{RT10},  \cite{LW12}, \cite{RT13}, and \cite{BRT14}.
Although the model we consider is different from those in the
literature, the identifiability issue, which arises from the fact that
we observe the data under an additive error model, is common.
Such repeated measurements are not always available or costly to 
obtain in practice~\citep{carr:rupp:2006}. We will explore such
tradeoffs in future work.

\section{Theoretical properties}
\label{sec:theory}
We require the following assumptions.
\bnum
\item[(A2)]
The minimal eigenvalue $\lambda_{\min}(A)$
of the covariance matrix $A$ is bounded: $1 \ge \lambda_{\min}(A) > 0$.
\item[(A3)]
The condition number $\kappa(A)$ is upper bounded
by $O\left(\sqrt{\frac{n}{\log m}}\right)$ and $\tau_B = O(\lambda_{\max}(A))$.
\item[(A4)]
The covariance matrix $B$ satisfies
\[
\frac{\fnorm{B}^2}{\twonorm{B}^2} =\Omega(\log m),\quad
\frac{\tr(B)}{\twonorm{B}}  = \Omega\left( \frac{n}{\log m}
\log \frac{m \log m }{n}\right).
\]
\enum
Theorem \ref{coro::Theta} shows the statistical consistency of $\hat\Theta$ in the operator norm.
See Table \ref{table_def} for the notations used in the theorem.
Proof of the theorem can be found in Section \ref{Sec:gau}. 

\begin{theorem}
\label{coro::Theta}
Suppose conditions (A1)-(A4).   
Suppose the columns of $\Theta$ is $d$-sparse, i.e., the number of nonzero entries on each column in $\Theta$ is bounded by $d$, which satisfies
$d=O(n/ \log m)$.
Suppose the
condition number $\kappa(\Theta)$ is finite.
Let $\hat{\beta}^i$ be an optimal solution to the nodewise regression with $b_1 := b_0 \sqrt{d}$, where $b_0$ satisfies $\phi b_0^2 \le \twonorm{\beta^{i}}^2 \le b_0^2$ for 
some $0< \phi <1$, and $D'_0 = {\twonorm{B}}^{1/2} + a_{\max}^{1/2}$ and
\begin{equation}
\label{eq:lambda}
\lambda^{(i)}  \ge  \psi_i\sqrt{\frac{\log m}{n}} 
  \; \; \text{ where } \;\;     \psi_i := C_0 D_0' K^2  \left(\tau_B^{+/2} \twonorm{\beta^{i}} + \sigma_{V_i}  \right)
\end{equation}
for some positive absolute constant $C_0$, where $\sigma_{V_i}^2 :=
A_{ii}-A_{i,-i} A_{-i,-i}^{-1} A_{-i,i}$.   
Then, 
\begin{equation}
\label{eq:bound:thm1}
\twonorm{\hat\Theta - \Theta} 
=O_P\left( \frac{d K^2(A)}{\lambda_{\min}^2(A)}  \max_i \lambda^{(i)} \right).
\end{equation}
Moreover, suppose that $\tau_B \le \twonorm{B} = O(\lambda_{\max}(A))$ and
$ \lambda^{(i)}  \asymp \psi_i\sqrt{\frac{\log m}{n}}$ for each $i$. Then
\begin{equation}
\label{eq:rel:bd}
\frac{\twonorm{\hat{\Theta}-\Theta}}{\|\Theta\|_2} = O_P\left(\frac{K^4(A)}{\lambda_{\min}(A)}\sqrt{\frac{d \log
      m}{n}}\right).
\end{equation}
\end{theorem}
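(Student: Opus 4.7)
The plan is to obtain the result by a column-wise analysis: first control each nodewise regression estimate $\hat\beta^i$ via the corrected Lasso theory of \citet{RZ15}, then propagate those errors through the plug-in formula \eqref{eq::cond-beta} to bound $\tilde\Theta$ column by column, and finally pass to the symmetrized estimator $\hat\Theta$ and the operator norm.

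\textbf{Step 1 (per-regression rates).} Viewing the $i$th nodewise regression in the form $X_i = X_{0,-i}\beta^i + \varepsilon_i$ with composite noise $\varepsilon_i = V_{0,i} + W_i$, I would apply the corrected-Lasso guarantees that underlie \eqref{eq::hatTheta}. The two ingredients needed are (i) an $\ell_\infty$-deviation bound $\|\hat\Gamma^{(i)} - A_{-i,-i}\|_{\max} = O_P(\psi_i\sqrt{\log m/n})$ and an analogous bound for $\hat\gamma^{(i)} - A_{-i,-i}\beta^i$, which both follow from the Hanson–Wright-type concentration used in \citet{RZ15} together with the trace-estimator accuracy \eqref{eq::trBest}; and (ii) a restricted eigenvalue (RE) condition on $\hat\Gamma^{(i)}$, which transfers from $A_{-i,-i}$ once the $\ell_\infty$ bound is in place, given $d=O(n/\log m)$ and (A2). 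With the stated $\lambda^{(i)}\asymp \psi_i\sqrt{\log m/n}$, the corrected-Lasso analysis then delivers
\begin{equation*}
\twonorm{\hat\beta^i-\beta^i} = O_P\!\left(\frac{\sqrt d\,\lambda^{(i)}}{\lambda_{\min}(A)}\right), \qquad \|\hat\beta^i-\beta^i\|_1 = O_P\!\left(\frac{d\,\lambda^{(i)}}{\lambda_{\min}(A)}\right),
\end{equation*}
uniformly over $i$.

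\textbf{Step 2 (from $\hat\beta^i$ to $\tilde\Theta$).} From \eqref{eq::cond-beta} we have $\Theta_{ii}=1/\sigma_{V_i}^2$ and $\Theta_{i,-i} = -\Theta_{ii}\beta^i$, while $\tilde\Theta_{ii} = 1/\hat\sigma_{V_i}^2$ with $\hat\sigma_{V_i}^2 := \hat\Gamma_{ii}-\hat\Gamma_{i,-i}\hat\beta^i$. I would first show $|\hat\sigma_{V_i}^2-\sigma_{V_i}^2| = o_P(\sigma_{V_i}^2)$ by decomposing
\begin{equation*}
\hat\sigma_{V_i}^2-\sigma_{V_i}^2 = (\hat\Gamma_{ii}-A_{ii}) - (\hat\Gamma_{i,-i}-A_{i,-i})\beta^i - \hat\Gamma_{i,-i}(\hat\beta^i-\beta^i),
\end{equation*}
bounding the first two terms by the Step 1 concentration and the third by Hölder combined with the $\ell_1$-rate for $\hat\beta^i-\beta^i$ and the $\ell_\infty$-control on $\hat\Gamma_{i,-i}$. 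With $\hat\Theta_{ii}$ stable, the column-wise error satisfies
\begin{equation*}
\twonorm{\tilde\Theta_{\cdot,i}-\Theta_{\cdot,i}} \leq |\hat\Theta_{ii}-\Theta_{ii}|\,\twonorm{\beta^i} + \hat\Theta_{ii}\,\twonorm{\hat\beta^i-\beta^i} + |\hat\Theta_{ii}-\Theta_{ii}|,
\end{equation*}
each piece controlled by Step 1 and the above. The resulting column-wise bound is $O_P\!\left(\Theta_{ii}\cdot \sqrt d\,\lambda^{(i)}/\lambda_{\min}(A)\right)$.

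\textbf{Step 3 (symmetrization and operator norm).} The projection in Algorithm 1(3) is in the matrix $\ell_1$ norm, so $\|\hat\Theta-\Theta\|_1 \leq 2\|\tilde\Theta-\Theta\|_1$. Since $\hat\Theta-\Theta$ is symmetric, $\twonorm{\hat\Theta-\Theta} \leq \|\hat\Theta-\Theta\|_1$. The $\ell_1$ (max column sum) of $\tilde\Theta-\Theta$ is controlled by combining the $d$-sparsity of $\Theta$ with the column-wise $\ell_\infty$ error $\max_i \|\tilde\Theta_{\cdot,i}-\Theta_{\cdot,i}\|_\infty$, yielding an extra factor $d$ and an overall rate $O_P\!\left(d\,\kappa^2(A)\,\max_i\lambda^{(i)}/\lambda_{\min}^2(A)\right)$, which is \eqref{eq:bound:thm1}. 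The relative bound \eqref{eq:rel:bd} then follows by substituting $\max_i\psi_i$ from \eqref{eq:lambda}: under $\tau_B\lesssim \twonorm{B}=O(\lambda_{\max}(A))$ and $\sigma_{V_i}^2\le a_{\max}\asymp \lambda_{\max}(A)$, $\psi_i\twonorm{\beta^i}$ contributes an extra $\kappa(A)$ factor while dividing by $\twonorm{\Theta}=1/\lambda_{\min}(A)$ absorbs one power of $\lambda_{\min}^{-1}(A)$, giving the stated $K^4(A)/\lambda_{\min}(A)\cdot\sqrt{d\log m/n}$.

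\textbf{Main obstacle.} The delicate step is the $\ell_\infty$-concentration of $\hat\Gamma^{(i)}-A_{-i,-i}$ and $\hat\gamma^{(i)}-A_{-i,-i}\beta^i$ in the presence of column-wise correlated noise $W=B^{1/2}Z_2$ combined with the plug-in trace estimator \eqref{eq::trBest}. In particular, verifying that the correction $-\hat\tau_B I_{m-1}$ does not inflate the bias requires using (A3) and (A4) to show that the error $\hat\tau_B-\tau_B$ is of lower order than the dominant Gaussian fluctuation $\psi_i\sqrt{\log m/n}$; this is exactly where the ratios $\fnorm{B}^2/\twonorm{B}^2$ and $\tr(B)/\twonorm{B}$ in (A4) enter. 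Once this uniform deviation is in place, the remainder of the argument is a deterministic chain through Steps 1–3.
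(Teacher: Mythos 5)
Your proposal is correct and takes essentially the same route as the paper: the paper likewise establishes the $\ell_\infty$ deviation bounds $\norm{\hat\gamma^{(i)} - \hat\Gamma^{(i)}\beta^{i}}_{\infty}$ and $\maxnorm{\hat\Gamma_A - A}$ (its Lemmas \ref{lemma::AD} and \ref{lemma::low-noise}, which handle exactly the ``main obstacle'' you flag, via (A3)--(A4) and the trace estimator) together with the lower-$\RE$ condition at the scaling $\sqrt{d}\,\tau \le \min\left\{\alpha/(32\sqrt{d}),\ \lambda^{(i)}/(4b_0)\right\}$, and then invokes Corollary 5 of \citet{LW12} wholesale, which packages precisely your Steps 2--3 (plug-in formula, $\ell_1$ projection, and $\twonorm{\cdot} \le \norm{\cdot}_1$ for symmetric matrices), with the relative bound \eqref{eq:rel:bd} obtained by the same substitution of $\max_i \psi_i$ that you perform (cf.\ Remark \ref{rem::factor}). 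One minor repair: in your Step 3 the columns of $\tilde{\Theta} - \Theta$ are not themselves $d$-sparse (the constrained Lasso solutions $\hat{\beta}^i$ need not be), so the max-column-sum bound should be run through the $\ell_1$ rate $\norm{\hat{\beta}^i - \beta^{i}}_1 = O_P\left(d\,\lambda^{(i)}/\lambda_{\min}(A)\right)$ already available from your Step 1 rather than ``sparsity times $\ell_\infty$''; this yields the same factor of $d$ and the same final rate.
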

Using the similar argument,  we can show consistency results of
$\Omega = B^{-1}$.   See Theorem \ref{coro::Omega} of the Supplementary materials for
details.
\begin{remark}
\label{rem::factor}
In \eqref{eq:lambda},  since $\beta^{i}= -(A_{ii}-A_{i,-i} A_{-i,-i}^{-1} A_{-i,i}) \Theta_{-i,i}$, 
\bens
 \|\beta^{i}\|_2 \le a_{\max}  \|\Theta_{-i,i}\|_2 \le a_{\max}
 \lambda_{\max} (\Theta) =  
\frac{a_{\max}}{\lambda_{\min}(A)},
\eens
where $a_{\max} := \max_{i} A_{ii} \ge 1$ by (A1). 
Together with $\sigma_{V_i}^2 \le a_{\max}$ and  $\tau_B^{+/2} = O(\sqrt{\tau_B})$, it holds that
\bens
\max_i \psi_i & \asymp & 
\left({\twonorm{B}}^{1/2} + a_{\max}^{1/2}\right) \left(\sqrt{\tau_B}
  \frac{a_{\max}}{\lambda_{\min}(A)} + a_{\max}^{1/2}\right) \\
& \le & 
 \kappa(A) \left(2\twonorm{B} + \tau_B + 2 a_{\max}\right) \\
\text{ thus }
\inv{\lambda_{\min}^2(A)}\max_i \psi_i 
& \le & \frac{\kappa(A)}{\lambda_{\min}(A)} 
\left(\frac{3\twonorm{B}}{\lambda_{\min}(A)} + 2\kappa(A) \right)
\asymp \frac{ \kappa^2(A)}{\lambda_{\min}(A)}
\eens
so long as $\twonorm{B} = O(\lambda_{\max}(A))$.
This calculation shows that  so long as 
$\tau_B \le \twonorm{B} = O(\lambda_{\max}(A))$ and
$ \lambda^{(i)}  \asymp \psi_i\sqrt{\frac{\log
    m}{n}}$, \eqref{eq:rel:bd} holds.
When the noise in  $W$ is negligible in the sense that 
$\tau_B$ (and hence $\|B\|_2$ in view of (A4)) is close to zero, 
then \eqref{eq:bound:thm1} reduces to
\bens
\twonorm{\hat\Theta - \Theta} = O_P\left(K^4(A) d \sqrt{\frac{\log m}{n}}\right).
\eens
The bound \eqref{eq:bound:thm1} is analogous to that of Corollary 5 in
\citet{LW12}, where the measurement error $W$ has independent row
vectors with a known $\Sigma_W := \onen \E W^T W$.  
\end{remark}

\begin{remark}
Our theoretical results show that $\hat{\Theta}$ consistently estimates $\Theta$ in the
spectral norm under suitable conditions. 
However, $\hat{\Theta}$  is not necessarily positive-semidefinite.
One can obtain a consistent and positive-semidefinite estimator  
by considering an additional estimation procedure, which is described in Section \ref{sec:add:proc}. 
\end{remark}

\subsection{Using replicates}
\label{sec:replicate}
In the current work, we focus a class of generative models which rely
on the sum of Kronecker product covariance matrices to model complex trial-wise
dependencies as well as to provide a general statistical framework for
dealing with signal and noise decomposition. 
It can be challenging to handle this type of data because 
there are often inhomogeneities and dependencies among the trials, and
thus they can not be treated as independent replicates.
In the present work, we propose a framework
for explicitly modeling the variation in a set of ``replicates'' that
may be neither independent nor identical.
We denote the number of subjects by $n$,  the number of time
points by $m$, and the number of (non-i.i.d.) replicates by $N$.  
Let $A \in \R^{m \times m}$, $B, C  \in \R^{n\times n}$ be $m$ by $m$ and $n$ by $n$ positive definite 
matrices, respectively. We consider the following generative model:
\begin{eqnarray}
\label{main:eq:1}
&& X_i= X_0+W_i, \quad \forall i=1,\cdots, N 
\; \; \text{where }\\
\nonumber
&& \mathrm{vec}(X_0) \sim \mathcal{L}(0, A \otimes C) \quad \text{ and
}  
\quad \mathrm{vec}(W_i) \sim \mathcal{L}(0, I_m \otimes B),
\end{eqnarray}
where $X_0, W_i \in \R^{n \times m}$ are independent subgaussian random
matrices such that $\mvec{X_0}, \mvec{W_i} \in \R^{mn}$ 
have covariances $A \otimes C$ and $I_m \otimes B$, respectively,  
where recall $\mvec{X_0}$ and $\mvec{W_i}$ are obtained by stacking
the columns of $X_0$ and $W_i$ into vectors in $\R^{mn}$.
Here the mean response matrix $X_0$ and the experiment-specific variation matrices $W_i$
can jointly encode the temporal and spatial dependencies. 
More generally, we can use a subgaussian random matrix
to model replicate-to-replicate fluctuations:
\bens
\mvec{W_i} \sim \mathcal{L}(0, I_m \otimes B(t)),\; \; \text{ where } \; \; B(t) \succ 0
\eens
is the covariance matrix describing the spatial dependencies which may
vary cross the $N$ replicated experiments.
In this subsection, we consider the general model \eqref{main:eq:1},
but with 
\begin{equation}
\label{eq:main}
\mathrm{vec}(X_0) \sim \N(0, A \otimes I_n) \; \text{ and } \; 
 \mathrm{vec}(W_i) \sim \N(0, I_m \otimes B),
\end{equation}
where $X_0$ and $W_i$ are independent each other.
Note that in model \eqref{eq:main},  one can avoid the assumption that
the trace of $A$ is known.
To estimate $\Omega =B^{-1}$, we note that
\[
\forall i \neq j,\quad
\mathrm{vec}(X_i-X_j) \sim \N(0, I_m \otimes 2B).
\]
Without loss of generality, assume $N$ is an even number. We have $N/2$ replicates to estimate $\Omega= B^{-1}$: each column of $\widetilde{W}_i=X_{2i-1}-X_{2i}$ for $i=1,\cdots, N/2$ is a random sample of $\N(0_n, 2B)$.    
To estimate $\Theta=A^{-1}$, consider the following observed mean response:
\ben
\label{eq::Beffect}
\overline{X} =\frac{1}{N} \sum_{i=1}^N X_i, \quad \rm{where}\quad
\cov(\mvec{\overline{X}}) =  A \otimes I_n + \frac{1}{N} I_m \otimes B.
\een
We omit theoretical properties of these estimators. We summarize the estimation procedures as follows: \\

\noindent
\textbf{Algorithm 2-1: Obtain $\hat{\Omega}$ with i.i.d. input vectors $\{\widetilde{W}_i\}_{i=1,\cdots, N/2}$}\\
\noindent
\noindent{\textbf{(1)}}
Let $\widetilde{B} := \sum_{i=1}^{N/2} \widetilde{W}_i
\widetilde{W}_i^T /(Nm) \succeq 0$ be an unbiased estimator of $B$.\\
\noindent{\textbf{(2)}}
Apply graphical Lasso \citep{FHT07} with the input $\widetilde{B} \succeq 0$ to obtain $\hat{\Omega}$.\\
\noindent
\textbf{Algorithm 2-2: Obtain $\hat{\Theta}$ with input $\hat{\tau}_B=
  \tr(\widetilde{B})/n$}.\\  
\noindent
Let $\widetilde{A}$  be an unbiased estimator of $A$;
Estimate $\Theta$ using  Algorithm 1 with $$\hat \Gamma = \widetilde{A}
:= \overline{X}^T \overline{X}/n-\frac{\hat{\tau}_B}{N} I_m,\quad
\hat\Gamma^{(j)} =\hat \Gamma_{-j, -j} 
\quad \text{ and } \; \hat\gamma^{(j)} \; = \; \onen  \overline{X}_{\minus j}^T \overline{X}_j.
$$

\section{Simulation results}
\label{sec:simul} 
  
We perform simulations to investigate the performances of the estimators.
Consider the following covariance models for $A$ and $B$ and their inverses as used in
\citet{Zhou14a}.    

\begin{enumerate}
\item AR(1) model :  For $\rho \in (0,1)$, the covariance matrix $A$ is of the form $A=(a_{ij})$ such that $a_{ij}=\rho^{|i-j|}$.

\item Star-Block model:  The covariance matrix $A$ consists of sub-blocks
with size $16$ whose inverses correspond to star graphs, where $A_{ii}=1$:
In each subgraph, five nodes are connected to a central hub node with no other connections.
Covariance matrix for each block $S$ in $A$ is generated as follows: $S_{ij} = 0.5$ if $(i, j) \in E$ and
$S_{ij} = 0.25$, otherwise.  

\item Random graph: 
The graph is generated according to a Erdos-Renyi random graph model. 
Initially, we set $\Omega= I_{n \times n}$. Then we randomly select $n$ edges and update $\Omega$ as follows: for each new edge $(i,j)$, a weight $w>0$ is chosen uniformly at random from $[0.1,0.3]$; 
we subtract $w$ from $\omega_{ij}$ and $\omega_{ji}$, and increase $\omega_{ii}$ and $\omega_{jj}$ by $w$. 
We multiply the constant $c>0$ to the $\Omega$ such that
$\rm{tr}(c^{-1}\Omega^{-1})=n$ holds,  and set $B := c^{-1} \Omega^{-1}$.
This sets the trace of $B$ to be $n$.
\end{enumerate}

Throughout the simulation, we use $\lambda^{(i)} = 2 \psi_0 \sqrt{\log
  m/n}$, $b_1=\sqrt{d} a_{\max}$ and step size parameter $\eta = 1.5
\lambda_{\max}(A)$ for the composite gradient descent algorithm \eqref{grad_comp_2}.

\subsection{Statistical and optimization error}
In this subsection, we study the optimization error $\log
(\|\tilde{\beta}^i_t - \hat{\beta}^i\|_2)$ and the statistical error
$\log (\|\tilde{\beta}^i_t - \beta^i\|_2)$, based on nodewise
regression estimator $\tilde{\beta}^i_t$, which is the $t^{th}$
iteration in the composite gradient descent algorithm for the $i^{th}$ nodewise regression.

Figure \ref{fig_opt} displays the error when $m=512$ and $n=256$. 
The matrix $A$ is generated using the Star-Block model (left), 
and the Erdos-Renyi random graph model with $\|A\|_2=3.0$ and $\kappa(A)=9.0$ (right)
respectively, while setting $B=0.3B^*$, where $B^*$ is generated using
the Erdos-Renyi random graph model, where $\|B\|_2=0.54$ and $\kappa(B)=8.40$.
For the Star-Block model, we randomly select twenty hubs and twenty
leaves linked to hubs are considered in the graph.    
For the random graph model, we randomly select twenty nodes in the graph.

\begin{figure}[t]
\centering
  \begin{tabular}{@{}cc@{}}
        \includegraphics[width=5.6cm, height=5.6cm]{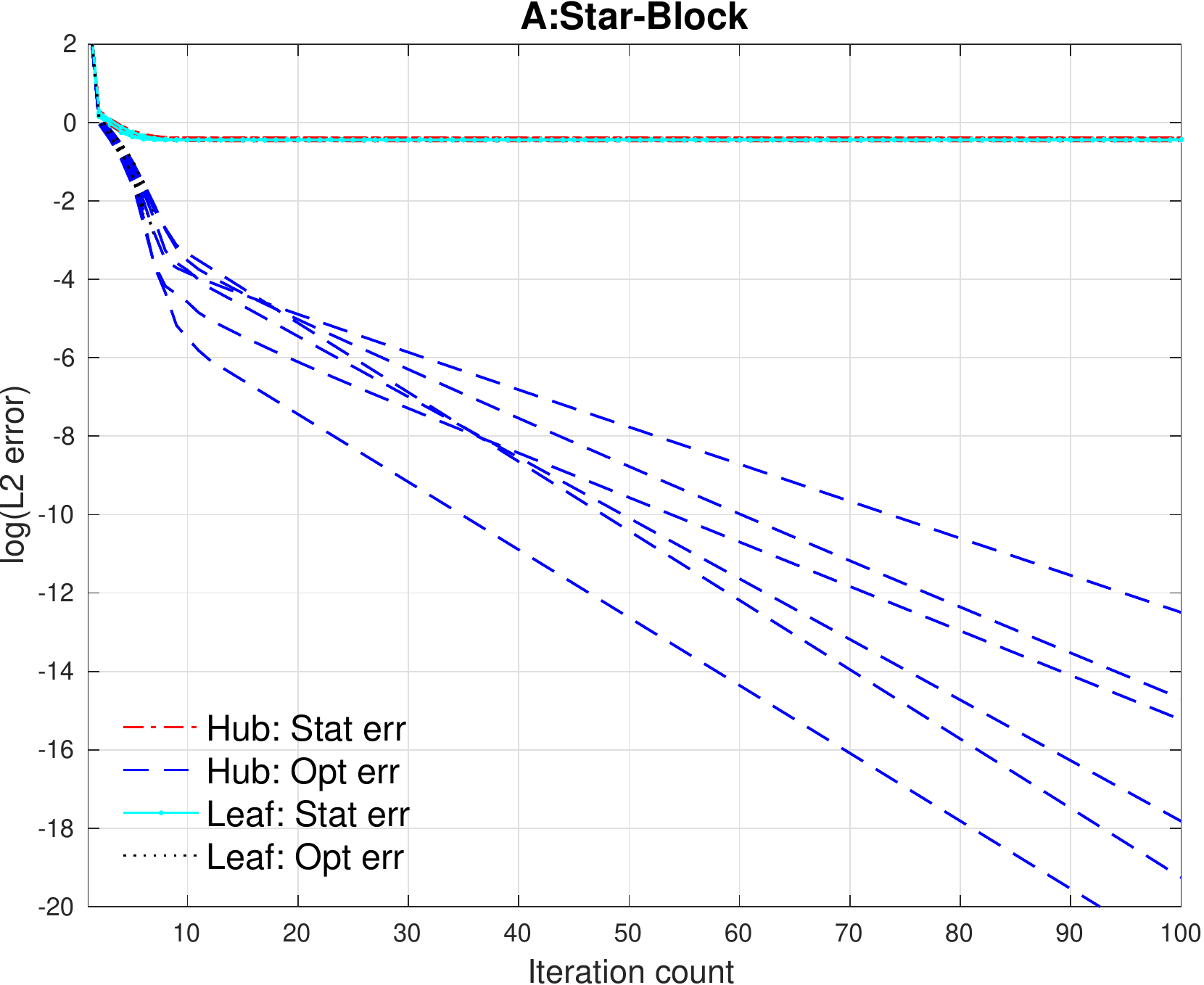}  
        & \includegraphics[width=5.6cm, height=5.6cm]{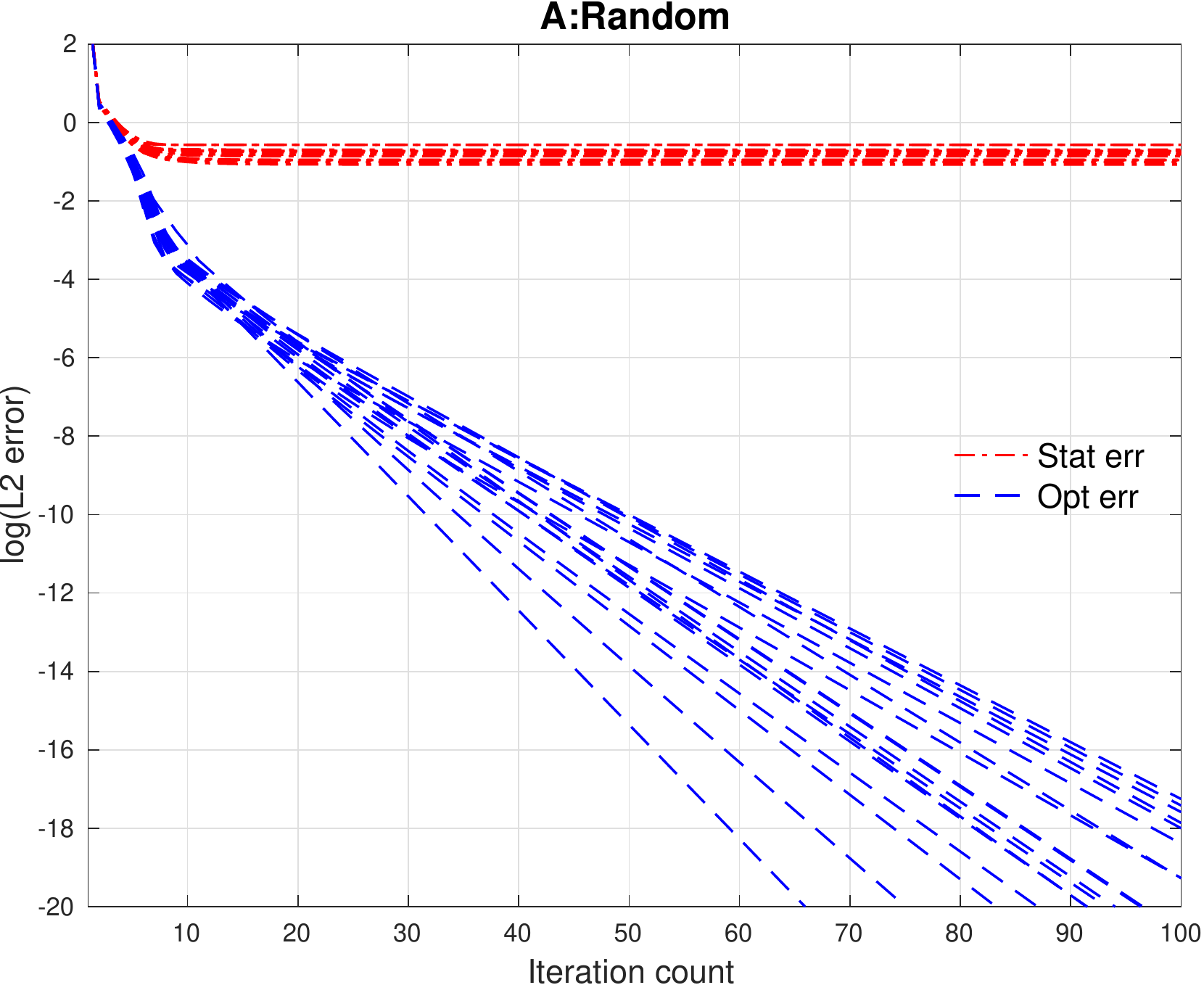} 
  \end{tabular}
  \caption{
Figures display the average of the optimization error $\log
(\|\tilde{\beta}^i_t - \hat{\beta}^i\|_2)$ 
and statistical error $\log (\|\tilde{\beta}^i_t - \beta^i\|_2)$ for $i=1,\cdots, m$ over ten
different initial inputs, where $\tilde{\beta}^i_t$ is the $t^{th}$
iterate of the composite gradient descent algorithm on the $i^{th}$ node
regression. Each line corresponds to one nodewise regression.
}
\label{fig_opt}
\end{figure}

For each chosen node, the error is averaged over ten trials with
different initialization points in the algorithm.
We observe that the sequence of iterates $\{\tilde{\beta}^{i}_t\}$ to
converge geometrically to a fixed point for each node $i$, while the
statistical error flattens out.

\subsection{Analysis on statistical error}

In the rest of the simulations, we repeat experiments $100$ times and
record the average of the relative error for estimating $\Theta$ using
$\hat{\Theta}$ in the operator and the Frobenius norm:
$\|\hat{\Theta}-\Theta\|_2/ \|\Theta\|_2$ and $\|\hat{\Theta}-\Theta\|_F/ \|\Theta\|_F$.
Figure \ref{figee143BB_changingasd} displays the error against
a rescaled sample size  when $m \in \{128,256,512\}$ for each case.

For the top two plots of Figure \ref{figee143BB_changingasd},  
we consider model \eqref{eq:main} when the number of replicates are $N \in \{1,2,4,8\}$. 
Note that when $N=1$, the model reduces to the Kronecker sum model.
The covariance matrices $A$ and $B$ are generated using the $AR(1)$
and the random model respectively, with $\rho_A=0.3$ and $\tau_B=0.5$.
We observe the error $\norm{\hat{\Theta} - \Theta}$ in the operator
and the Frobenius norm decreases as $N$ increases 
because $N$ reduces the trace effect of $B$ from $\tau_B$ to
$\tau_B/N$ as shown in \eqref{eq::Beffect}.

For the lower two figures of Figure \ref{figee143BB_changingasd}, we consider the Kronecker sum model, where
$A$ and $B$ are generated using the Star-Block and AR(1) model, respectively, with $\rho_B \in \{0.3, 0.7\}$ and 
$\tau_B \in \{0.3, 0.7\}$.
We observe that the two sets of curves corresponding to $\tau_B=0.3$ are lower than those of $\tau_B=0.7$ while
the curves with the lower $\rho_B$ have smaller errors given the same $\tau_B$ value.
The hidden factor imposed on the upper bound of the relative error
$\|\hat{\Theta}-\Theta\|_2/ \|\Theta\|_2$ in the operator norm as stated in
Theorem~\ref{coro::Theta} remains invariant for three values of $m \in \{128, 256, 512\}$
when $\rho_B$ and $\tau_B$ are fixed. See Remark
\ref{rem::factor} for discussions.   


We observe that the two sets of curves corresponding to $\tau_B=0.3$
are lower than those of $\tau_B=0.7$, which can be explained by the
fact that the factor has a smaller value when $\tau_B=0.3$. 
These results are consistent with the theoretical bounds in Theorem \ref{coro::Theta}.
Overall, we see the curves now align for different values of $m$ in
the rescaled plots, which confirms the error bound of $O(d\sqrt{\log m/n})$.

\begin{figure}[H]
\centering
  \begin{tabular}{@{}cc@{}}
         \includegraphics[width=6cm, height=5cm] {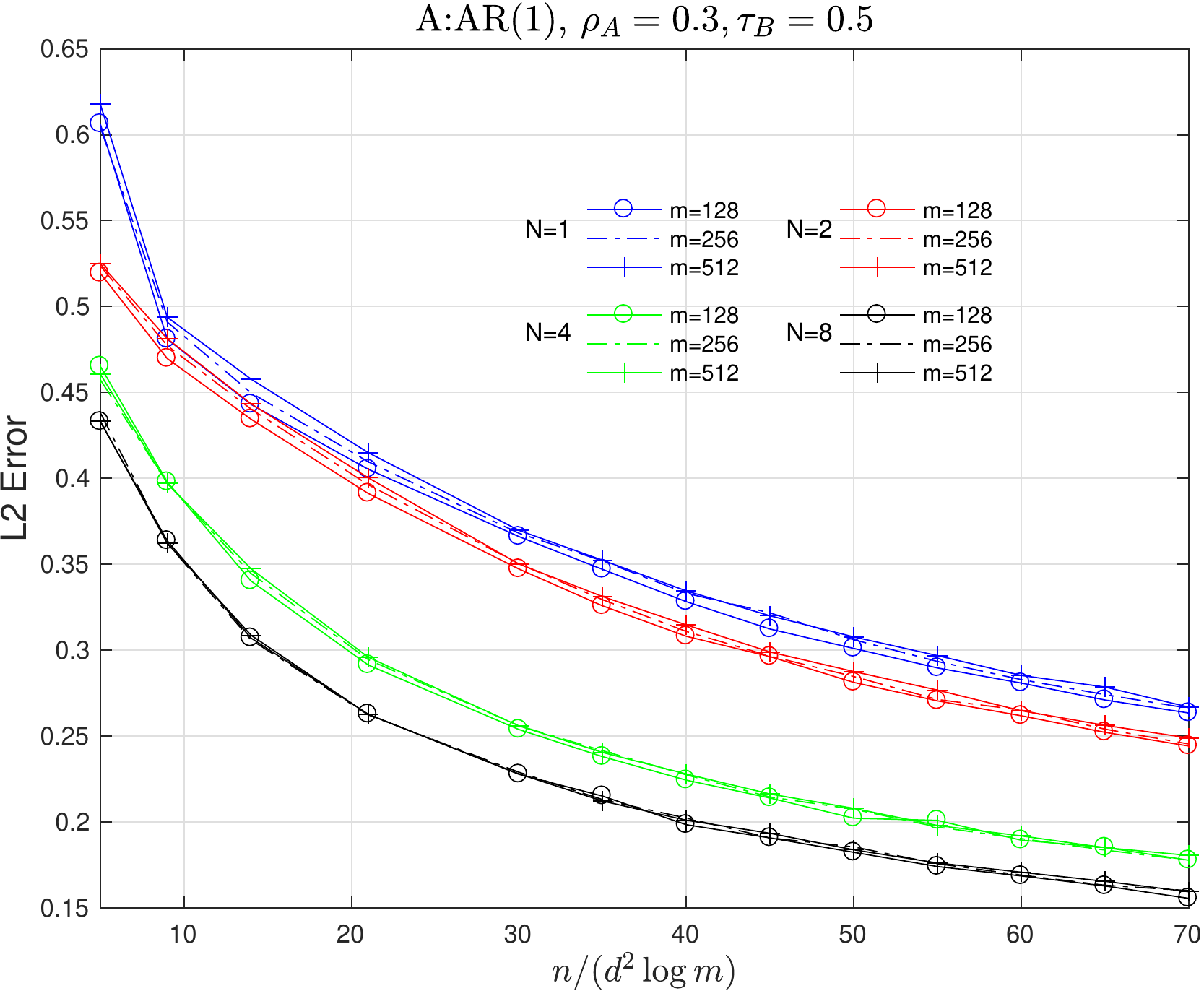}  &   \includegraphics[width=6cm, height=5cm] {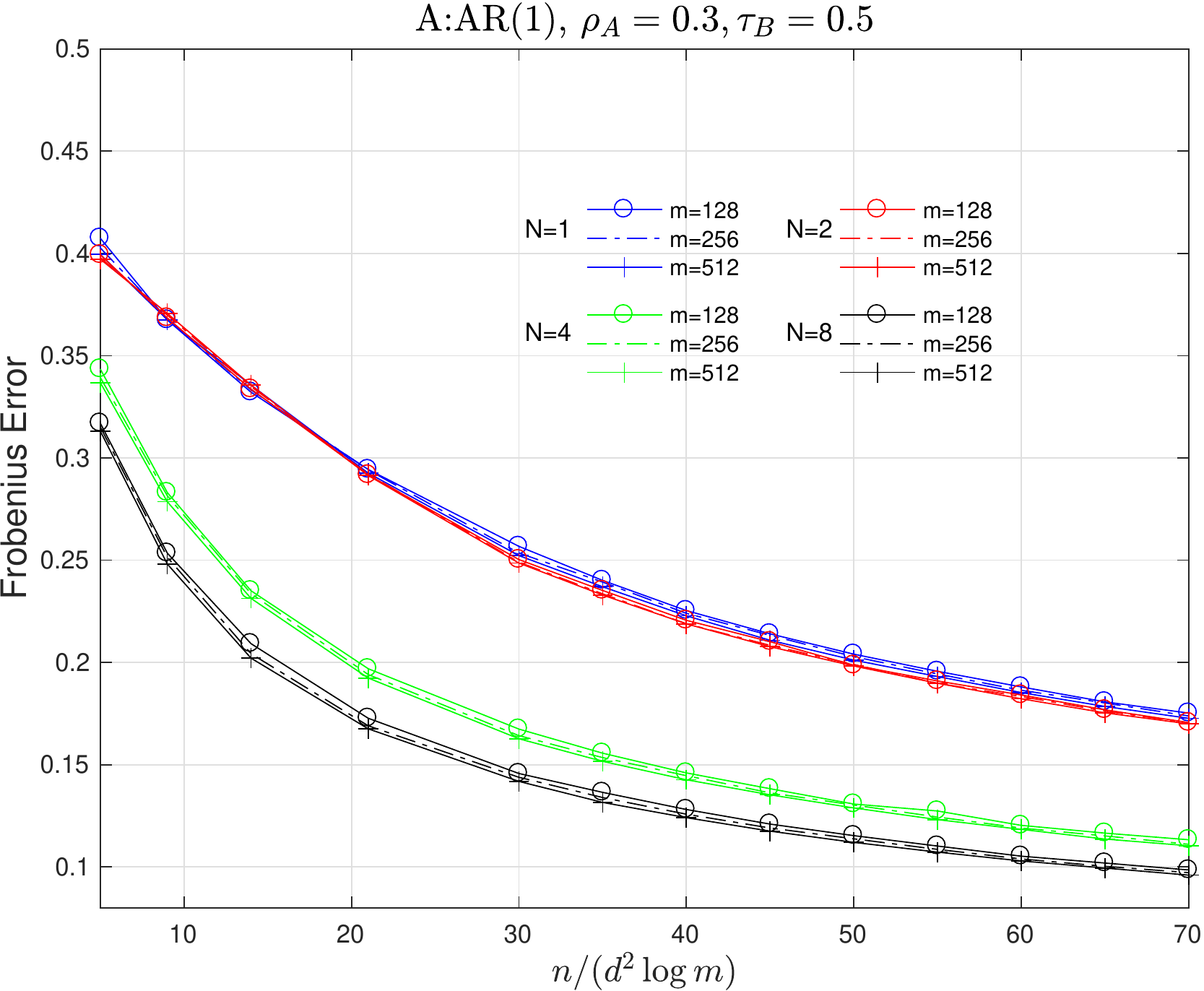} \\
                                    \includegraphics[width=6cm, height=5cm] {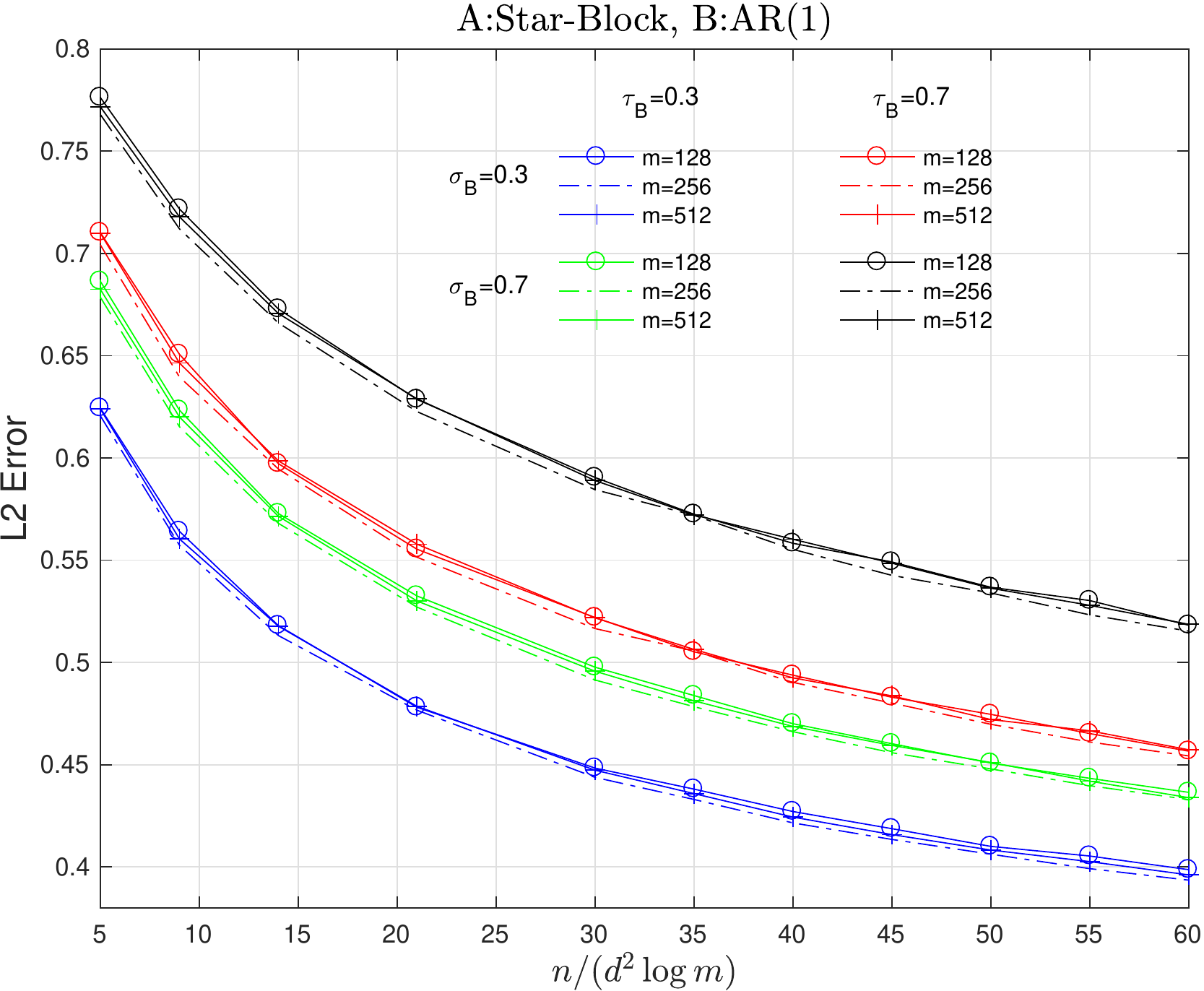}  &   \includegraphics[width=6cm, height=5cm] {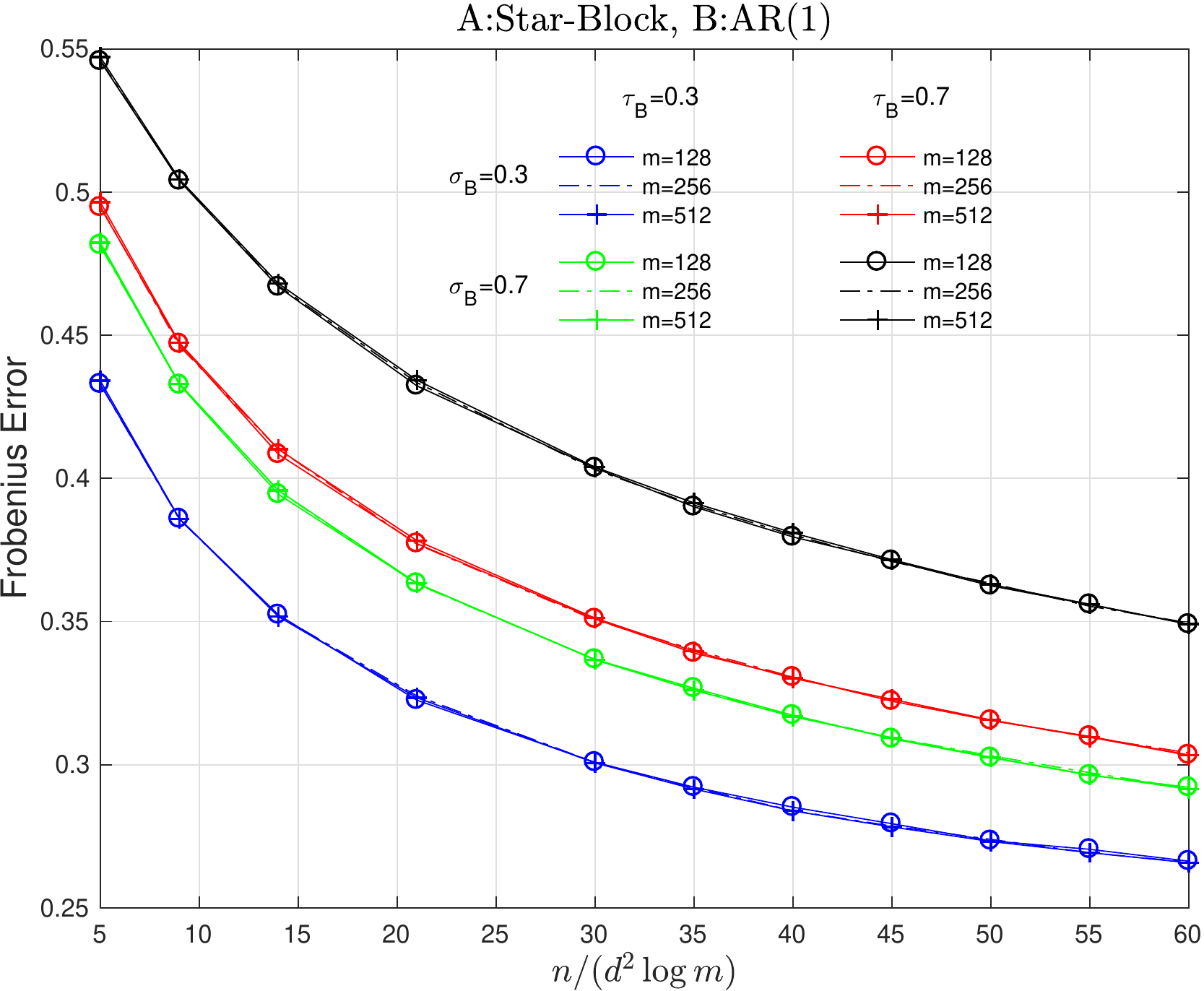} 
                                         \end{tabular}
  \caption{The top two plots display the relative L2 (left) and
    Frobenius (right) errors for the estimates when $A$ follows AR(1) and $B$ follows random model.
The bottom two plots display the relative L2 (left) and Frobenius
(right) errors when $A$ follows Star-Block and $B$ follows AR(1).
}
  \label{figee143BB_changingasd}
\end{figure}

Now we investigate the robustness of the graphical structure in estimated $\hat{\Theta}$ to the specification of $\hat{\tau}_A$.
Let $E$ and $\hat{E}$ denote support sets in $\Theta$ and $\hat{\Theta}$, respectively, i.e.,
\[
E=\{(i,j):\ i \neq j, \ \Theta_{ij} \neq 0\}, \quad  \hat{E}=\{(i,j):\ i \neq j, \  \hat{\Theta}_{ij} \neq 0\}.
\]
We consider the precision and recall of the estimated $\hat{E}$:
\[
\mathrm{precision}=\frac{|\hat{E} \cap E|}{|\hat{E}|},\quad \mathrm{recall}=\frac{|\hat{E} \cap E|}{|E|}.
\]

\begin{figure}[H]
\centering
  \begin{tabular}{@{}cc@{}}
         \includegraphics[width=6cm, height=6cm] {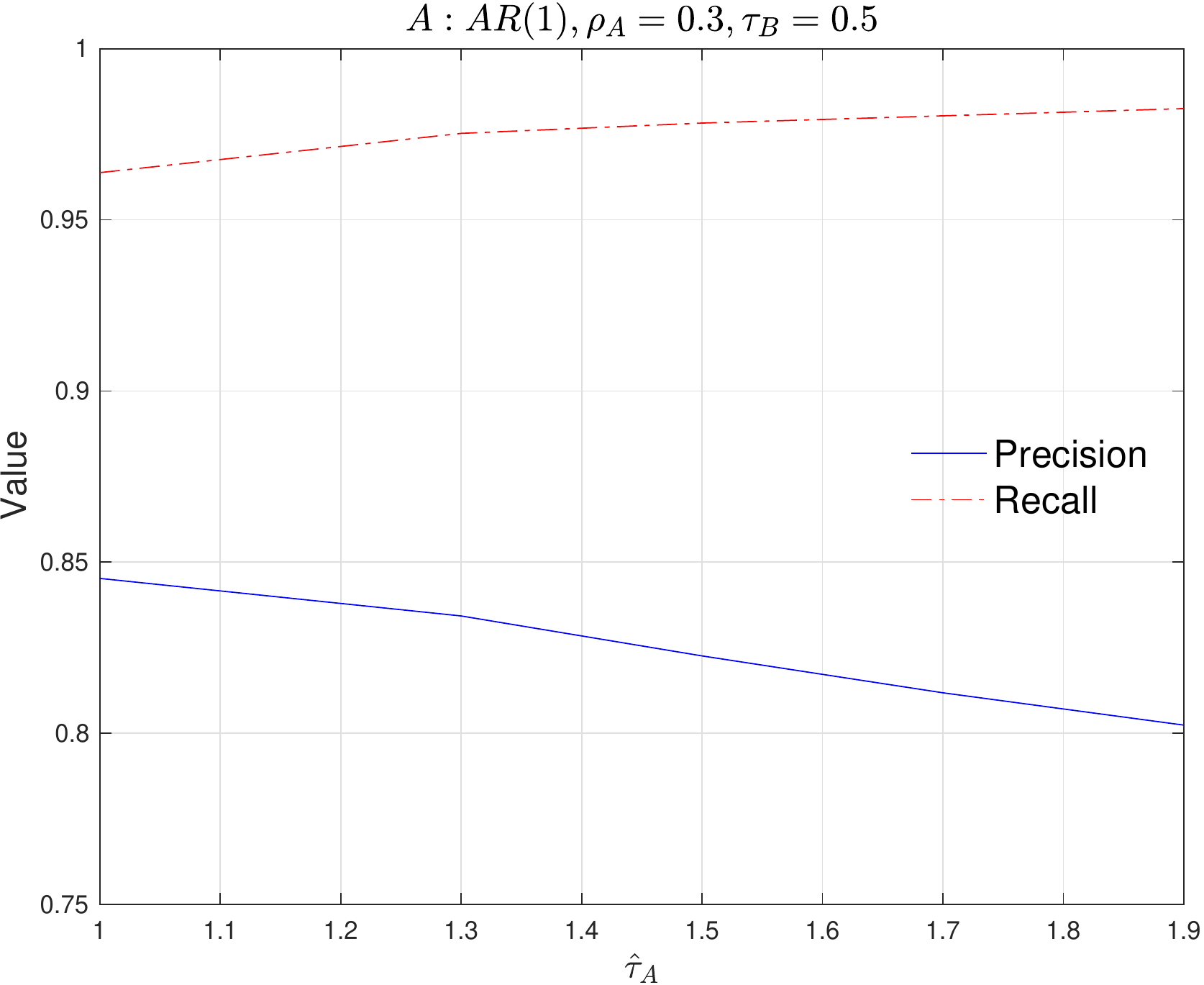}  &  \includegraphics[width=6cm, height=6cm] {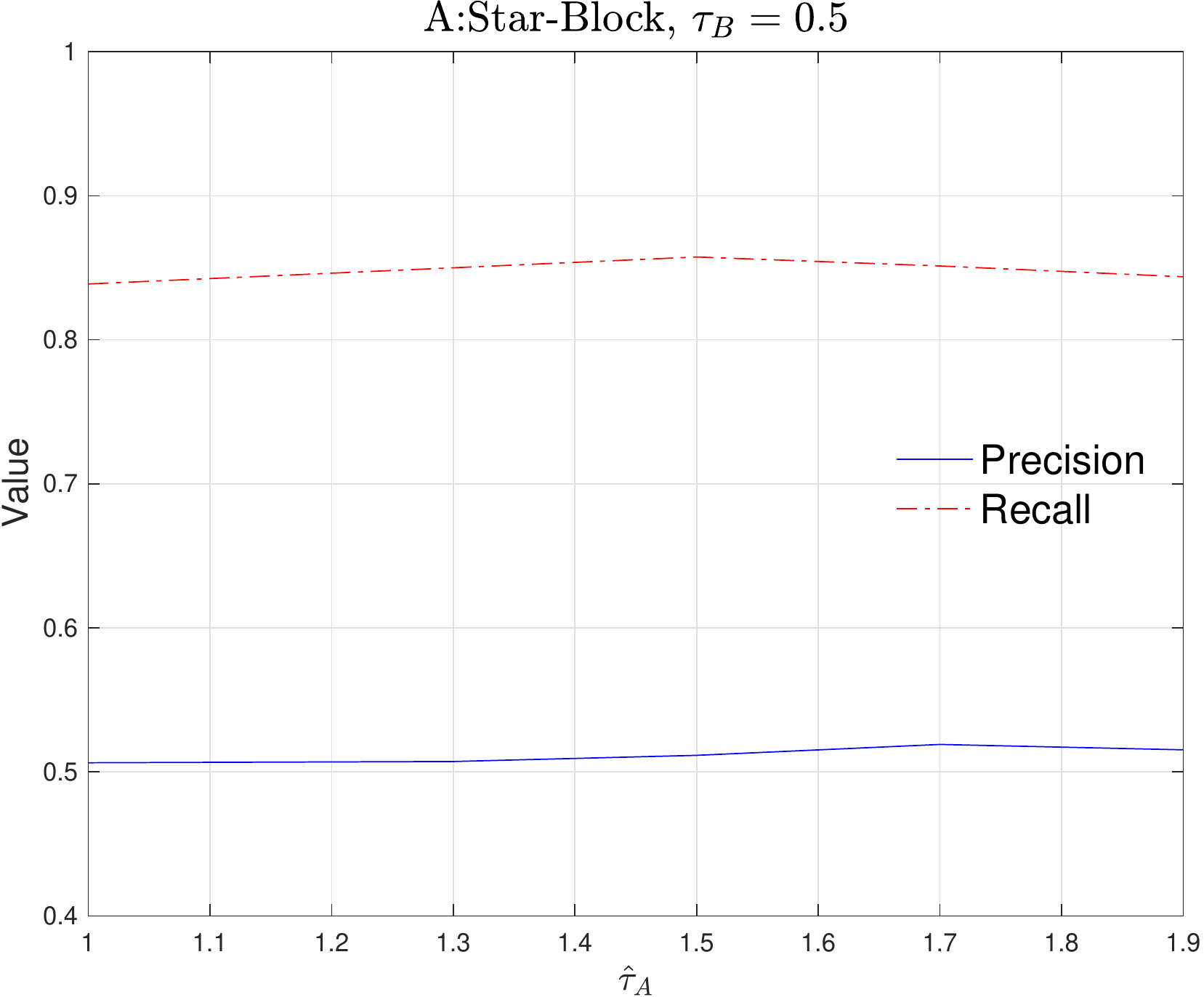}
      \end{tabular}
  \caption{The two plots display the precision and recall for the estimated $\hat{\Theta}$ with different specifications of $\hat{\tau}_A$ 
  when $m=512$, $n=256$, $\tau_A=1.5$, and $\tau_B=0.5$.
  The left plot is when $A$ follows AR(1) and $B$ follows random model. The right plot is when $A$ follows Star-Block and $B$ follows AR(1).
}
  \label{fig:robust}
\end{figure}

Figure \ref{fig:robust} displays the precision and recall against different specifications of $\hat{\tau}_A \in (1, 1.9)$ when $\tau_A=1.5$ and $\tau_B=0.5$.
We repeat experiments 100 times and record the average values of the precision and recall for each $\hat{\tau}_A$ case
using $\lambda^{(i)} = 2 \psi_0 \sqrt{\log m/n}$, $b_1=\sqrt{d} a_{\max}$, and $\eta = 1.5 \lambda_{\max}(A)$ in Algorithm 1.
For the left plot, $A$ and $B$ are generated using AR(1) and random model, and for the right plot, $A$ and $B$ are generated using Star-Block and AR(1) model, respectively.
In the left plot, we observe that $\hat{\tau}_A$ values in $(1,1.9)$ yield the precisions between $0.8$ and $0.85$ and recalls between $0.95$ and $0.98$. 
In the right plot, the precisions are between $0.50$ and $0.52$ and recalls are between $0.83$ and $0.86$. 
These examples show that the estimated graphical structures have similar performances in terms of precision and recall even when misspecified $\hat{\tau}_A$ values are used.

\subsection{Analysis on R-squared}
\label{subsec:analy:R-sq}
In this subsection, we compare the regular (with no measurement error)
and  the EIV regression estimators 
based on R-squared ($R^2$) values when the true model contains errors
in the covariates. Recall the following regular and EIV regression functions:
\begin{eqnarray}
  y &=& X \beta_1 + \epsilon, \quad
  \text{where} \; 
 X\ \rm{and} \ y \ \rm{are \ observable,} \label{eq:reg:reg}\\
y &=& X_0 \beta_2 + \epsilon, \quad  X=X_0+W,   \quad
\text{where} \;  X\ \rm{and} \ y \ \rm{are \ observable}. \label{eq:EIV:reg}
\end{eqnarray}
For the regular regression, we consider two regularization methods,
ridge regression and Lasso, to estimate $\beta_1$. 
Lasso provides a sparse solution under the setting that there are no
measurement errors.
Ridge regression adjusts the gram matrix $X^TX$ by adding some
positive constant to all diagonal components, which plays an opposite
role compared to the corrected form of gram matrix for the proposed EIV regression.
More specifically, Lasso $\hat{\beta}_L$ solves
\[
\min_{\beta} \frac{1}{2n} \|y-X\beta\|_2^2 +  \lambda_L \|\beta\|_1
\]
for some regularization parameter $\lambda_L >0$.
Ridge regression $\hat{\beta}_R$ solves for some regularization parameter $\lambda_R >0$,
\[
\min_{\beta} \frac{1}{n} \|y-X\beta\|_2^2 + \frac{\lambda_R}{n} \|\beta\|_2^2
\]
and has a closed form $\hat{\beta}_R= \left(X^TX + \lambda_R I_p\right)^{-1} X^Ty$.
To estimate $\beta_2$ for the EIV regression, we apply the composite gradient descent algorithm as in \eqref{grad_comp_2}.

We compare correlations between fitted and observed values as a
R-squared metric to 
decide on a correct model between the EIV and the regular regression model:
calculate the explanatory power of $X$ for $y$ using  
$R^2_X = \mathrm{corr}\left(X\hat{\beta}_1, y \right)^2$, where
$\hat{\beta}_1$ is either the Lasso or ridge regression estimator.
Similarly, define the explanatory power of $X_0$ for $y$ using the EIV 
estimator $\hat{\beta}_2$: 
\begin{eqnarray}
R^2_{X_0}  &=& \mathrm{corr}\left(X_0 \hat{\beta}_2, y \right)^2 = \frac{\cov(y, X_0 \hat{\beta}_2)^2}{\var(y)  \var(X_0 \hat{\beta}_2)}. \label{eq:R2}
\end{eqnarray}
We show in Lemma \ref{lem:r-sqaured} of the Supplementary materials that when $m$ is
fixed, the proposed $R^2$ metric asymptotically chooses a correct
model between the EIV and the regular regression model.

In practice, $R^2_{X_0}$ defined in \eqref{eq:R2} should be estimated since $X_0$ is not observed.
Under the setting in Theorem \ref{coro::Theta},
we obtain a lower bound of $R^2_{X_0}$ by the function of $\hat{\beta}$, $\hat{A}_+$, and $\hat{B}_+$, where $\hat{A}_+:=\hat{\Theta}_+^{-1}$ and $\hat{B}_+ :=\hat{\Omega}_+^{-1}$.
Here $\hat{\Theta}_+$ and $\hat{\Omega}_+$ are the positive definite
estimators obtained from Algorithm 3, as summarized in Section \ref{sec:add:proc}. Then, we have with high probability
\begin{equation}
\label{eq:R*}
R^2_{X_0}  \ge R^* :=  \frac{\left\{|\hat{\beta}_2^T X^T y| - \sqrt{2n}\sqrt{1+\frac{m\log n}{4n}}\|\hat{\beta}_2\|_2 \|y\|_2 \|\hat{B}_+\|_2^{1/2}\right\} \vee \ 0}{2\|y\|_2^2 \left(n+\frac{m\log n}{4}\right) \|\hat{A}_+\|_2 \|\hat{\beta}_2\|_2^2 }.
\end{equation}
This lower bound $R^*$ will be used instead of the unknown $R^2_{X_0}$.
See Section \ref{sec:R} of the Supplementary materials  for detailed mathematical derivations of \eqref{eq:R*}. 
We compare $R^*$ and $R^2_X$ to make a correct choice between the EIV and the regular regression model.
We consider the following two examples.

\begin{example}
\label{ex1}
Let $A^*$ and $B^*$ follow AR(1) model with $\rho_A=0.5$ and $\rho_B=0.5$.
Consider the EIV regression
\begin{eqnarray*}
y  &=& X_0 \beta_2 + \epsilon, \quad  X=X_0+W,  \; \; \text{ where}  \quad  X\ \rm{and} \ y \ \rm{are \ observable} \\
\beta_2 &=&0.5 \cdot [1.3,1.3,1.1,-1.1,1,1, 1, 0.9,0.9,1, 0, \cdots, 0]^T, \quad \epsilon_i \sim N(0,1)
\end{eqnarray*}
with $n=500$ and $m=1000$, and $A=\tau_A A^*$ and $B=\tau_B B^*$,
where $(\tau_A, \tau_B) =(1.5,0.5)$.  
\end{example}

\begin{example}
\label{ex2}
Consider the regular regression
\begin{eqnarray*}
y &=& X \beta_1 + \epsilon, \; \text{ where } \; \ X\ \rm{and} \ y \ \rm{are \ observable,}\\
\beta_1 &=& 0.5\cdot [1.3,1.3,1.1,-1.1,1,1, 1, 0.9,0.9,1, 0, \cdots, 0]^T, \quad \epsilon_i \sim N(0,1)
\end{eqnarray*}
with  $n=500$ and $m=1000$, and the rows of $X$ are i.i.d. $N(0,A)$, where the covariance matrix is $A=2A^*$ and $A^*$ follows AR(1) model with $\rho_A=0.5$.
\end{example}

\begin{figure}[H]
\centering
  \begin{tabular}{@{}cc@{}}
 \includegraphics[width=6cm, height=6cm] {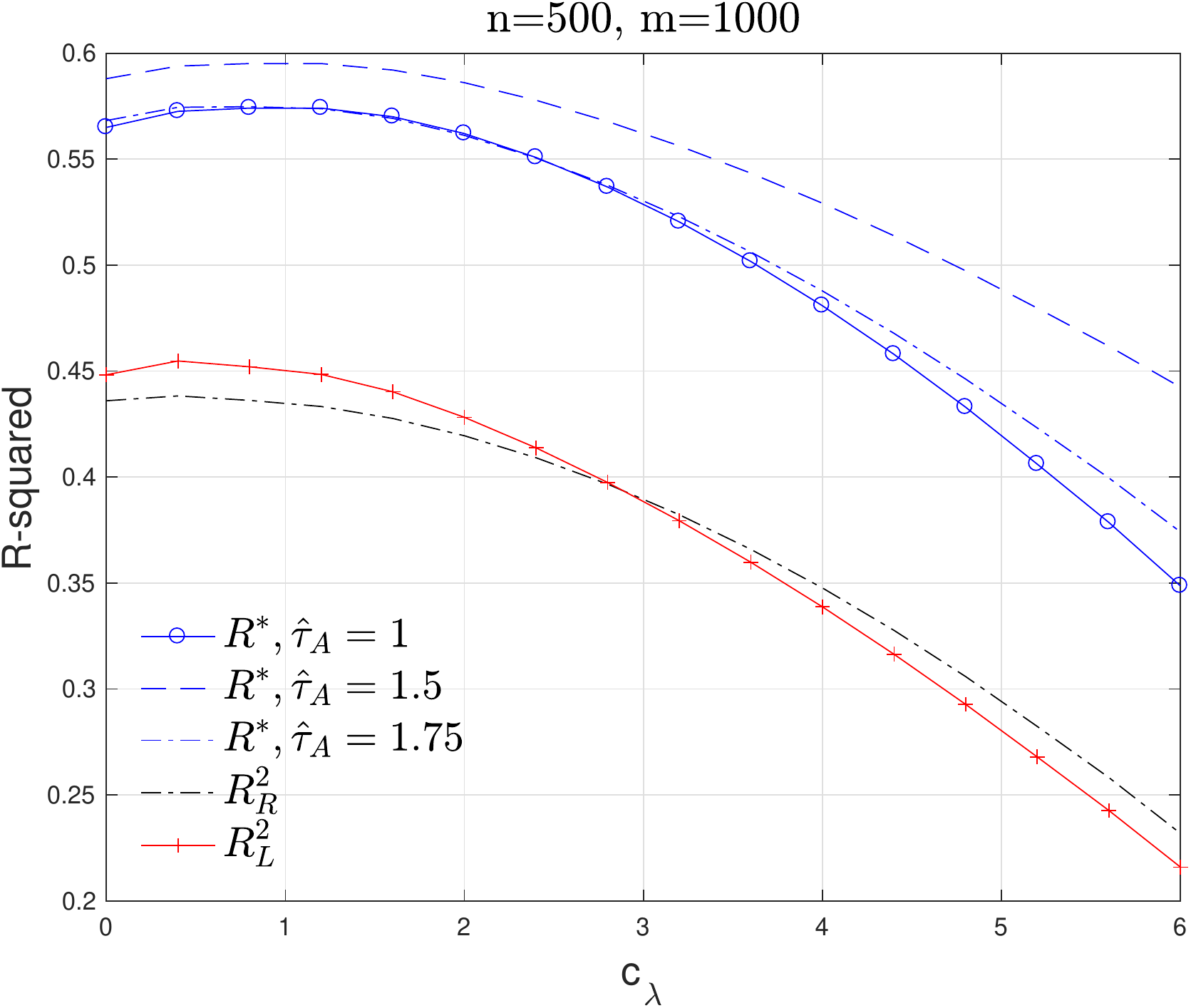}  
&   \includegraphics[width=6cm, height=6cm] {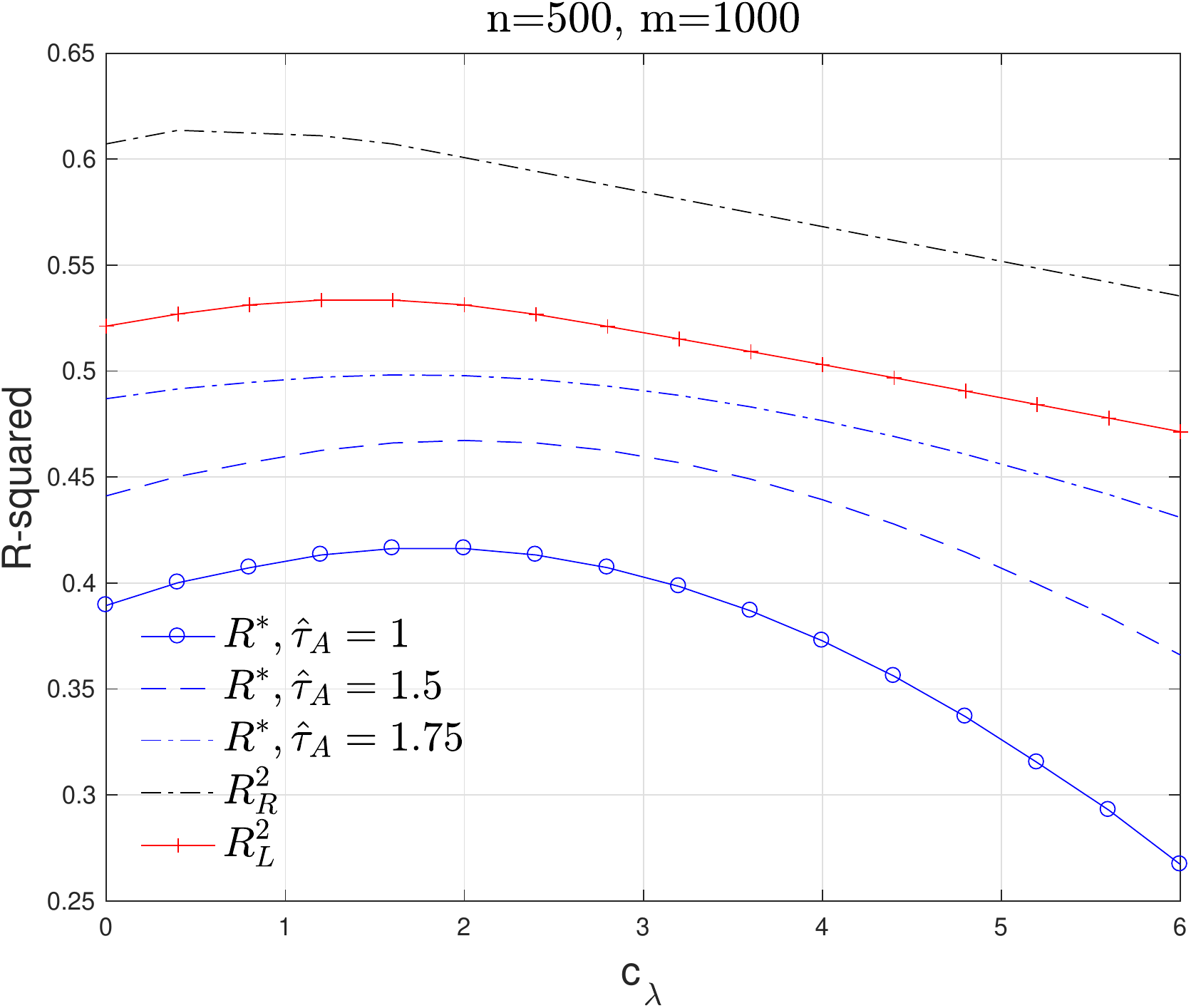}
     \end{tabular}
  \caption{
  The first plot shows the lower bound of the R-squared values of EIV
  regression (i.e., $R^*$), and the calculated R-squared values for
  the Lasso ($R_L^2$) and ridge regression ($R_R^2$) 
when the actual model follows the EIV model with $(\tau_A, \tau_B) = (1.5, 0.5)$ (Example \ref{ex1}).  
 The second plot considers the case in which the actual model follows regular linear model (Example \ref{ex2}).
}
  \label{fig-Rsqure-sim}
\end{figure}

We apply the EIV and regular regressions, that is, Lasso and ridge regression
without measurement errors, 
to models in Examples \ref{ex1}-\ref{ex2}.  
For the Lasso and EIV, the regularization parameters $\{c_{\lambda} \sqrt{{\log m}/{n}} \mid c_{\lambda} \in (0,6)\}$ are used. 
For the ridge regression,  the regularization parameters $\{nc_{\lambda} \mid  c_{\lambda} \in (0,6)\}$ are used.
The first plot of Figure \ref{fig-Rsqure-sim} shows $R^2_X$ and $R^*$ values averaged over $100$ trials when the true model follows EIV (Example \ref{ex1}).
We observe that $R^*$ is generally larger than R-squared values of Lasso ($R_L^2$) and ridge regression ($R_R^2$) even when misspecified $\hat{\tau}_A$ values are used. 
The second plot of Figure \ref{fig-Rsqure-sim} shows $R^2$ values when the true model does not involve errors-in-variables (Example \ref{ex2}). 
We see that $R^2_L$ and $R^2_R$ are larger than $R^*$ obtained by the EIV regression.

These findings show that comparison based on $R^2$ values of the EIV
regression versus the Lasso or Ridge regression can help distinguish
one regression model from the other    
and the EIV regression model is robust to the specification of $\tau_A$.


\section{Analysis of hawkmoth neural encoding data}
\label{sec:realdata}

In the experiments conducted by \cite{Sponberg:2012} and
~\cite{Sponberg:2015}, the neural firing time was measured using electrical probes, and the torque
was measured via an optical torque meter. 
For each of the 7 hawkmoths indexed $i=1, \ldots, 7$, data for $n_i$ wing
strokes were obtained.  For a single wing stroke, the measured torque $X^{(i)}
\in \R^{n_i \times 500}$ consists of 500 sampled torque values  
spanning the duration of the wing stroke.  Let $\Delta^{(i)} \in
\R^{n_i}$ be the $\Delta$ values for moth $i$, obtained at the 
 beginning of each wing stroke.  Two pre-processing approaches give rise              
to separate ``spike-triggered'' and ``phase-triggered'' data sets.   
We omit the index $i$ when no confusion arises.

\subsection{Generating model}
The data $X^{(i)}$ for one moth can be viewed as a matrix-variate
observation, with one axis (rows) indexing wing strokes, and
the other axis (columns) corresponding to time points within a wing stroke.
The Kronecker sum model decomposes this observation into two latent terms. 
One of these terms has
independent rows (wing strokes) and dependent columns (time points),
and the other term has dependent rows and independent columns. 
Our primary question here is whether the information in $X^{(i)}$ that is
related to $\Delta^{(i)}$ is concentrated in only one of these two
latent terms.  This would suggest that the other term may contain
motion features that are not strongly neurally encoded, for example
measurement noise.

For ease of visualization, in the graphical analysis the rows of the
torque data $X^{(i)}$ are sorted in descending order by the corresponding $\Delta$.
Figure \ref{fig3_spat_corr} in the Supplementary materials shows heat maps of the sample
correlation matrix of $500$ temporal points and the sample correlation
matrix of wing strokes, showing dependencies in both time points and among the wing strokes.
We will use the Kronecker sum model to explain these two-way dependencies in the torque data,
and EIV regression to explore the joint relationship between neural firing and torque.  .  

We decompose the observed torque data $X$ as
\begin{equation}
\label{eq:data:gen0}
X_{ij} = R_i + C_j + \phi_{ij} \quad \mathrm{for}\quad i=1,\cdots, n,\ j=1,\cdots, 500,
\end{equation}
where $R_i$ and $C_j$ represent the $i$th row mean and $j$th column
mean of $X$, respectively (when $\sum_{ij} X_{ij} = 0$).
The matrix form of \eqref{eq:data:gen0} is
\begin{equation}
\label{eq:data:gen}
X=  R 1^T + 1 C^T + \Phi,
\end{equation}
where  $1$ is the vector of ones, $\Phi_{n \times 500} = (\phi_{ij})$,
$R =(R_1, \cdots, R_n)^T$ and $C =(C_1, \cdots, C_{500})^T$.
In Table \ref{table:prop},     
we show the composition of three components in \eqref{eq:data:gen} through
$\|R 1^T\|_F^2/\|X\|_F^2$, $\|1 C^T\|_F^2/\|X\|_F^2$ and $\|\Phi\|_F^2/ \|X\|_F^2$.
We see that column effects are negligible and at least $70 \%$ of the
variation in the data can be explained by $\Phi$ for all moths.
As noted above, the row means are already known to be a major 
determinant of the spike times $\Delta$.  
Therefore, all analysis is based on the row-centered data
$\tilde{X}={X} - R1^T$, to focus on the possible existence of additional neurally-encoded features in the wing strokes.

\begin{table}[H]
\caption{Proportion of each part in data.}
\label{table:prop}
\centering
\begin{small}
\fbox{
\begin{tabular}{ c c  c  c  c  c c c  } 
Proportion & J &K & L & M & N & P & Q  \\ [0.5ex]  \hline 
$\|R 1^T\|_F^2/\|X\|_F^2$ & 0.08 &    0.09&    0.27&  0.11    & 0.22   &   0.06&    0.18
 \\ [0.5ex] 
$\|1 C^T\|_F^2/\|X\|_F^2$&     0.02 &    0.01 &     0.01 &      0.02 &    0.01  &  0.02   & 0.01
 \\ [0.5ex] 
$\|\theta\|_F^2/ \|X\|_F^2$& 0.71 &    0.90 &   0.71  &     0.87&      0.76    &   0.91  &  0.80
 \\ [0.5ex] 
\end{tabular}
}
\end{small}
\end{table}

\subsection{Interpretation of the graphical structure}
\label{sec:moth:inter}
In this subsection, we discuss the graphical structures of the fitted
covariance matrices $A$ and $B$ 
from the Kronecker sum model.
Without loss of generality we scale $\tilde{X}$ so that its overall
variance is 2.

\begin{figure}[H]
\centering
  \begin{tabular}{@{}cc@{}}
      \includegraphics[width=.5\textwidth]{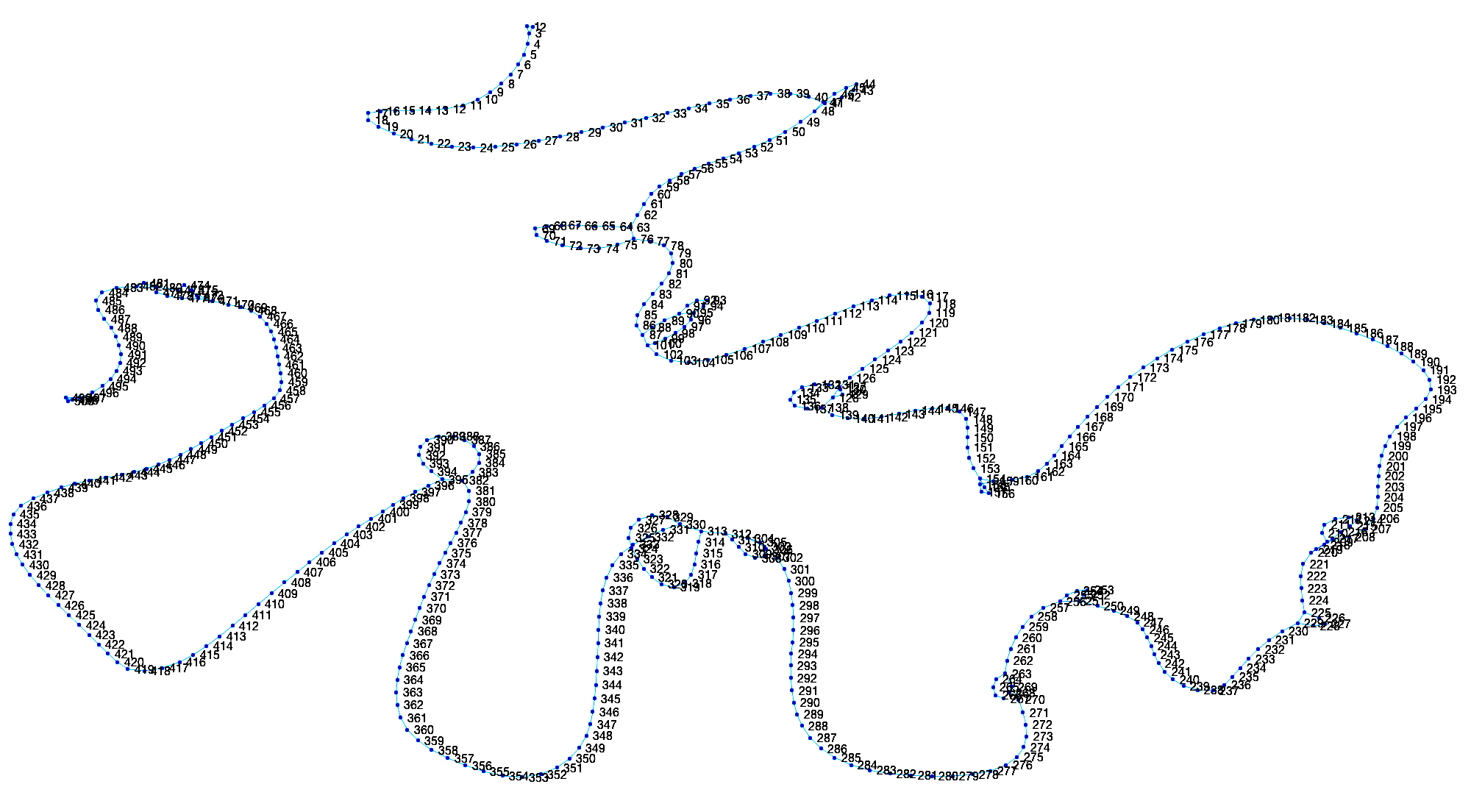}  &   \includegraphics[width=.52\textwidth]{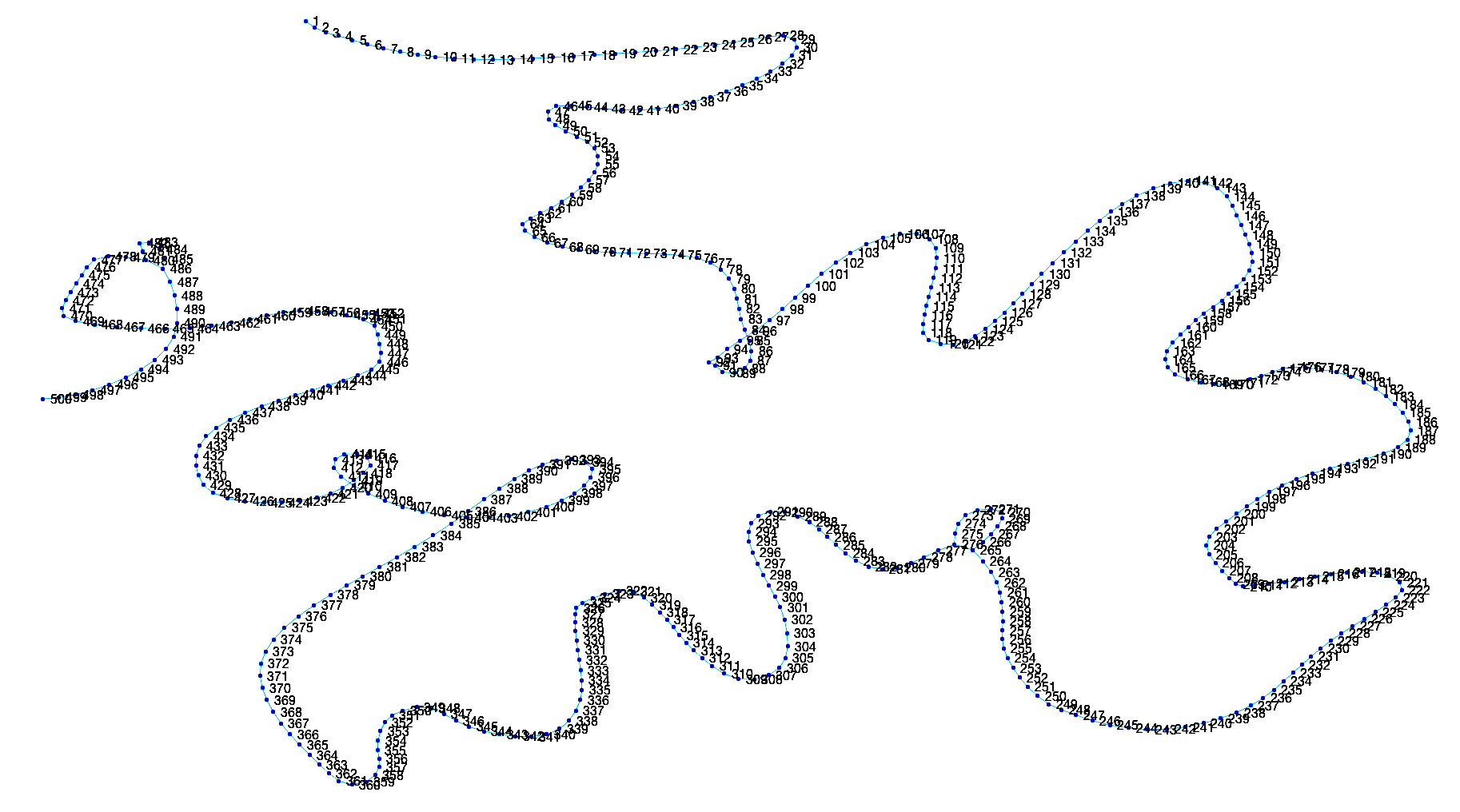}
      \end{tabular}
  \caption{The graphical structure of $\hat{\Theta}$ when $\hat{\tau}_A=1.5$ and $\lambda^{(i)}_A=0.5\sqrt{\frac{\log 500}{n}} \asymp 0.05$ for moth J (left) and L (right). 
  Edges $(i,j)$ with $|\hat{\Theta}_{ij}| > 0.05$ are connected in the graph.
  } 
\label{fig3cov2}
\end{figure}

Figure \ref{fig3cov2} displays the graphical structures of the
estimator $\hat{\Theta}$ for the Kronecker sum model when
$\hat{\tau}_A=1.5$. It is apparent that both graphs have a chain
structure, with some loops, which is expected because $\hat{\Theta}$ 
encodes the conditional dependency relationships between and across
time points, and the observed torque data (across time points) are therefore strongly autocorrelated.

\begin{figure}[H]
\centering
  \begin{tabular}{@{}c@{}}
       \includegraphics[width=12cm, height=3.5cm]{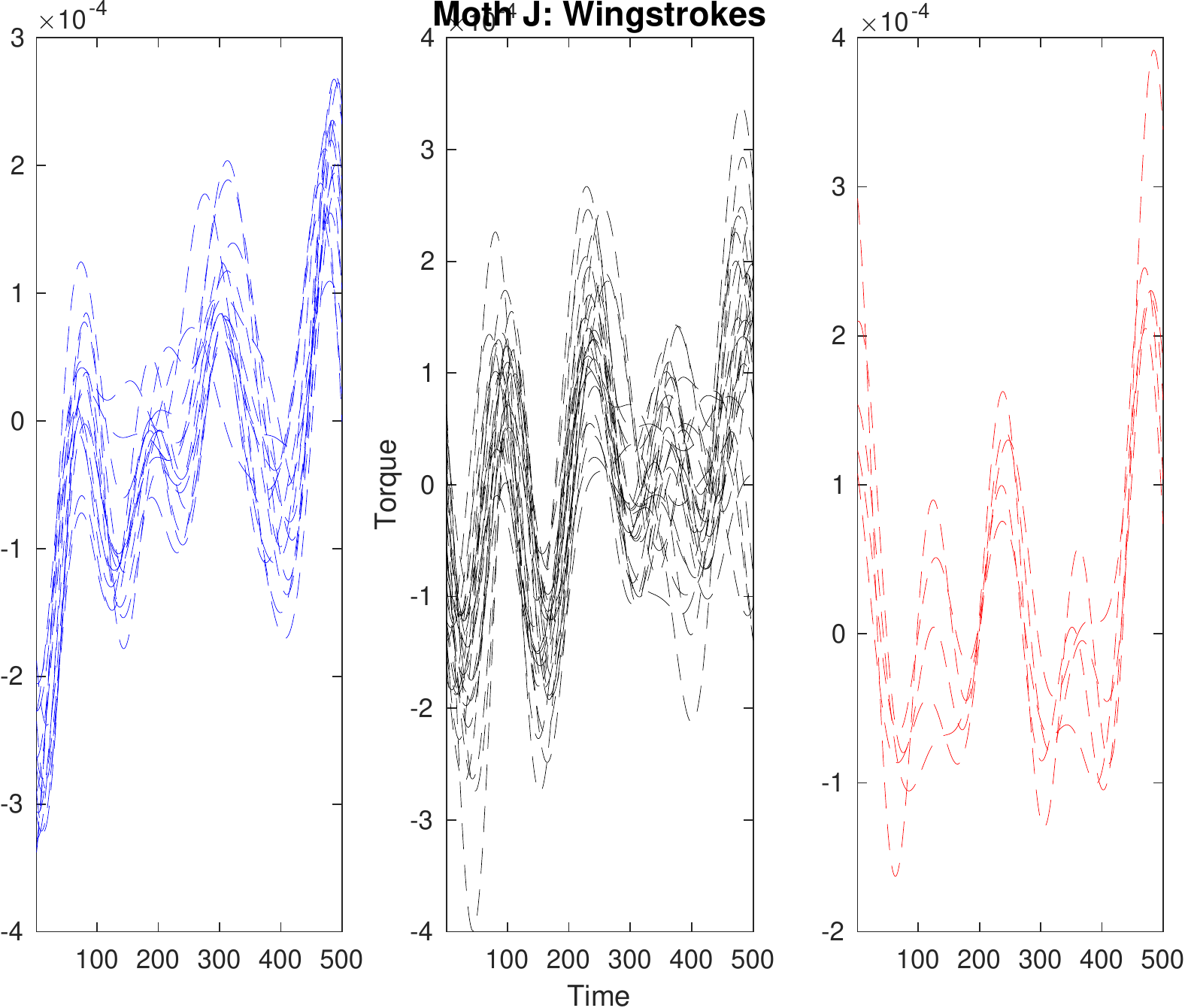} \\   
              \includegraphics[width=12cm, height=3.5cm]{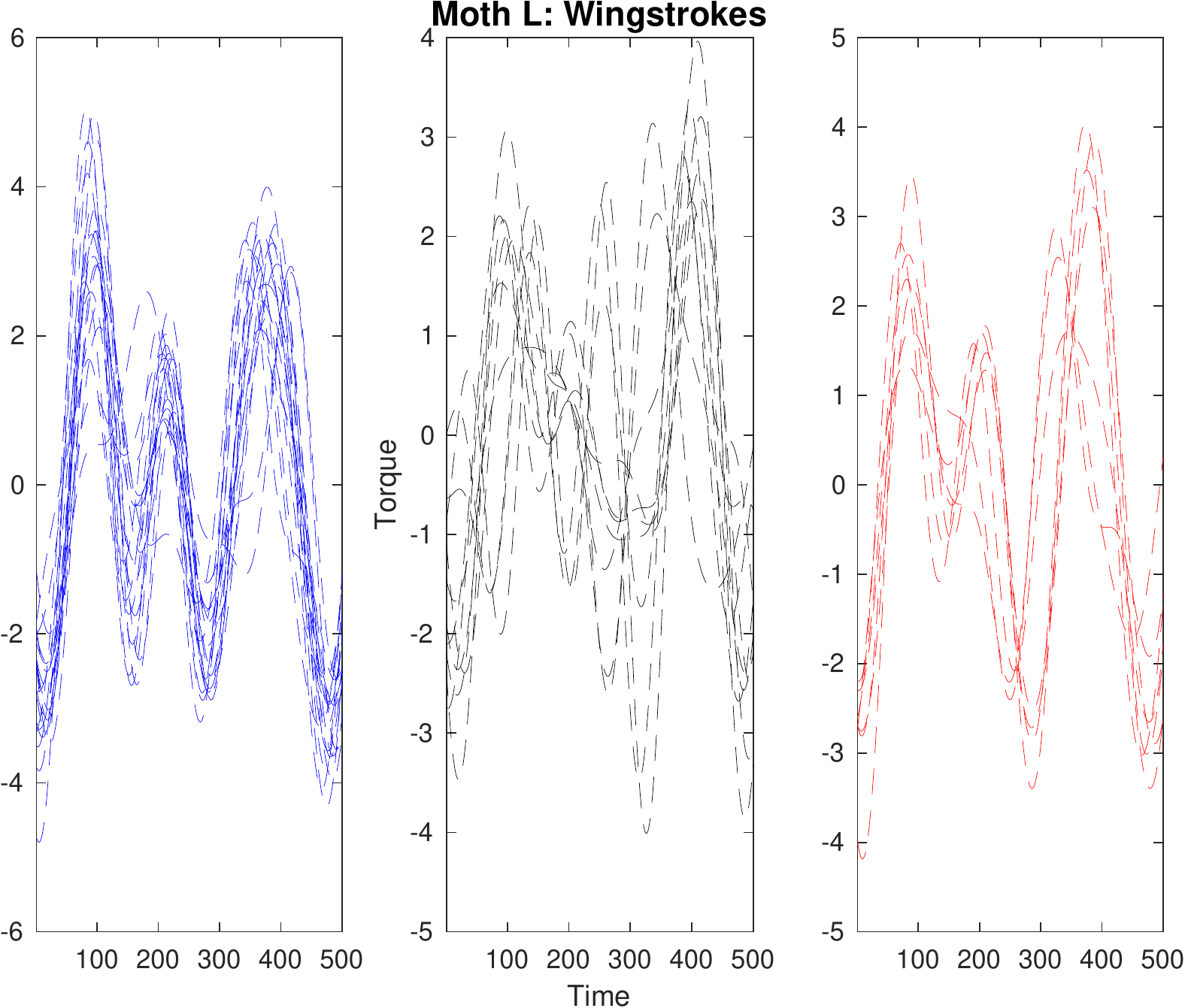}   
                      \end{tabular}
            \caption{Torque ensembles for selected three clusters for moth J (top) and L (bottom).}   
\label{fig3cov_comp_BB}
\end{figure}

We select the three largest clusters of wing strokes from the
estimated structure of $\Omega$ when $\hat{\tau}_A = 1.5$. 
See Figures \ref{fig3covb1}-\ref{fig3covb2} of the Supplementary materials for 
graphical structure of $\hat{\Omega}$.
Figure~\ref{fig3cov_comp_BB} displays torque ensembles of the three clusters for moth J and L, respectively. 
It is seen that the wing strokes from the same cluster
are highly correlated or highly anti-correlated, which implies that
the structural information encoded in 
$\hat{\Omega}$ is not arbitrary and indeed captures strong dependencies among wing strokes.

\subsection{Regression analysis of torque and DLM}
\label{sec:moth:reg}
In this subsection,  we analyze the
relationship between the neural firing time and torque ensemble data 
utilizing various regression methods, including EIV regression. 
In the EIV regression, we model $\cov(\mvec{\tilde{X}})$ using the
Kronecker sum structure, as elaborated in the previous section.

We first perform simple linear regression of $\Delta = t_L-t_R$ on the
mean torque ($\mu$) to regress out the mean torque effect from $\Delta$.
Denote by $\tilde{\Delta}$ the residual vector:  
\begin{equation}
\label{simp_reg}
\tilde{\Delta} := \Delta- \alpha- \beta \mu, \quad \alpha=\bar{\Delta}-\beta\bar{\mu}, \quad \beta=\frac{\cov(\mu,\Delta)}{\var(\mu)},
\end{equation}
where $\bar{\Delta}$ and $\bar{\mu}$ are the means of the components in  $\Delta$ and $\mu$, respectively. 
We will treat $\tilde{\Delta}$ as a response vector and fit a separate
linear model to each hawkmoth, to explore the possibility that the average torque $\mu$ does not capture all the information 
in $\Delta$.

We then fit both regular regression  \eqref{eq:reg11}  and EIV
regression model \eqref{eq:nonsm}, as previously studied in 
\cite{RZ15}, to relate the residual neural spike time differences \eqref{simp_reg} and torque values:
\begin{eqnarray}
 \label{eq:reg11} 
&&
\tilde{\Delta} =\tilde{X} \beta_1+ \epsilon, 
\quad \text{where} \; \tilde{X}\ \rm{and} \ \tilde{\Delta}  \ \rm{are
  \ observable;} \\
\label{eq:nonsm}
&&
\tilde{\Delta} = X_0 \beta_2 + \epsilon, \quad  \tilde{X}=X_0+W,
\quad \text{ where } \; \tilde{X}\ \rm{and} \; \tilde{\Delta} \; \rm{are \ observable}. 
\end{eqnarray}
By using the R-squared analysis from the above EIV regression, we will
argue that variation in $\tilde{\Delta}$ is
more  strongly related to the latent $X_0$ in model \eqref{eq:nonsm} compared to the observed $\tilde{X}$.

Based on the fact that $\beta_1, \beta_2 \in \R^{500}$ are
coefficient vectors whose domain is a temporal space, we aim to obtain
smoothed estimators for $\beta_1$ and $\beta_2$ to explain the
variation in $\tilde{\Delta}$.   
Toward this goal, we use B-spline basis functions to approximate these coefficients.  
Let $\pi_1(t),\cdots, \pi_{k_n+l}(t)$ be the normalized B-spline basis
functions of order $l$ with $k_n$ quasi-uniform knots. We set $l=3$
and $k_n=17$, 
so we use $K_n := k_n+l=20$ basis functions.
Let $\tilde{\Pi}_{t,k} = \pi_k(t)$ for $t=1,2\cdots, 500$ and $k=1,2,\cdots, K_n$.
We use an orthonormal basis for the column space of $\tilde{\Pi}$ 
obtained through QR decomposition, denoted by $\Pi \in
\R^{500 \times K_n }$, that is, $\Pi^T \Pi = I_{K_n }$.
Upon estimating $\zeta_k$, we obtain estimators of the $\beta_k$
via the relationship $\beta_k = \Pi \zeta_k$, where $\zeta_k \in \R^{K_n  \times 1}$.
Note that this relationship preserves at least $95\%$ of the variance
of the data $\tilde{X}$ in the sense that $\|\tilde{X}\Pi\|_F^2 \ge 0.95 \|\tilde{X}\|_F^2$ for all moths.

Models \eqref{eq:reg11} and \eqref{eq:nonsm} can now be rewritten as
\begin{eqnarray}
\label{eq:reg22}
&&\tilde{\Delta} = \tilde{X} \Pi \zeta_1+ \epsilon, \quad 
\text{  where} \; \; 
\tilde{X}\Pi\ \rm{and} \ \tilde{\Delta}  \ \rm{are \ observable;} \\
\label{eq:nonsm2}
&&\tilde{\Delta} = X_0 \Pi \zeta_2 + \epsilon, \quad  \tilde{X} \Pi
=X_0 \Pi +W \Pi ,  
\end{eqnarray}
where $\tilde{X}\Pi\ \rm{and} \ \tilde{\Delta}$ are 
  observable.

Note that the model \eqref{eq:nonsm2} still follows the EIV model as in~\cite{RZ15},
since the covariance matrix of $\mvec{\tilde{X}\Pi}$ is 
\bens
\cov(\mvec{\tilde{X}\Pi}) =\Gamma_A \oplus B, \quad \text{  where} \; \;  \Gamma_A := \Pi^T A \Pi.
\eens
See Subsection \ref{subsect:reg:a} of the Supplementary materials for details.

We obtain $\hat{\zeta}_1$, an estimator of $\zeta_1$ using ridge
regression or the Lasso by treating $\tilde{X}\Pi$ as a design matrix. 
We obtain the estimator $\hat{\zeta}_2$ of $\zeta_2$ by solving EIV
regression \eqref{eq::origin} with  
$  \hat{\tau}_B=   \left(\frac{1}{500n}\|\tilde{X}\|_F^2 -  \tau_A\right)_+ $
and 
\bens
\hat\Gamma  =  
\inv{n} (\tilde{X}\Pi)^T \tilde{X}\Pi -  \hat{\tau}_B I \;\; 
 \text{ and } \; \;  \hat\gamma \; = \; \inv{n} (\tilde{X}\Pi)^T \tilde{\Delta},
\eens
where $\hat\Gamma$ is an unbiased estimator of $\Gamma_A$.
For the objective function~\eqref{eq::origin},  
we choose $b_1 = 2\|\hat{\zeta}_1\|_1$ and step size parameter $\eta =
0.5 \|\hat{\Gamma}\|_2$ for the composite gradient descent algorithm, where $\hat{\zeta}_1$ is the Lasso estimator.
Then we obtain estimators $\hat{\beta}_1 = \Pi \hat{\zeta}_1$ and $\hat{\beta}_2 = \Pi \hat{\zeta}_2$.  

We calculate the explanatory power of $\tilde{X}$ for $\tilde{\Delta}$ using the proportion of explained variance as follows:
$R^2_{\tilde{X}} = \corr\left(\tilde{X}\hat{\beta}_1, \tilde{\Delta} \right)^2$. 
We record the lower bound of the explanatory power of $X_0$ for $\tilde{\Delta}$ using the estimator $\hat{\beta}_2$: 
\begin{eqnarray}
R^2_{X_0}  &=& \corr\left(X_0 \hat{\beta}_2,\tilde{\Delta} \right)^2  \ge R^*,
\label{eq:R2-2}
\end{eqnarray}
where $R^*$ is defined in \eqref{eq:R*}.

Table \ref{table:mothpower} shows the values of $R_{\tilde{X}}^2$ obtained for ridge regression and the Lasso, denoted respectively as
$R^2_{\tilde{X}}(Ridge)$ and $R^2_{\tilde{X}}(Lasso)$.
We record the maximum value of $R^2_{\tilde{X}}(Lasso)$ and $R^*$ among the values obtained from the solution paths with the regularization parameters $c_{\lambda} \sqrt{\log K_n/n}$, 
where $c_{\lambda} \in (0, 7)$.  To obtain $R^*$, we consider the EIV with $\hat{\tau}_A \in \{0.1,0.2, \cdots, 1.9\}$.
Similarly, we record the maximum value of $R^2_{\tilde{X}}(Ridge)$ among the values obtained from the ridge regression solution path with the regularization parameters $n c_{\lambda}$, where $c_{\lambda} \in (0, 7)$.
While the Kronecker sum model is not sufficient to identify $\tau_A$ from the random matrix $X$ alone, when considered in the context of a regression analysis we can identify $\tau_A$ by maximizing the $R^2$ values of the regression between the responses and the latent component $X_0$.
We record the optimal $\hat{\tau}_A$ that provides the highest $R^*$ value.

\begin{table}[H]
\caption{\label{table:mothpower}Explanatory power ($R^2$)} 
\centering
\begin{small}
\fbox{
\begin{tabular}{ c l c l c l  c | c | c } 
\multirow{2}{*}{Spike} & \multirow{2}{*}{$R^2_{\tilde{X}}(Ridge)$} &  \multirow{2}{*}{$R^2_{\tilde{X}}(Lasso)$} &  \multicolumn{2}{c}{EIV}  \\ [0.5ex] 
   &          &        &  $\hat{\tau}_A$ & $R^2_{X_0}(R^*)$  \\ [0.5ex]  \hline
J  & 0.12 & 0.12 & 1.4 & 0.20  \\ [0.5ex] 
K & 0.12  & 0.11 & 1.4 & 0.22 & \\ [0.5ex] 
L & 0.15 & 0.18 & 1.4 & 0.26  \\ [0.5ex] 
M & 0.02  & 0.02 & 1.5 & 0.17  \\ [0.5ex] 
N & 0.15  & 0.21 & 1.3 & 0.28  \\ [0.5ex] 
P & 0.19 & 0.19 & 1.5 & 0.29\\ [0.5ex] 
Q & 0.19 & 0.21 & 1.4 & 0.35 \\ [0.5ex] 
\end{tabular}
}
\end{small}
\end{table}

In all cases,  $R^*$, the lower bound of $R^2_{X_0}$ obtained from
errors-in-variables regression, is greater 
than $R^2_{\tilde{X}}$ obtained from the Lasso and Ridge regression.  
This suggests that the torque signal component $X_0$ expressed in the Kronecker sum model may be viewed as a denoised torque signal, 
and that this denoised torque signal is more strongly correlated to the motor signal  
than the observed torque signal.

To ensure that this increase in $R^2$ values as summarized in Table
\ref{table:mothpower}
 is not due to chance or overfitting, we randomly permute the
 components of $\Delta$ 
and refit the regression models, while keeping $\tilde{X}$ intact. 
We summarize the $R^*$ values in Table \ref{table:mothpower3}, and 
observe that all the obtained $R^*$ values 
are between zero and $0.05$.
Combined with the simulation results presented in Figure \ref{fig-Rsqure-sim} in Subsection \ref{subsec:analy:R-sq}, 
this implies that the increase in $R^2$ values reflects 
a true relationship between $\tilde{\Delta}$ and the latent component
of $\tilde{X}$, and is not due to chance or overfitting.

\begin{table}[H]
\caption{\label{table:mothpower3}Explanatory power (R-squared)} 
\centering
\begin{small}
\fbox{
\begin{tabular}{ c l c l c l  c | c } 
{Spike} &{$R^2_{\tilde{X}}(Ridge)$} &  {$R^2_{\tilde{X}}(Lasso)$} & {$R^2_{X_0} (R^*)$}  \\  [0.5ex] \hline 
J  & 0.01 & 0.01 & 0.01  \\ [0.5ex] 
K & 0.02  & 0.03 & 0.02  \\ [0.5ex] 
L & 0.02 & 0.01  & 0.02  \\ [0.5ex] 
M & 0.01  & 0.02 & 0.01  \\ [0.5ex] 
N & 0.03  & 0.03  & 0.04  \\ [0.5ex] 
P & 0.01 & 0.01 & 0.00 \\ [0.5ex] 
Q & 0.00 & 0.01  & 0.00 \\ [0.5ex] 
\end{tabular}
}
\end{small}
\end{table}

\section{Conclusion}
\label{sec::conclude}
Data with complex dependencies arise in many settings, for example 
when a large number of replicated experiments are run on
subjects in a research study, a practice that is common in psychology,
linguistics, neuroscience, and other areas.  
The Kronecker sum provides a non-separable alternative to widely-used
separable covariance models such as the Kronecker product, and may fit
data better in some circumstances.

We illustrate these new methods and theory using data from a neural
encoding study of hawkmoth flight behavior.  
We provide two novel insights about these data.  
First, we illustrate that although the mean torque (per wing stroke)
captures a substantial fraction of the neurally encoded flight turning
behavior, additional components of the wing stroke trajectory also
appear to be neurally encoded.  
Second, we show that the observed flight torque trajectories can be
decomposed into two latent components, with one component capturing the majority of the neurally encoded
behavior.  The latter observation provides a promising basis for characterizing neural encoding using latent structures.

\begin{table}[H]
\caption{\label{table_def}The Symbols}%
\centering
\fbox{
\begin{tabular}{l  c } 
Parameters   & Definitions \\  [0.5ex] 
\hline
 $X_{i}$ &  The $i$th column of a matrix $X$ \\
 $X_{\minus i}$ &  The sub-matrix of $X$ without the $i$th column \\
 $X_{\minus i, \minus j}$ &  The sub-matrix of $X$ without the $i$th row and $j$th column \\
$\tau_{\Sigma}$ &   $\rm{tr}(\Sigma)/p$ for a square matrix $\Sigma \in \R^{p \times p}$\\
$\Theta$ & $A^{-1}$\\
$\Omega$ & $B^{-1}$\\
$a_{\max}$ & $\max_i A_{ii}$ \\
$b_{\max}$ & $\max_i B_{ii}$ \\
$\alpha$ & $\frac{5}{8} \lambda_{\min}(A)$\\
$\eta \ge \frac{11}{8}\lambda_{\max}(A)$ & step size parameter \\
$\vp(s_0)$  &   $\rho_{\max}(s_0, A)+\tau_B$ \\
$\tau_0$ &  $\frac{400C^2 \vp(s_0+1)^2}{\lambda_{\min}(A)}$\\
$D'_0$ &  ${\twonorm{B}}^{1/2} + a_{\max}^{1/2}$\\
$\tilde{D}'_0$  & ${\twonorm{A}}^{1/2} + b_{\max}^{1/2}$\\
$D_{\ora}$  &   $2(\twonorm{A}^{1/2} + \twonorm{B}^{1/2})$\\
$\tau_B^{+/2}$  & $\sqrt{\tau_B} + \frac{D_{\ora}}{\sqrt{m}}$\\
$K$ &   $\sup_{p \ge 1} p^{-1/2}(\mathbb{E}\abs{X}^p)^{1/p}$ for $X \sim N(0,1)$  \\
$C_\psi$ & $ K^2 C_0 D_0'  \left(\tau_B^{+/2} \kappa(A)+ \sqrt{a_{\max}} \right)$ \\
$\psi_0$ &   $0.1 D_0' \left(\tau_B^{+/2} a_{\max}+ \sqrt{a_{\max}}\right)$\\
$\rho_{\max}(d, A)$ &  $\max_{t \not= 0; d-\text{sparse}} \; \;\shtwonorm{A
  t}^2/\twonorm{t}^2$\\
$\vp(s_0)$ & $\rho_{\max}(s_0, A)+\tau_B$\\
$s_0$  &    The largest integer satisfying $\sqrt{s_0} \vp(s_0) \le \frac{\lambda _{\min}(A)}{32 C}\sqrt{\frac{n }{\log m}}$\\
$\psi_1^2$ &   $\frac{16c}{\lambda_{\min}(A)}\left(\frac{1}{1-\kappa}+\frac{\lambda_{\min}(A)}{2s_0}  \right)$\\
\end{tabular}
}
\end{table}

\section*{\textbf{\large Supplementary materials}}

This supplementary material is organized as follows.
Section \ref{sec:add:proc} introduces an additional estimation procedure of $\Theta$ as described in Section \ref{sec:graphical} of the main paper. Section \ref{Sec:gau} includes details of theoretical properties of the estimator, and main lemmas are presented in Section \ref{proof:main}.
Proofs of Theorem 1 is included in Section \ref{sec:thm1}.
Section \ref{sec:cons:omega} shows the consistency of the estimator $\hat{\Omega}$.
Section \ref{sec:pre} contains the proofs of lemmas.
Section \ref{sec:R} shows details of R-squared analysis.
Section \ref{sec:add:real} includes additional real data analysis.
Section \ref{sec:add:fig} includes additional tables and figures.

Theorem \ref{thm::nodewise} is directly from \citet{RZ15} by adopting the nodewise regression case with Gaussian random ensembles. Theorems \ref{coro::Theta} and \ref{coro::Omega} show estimation error bound for $\hat{\Theta}$ and $\hat{\Omega}$.
We emphasize that Lemma \ref{lemma::AD} holds for a subgaussian model while Lemma \ref{lemma::low-noise} assumes Gaussian model.
We note that Lemma \ref{lemma::trBest} is included in \citet{RZ15} and Lemma~\ref{lemma::Tclaim1} is directly from the existing Lemma in \citet{RZ15} by applying to the nodewise regression setting (i.e., Gaussian model).
Lemma \ref{lem:r-sqaured} shows the asymptotic property of R-squared metric based on Gaussian model.
Lemma \ref{lem:eig} is used to obtain the lower bound of R-squared values for errors-in-variables regression 
estimate in high-dimensional settings based on subgaussian model.

\appendix

\section{Additional estimation procedure}
\label{sec:add:proc}
The estimated $\hat{\Theta}$ and $\hat{\Phi}$ obtained from Algorithm 1 in the main paper are not necessarily positive-semidefinite.    
To obtain positive-semidefinite estimated precision matrices, 
one can consider the following additional estimation procedure.\\
\noindent
\textbf{Algorithm 3: Obtain $\hat{\Theta}_+$ with an input $\hat{\Theta}$}\\
Consider the case in which $\hat\Theta$ is not positive-semidefinite.
Since $\hat\Theta$ is symmetric, 
there exists an orthogonal matrix $U$ and a diagonal matrix $D = \mathrm{diag}(\lambda_1, \cdots, \lambda_m)$ such that
$\hat\Theta = U D U^T$, where $\lambda_1 \le \cdots \le \lambda_m$. 
Since $\hat\Theta$ is not positive-semidefinite, $\lambda_1 < 0$.
Let
$
\hat\Theta_+ = U D_+ U^T,
$
where $D_+ :=  \mathrm{diag}(\lambda_1 \vee \epsilon, \cdots, \lambda_m \vee \epsilon)$ for some positive 
constant $0< \epsilon \le -\lambda_1$.
Then $\hat\Theta_+$ is positive-semidefinite and satisfies with high probability    
\begin{eqnarray*}
\|\hat\Theta_+ - \Theta\|_2 
\le  \|\hat\Theta - \hat\Theta_+\|_2 + \|\hat\Theta-\Theta\|_2 
\le  -2\lambda_1 + \|\hat\Theta-\Theta\|_2
\le 3\|\hat\Theta-\Theta\|_2.
\end{eqnarray*}
The estimator $\hat\Theta_+$ is positive-semidefinite and has error bound at the same order with $\hat\Theta$ as in Theorem \ref{coro::Theta}.  
In practice, we set $\epsilon=-\lambda_1$.

\section{Theoretical property}
\label{Sec:gau}

\subsection{Assumption} 
In this subsection, we define some notations and assumptions which facilitate the theoretical properties of the proposed estimator.
We will define some parameters related to the restricted and
sparse eigenvalue conditions. 
We first state Definitions~\ref{def:memory}-~\ref{def::lowRE}. 
For more details of these, see \citet{RZ15}.
\begin{definition}
\label{def:memory}
\textnormal{\bf (Restricted eigenvalue condition $\RE(s_0, k_0, A)$)}
Let $1 \leq s_0 \leq m$ and  $k_0$ be a positive number.
The $m \times m$ matrix $A$ satisfies $\RE(s_0, k_0, A)$
 condition with parameter $K(s_0, k_0, A)$ if for any $\upsilon
 \not=0$,
\beq
\inv{K(s_0, k_0, A)} := 
\min_{\stackrel{J \subseteq \{1, \ldots, p\},}{|J| \leq s_0}}
\min_{\norm{\upsilon_{J^c}}_1 \leq k_0 \norm{\upsilon_{J}}_1}
\; \;  \frac{\norm{A \upsilon}_2}{\norm{\upsilon_{J}}_2} > 0,
\eeq
where $\upsilon_{J}$ represents the subvector of $\upsilon \in \R^m$
confined to a subset $J$ of $\{1, \ldots, m\}$.
\end{definition}

We also consider the following variation of the baseline $\RE$ condition.
\begin{definition}{\textnormal{(Lower-$\RE$ condition)~\citep{LW12,RZ15}}}
\label{def::lowRE}
The $m \times m$ matrix $\Gamma$ satisfies a Lower-$\RE$ condition with curvature
$\alpha >0$ and tolerance $\tau > 0$ if 
\bens
\theta^T \Gamma \theta \ge 
\alpha \twonorm{\theta}^2 - \tau \|\theta\|_1^2, \; \;  \forall \theta \in \R^m.
\eens
\end{definition}

\begin{definition}{\textnormal{(Upper-$\RE$ condition)}}
\label{def::upRE}
The $m \times m$ matrix  $\Gamma$ satisfies an upper-$\RE$ condition with curvature
$\bar\alpha >0$ and tolerance $\tau > 0$ if 
\bens
\theta^T \Gamma \theta \le \bar\alpha \twonorm{\theta}^2 + \tau
\|\theta\|_1^2, \; \;  \forall \theta \in \R^m.
\eens
\end{definition}

\begin{definition}
\label{def::sparse-eigen}
Define the largest and smallest
$d$-sparse eigenvalue of a $m \times m$ matrix $A$: for $d<m$,
\ben
\label{eq::eigen-Sigma}
\rho_{\max}(d, A) & := &
\max_{t \not= 0; \|t\|_0 \le d} \; \;\frac{\shtwonorm{A
  t}^2}{\twonorm{t}^2}, \\
\text {and } \; 
\rho_{\min}(d, A) & := & 
\min_{t \not= 0; \|t\|_0 \le d} \; \;\frac{\shtwonorm{A t}^2}{\twonorm{t}^2}.
\een
\end{definition}
Recall that we consider the inverse covariance matrices $\Theta=A^{-1}$ and $\Omega=B^{-1}$ in
the additive model of $X = X_0 + W$ such that
\ben
\label{eq::dataplus}
\mvec{X} \sim \N(0, \Sigma) \; \; \text{ where } \; \; 
\Sigma = A \oplus B := A \otimes I_n + I_m \otimes B.
\een
We have also stated the subgaussian analog of~\eqref{eq::dataplus} in~\eqref{eq::addmodel}.
Throughout this supplement, 
we use the Gaussian ensemble design for the main results, while noting that 
the more general subgaussian random design model as in~\eqref{eq::addmodel} is allowed in order for 
some of key concentration of measure bounds to go through.

Let $s_0 \ge 1$ be the largest integer chosen such that the following inequality holds:
\ben
 \label{eq::s0cond}
\sqrt{s_0} \vp(s_0) \le \frac{\lambda _{\min}(A)}{32 C}\sqrt{\frac{n }{\log m}}, \quad
\; \; \vp(s_0) := \rho_{\max}(s_0, A)+\tau_B,
\een
where  $\tau_B = \tr(B)/n$ and $C$ is to be defined.
{
Denote by
\ben
\label{eq::defineM}
M_A = \frac{64 C \vp(s_0)}{\lambda_{\min}(A)} \ge 64 C.
\een}
We use the expression 
$\tau := (\lambda_{\min}(A)-\alpha)/{s_0}, \; \; \text{where} \; \; \alpha =
5\lambda_{\min}(A)/8.$

\subsection{EIV regression}
We first review theoretical properties of the EIV regression estimator in the following
theorem on the subgaussian model~\eqref{eq::addmodel}, which is directly from \cite{RZ15}.

\begin{theorem}{\textnormal{(\bf{Estimation for the Lasso-type estimator})}}[Theorem 3 of \citet{RZ15}]
\label{thm::lasso}
Suppose $n=\Omega(\log m)$ and $n \le (\V/e) m \log m$, 
where $\V$ is a constant which depends on $\lambda_{\min}(A)$,
$\rho_{\max}(s_0, A)$ and $\tr(B)/n$. 
Suppose $m$ is sufficiently large. Suppose (A1)-(A4) hold.
Consider the EIV regression model \eqref{eq::oby} and \ref{eq::addmodel} as defined in the main paper
with independent random matrices $X_0, W$ as in~\eqref{eq::addmodel} and 
$\norm{\e_{j}}_{\psi_2} \leq M_{\e}$.
Let $C_0, c' > 0$ be  some absolute constants. Let $D_2 := 2(\twonorm{A} + \twonorm{B})$.
Suppose that $c' K^4 \le 1$ and
\ben
\label{eq::trBLasso}
\quad
r(B) := \frac{\tr(B)}{\twonorm{B}} & \ge & 16c' K^4 \frac{n}{\log m}
\log \frac{\V m \log m }{n}.
\een
Let $b_0, \phi$ be numbers which satisfy
\ben
\label{eq::snrcond}
\frac{M^2_{\e}}{K^2 b_0^2}   \le \phi  \le 1.
\een
Assume that the sparsity of $\beta^*$ satisfies for some $0 < \phi \le
1$
\ben
\label{eq::dlasso}
&& d:= \abs{\supp(\beta^*)} \le 
\frac{c' \phi K^4}{40 M_+^2} \frac{n}{\log  m}< n/2,\\
&&\mathrm{where}\  M_+ = \frac{32C\varpi(s_0+1)}{\lambda_{\min}(A)}
\een
for $\varpi(s_0+1) = \rho_{\max}(s_0+1,A) + \tau_B$.
Let $\hat\beta$ be an optimal solution to the EIV regression  with 
\ben
\label{eq::psijune}
&& \lambda \ge 4 \psi \sqrt{\frac{\log m}{n}} \; \; \text{ where } \;\;
\psi  := C_0 D_2 K \left(K \twonorm{\beta^*}+ M_{\e}\right).
\een
Then for any $d$-sparse vectors $\beta^* \in \R^m$, such that
$\phi b_0^2 \le \twonorm{\beta^*}^2 \le b_0^2$, we have with probability  at least $1- 16/m^3$,
\bens
\twonorm{\hat{\beta} -\beta^*} \leq \frac{20}{\alpha}  \lambda \sqrt{d} \; \;
\text{ and } \; \norm{\hat{\beta} -\beta^*}_1 \leq \frac{80}{\alpha}
\lambda d.
\eens
\end{theorem}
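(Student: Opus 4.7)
}
The plan is to follow the standard three-ingredient recipe for $\ell_1$-regularized M-estimators with a non-convex (corrupted) Gram matrix, as in Loh--Wainwright (2012), but now adapted to the Kronecker sum design $X=X_0+W$ with $X_0 = Z_1 A^{1/2}$ and $W = B^{1/2} Z_2$. The three ingredients are: (I) a Lower-$\RE$ condition for the corrected Gram matrix $\hat\Gamma = X^TX/n - \hat\tau_B I_m$ with curvature $\alpha = 5\lambda_{\min}(A)/8$ and tolerance $\tau = \lambda_{\min}(A)/(8 s_0)$ with $s_0$ as in \eqref{eq::s0cond}; (II) a deviation bound
\[
\bigl\|\hat\gamma - \hat\Gamma\beta^*\bigr\|_\infty \;\le\; \tfrac{1}{2}\lambda \;\asymp\; \psi\sqrt{\tfrac{\log m}{n}};
\]
(III) feasibility $\|\beta^*\|_1 \le b_1$, which is built into the hypothesis $b_1\ge \|\beta^*\|_1$. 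Granted (I)--(III), the optimality of $\hat\beta$ yields the basic inequality
\[
\tfrac12 v^T\hat\Gamma v \;\le\; \langle \hat\gamma-\hat\Gamma\beta^*,\,v\rangle + \lambda(\|\beta^*\|_1-\|\hat\beta\|_1),\qquad v:=\hat\beta-\beta^*,
\]
and the $\ell_1$-ball constraint together with the score bound forces the cone condition $\|v_{S^c}\|_1 \le 3\|v_S\|_1$ on $S=\supp(\beta^*)$. Plugging in Lower-$\RE$ gives $\alpha\|v\|_2^2 - \tau\|v\|_1^2 \le 4\lambda\sqrt d \,\|v\|_2$, and with $\tau \cdot d \le \alpha/16$ (which is exactly why $d \le c'\phi K^4 n/(40 M_+^2\log m)$ is imposed via $M_+$), the tolerance term gets absorbed into the curvature, producing $\|v\|_2 \le (20/\alpha)\lambda\sqrt d$ and then $\|v\|_1 \le (80/\alpha)\lambda d$ by the cone condition.

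For ingredient (I), I would follow the scheme in Rudelson--Zhou for Kronecker sum designs: writing $X^TX/n = A^{1/2}Z_1^T Z_1 A^{1/2}/n + B^{1/2}Z_2^T Z_2 B^{1/2}/n + \text{cross terms}$, concentration of each piece at $s_0$-sparse directions follows from Hanson--Wright-type subgaussian quadratic form bounds. Taking expectations, $\mathbb{E}[X^T X/n] = A + \tau_B I_m$, so after subtracting $\hat\tau_B I_m$ (whose closeness to $\tau_B I_m$ is controlled by the trace estimator in \eqref{eq::trBest}, governed by (A3)--(A4) and the ratio $\tr(B)/\|B\|_2$ in \eqref{eq::trBLasso}), the restricted deviation on $s_0$-sparse vectors is of order $\vp(s_0)\sqrt{s_0 \log m /n}$. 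The definition of $s_0$ in \eqref{eq::s0cond} is calibrated so that this deviation is at most $\lambda_{\min}(A)/32$, giving curvature $5\lambda_{\min}(A)/8$ on $s_0$-sparse directions; the tolerance $\tau\asymp \lambda_{\min}(A)/s_0$ then extends the bound to all vectors via a standard sparsification/peeling argument.

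For ingredient (II), decompose
\[
\hat\gamma - \hat\Gamma\beta^* \;=\; \tfrac1n X^T\epsilon \;+\; \bigl(\hat\tau_B I_m - \tfrac1n W^TW\bigr)\beta^* \;-\; \tfrac1n X_0^T W\beta^*,
\]
using $y - X\beta^* = \epsilon - W\beta^*$ and $X^TX - X^TW = X^TX_0$. Each of the three summands is controlled in the entrywise $\ell_\infty$ norm: $X^T\epsilon/n$ by independence of $\epsilon$ from $(X_0,W)$ and subgaussian columnwise concentration, giving a bound of order $(\|A\|_2+\|B\|_2)^{1/2} M_\epsilon \sqrt{\log m/n}$; the trace-correction piece by the accuracy of $\hat\tau_B$ (Lemma \ref{lemma::trBest}) together with Hanson--Wright on $\beta^{*T}W^TW\beta^*/n$; and the cross term $X_0^TW\beta^*/n$ by a product-of-independent-subgaussian-matrices bound. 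Summing and tracking norms, the dominant scale is $D_2 K(K\|\beta^*\|_2 + M_\epsilon)\sqrt{\log m /n}$, matching the prescribed $\psi$ in \eqref{eq::psijune}. The SNR-type condition \eqref{eq::snrcond} ensures neither the noise $\epsilon$ nor the signal $\beta^*$ dominates in a way that breaks the $\psi$ scaling.

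The hard part is ingredient (I), and within it the control of the cross/noise pieces $B^{1/2} Z_2^T Z_1 A^{1/2}/n$ and $Z_2^T Z_2 B$ restricted to sparse directions: these require the sharp ratio condition \eqref{eq::trBLasso} on $r(B)=\tr(B)/\|B\|_2$ so that the Hanson--Wright tail is governed by $\tr(B)$ rather than $n\|B\|_2$, and then the peeling argument from sparse to general $\ell_1$-cone vectors without inflating the tolerance by more than a $1/s_0$ factor. All the quantitative constants $(5/8,\,1/32,\,M_A,\,M_+)$ in the hypotheses are chosen exactly to make this absorption work cleanly, and the resulting final constants $20$ and $80$ come out of the simple inequality $\alpha\|v\|_2^2 \le 4\lambda\sqrt d\,\|v\|_2 + \alpha\|v\|_2^2/16$ after absorbing the tolerance. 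The $16/m^3$ failure probability is obtained by a union bound over the $O(1)$ concentration events appearing in (I) and (II), each of which holds with probability at least $1-c/m^3$.
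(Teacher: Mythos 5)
Your proposal is correct and takes essentially the same route as the source: the paper does not reprove this statement (it is imported verbatim as Theorem 3 of \citet{RZ15}), and both that proof and the paper's own surrounding machinery for Theorem \ref{coro::Theta} rest on exactly your three ingredients --- a lower-$\RE$ condition for $\hat\Gamma$ with curvature $\alpha = 5\lambda_{\min}(A)/8$ and tolerance of order $\lambda_{\min}(A)/s_0$ calibrated through \eqref{eq::s0cond}, an entrywise deviation bound on $\hat\gamma - \hat\Gamma\beta^*$ via your (algebraically correct) three-term decomposition controlled by Hanson--Wright bounds and the trace estimator of Lemma \ref{lemma::trBest}, and the absorption step of Corollary 5 in \citet{LW12}. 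Your score decomposition is the same split used in the paper's proof of Lemma \ref{lemma::low-noise} (the $U_1$--$U_4$ terms), and your identification of \eqref{eq::trBLasso} as the condition making the quadratic-form tails scale with $\tr(B)$ rather than $n\twonorm{B}$ is where that hypothesis is indeed used; only minor constant bookkeeping (e.g., your tolerance $\lambda_{\min}(A)/(8s_0)$ versus the paper's $\lambda_{\min}(A)/(2s_0)$) differs.
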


\subsection{Multiple EIV regressions for the Gaussian Random Ensembles}

First, we define some constants: 
\ben
\label{eq::defineD0}
& & D_0 = \sqrt{\tau_B} + \sqrt{a_{\max}}, \; \; \;
 D_0' =
\twonorm{B}^{1/2} + \sqrt{a_{\max}}, \; \; \tau_B^+ :=
(\tau_B^{+/2})^2, \\
\label{eq::defineDtau}
&& 
 \tau_B^{+/2}  
:=   \sqrt{\tau_B} +
\frac{D_{\ora}}{\sqrt{m}}, \; \;
D_{\ora} \;  = \; 2\left(\sqrt{\twonorm{A}} + \sqrt{\twonorm{B}}\right).
\een
The following theorem \ref{thm::nodewise} shows oracle inequalities of the nodewise regressions, which is analogous to Theorem 6 of \citet{RZ15}, which is adapted to the Gaussian random ensembles when generating data \eqref{eq::dataplus}.
Theorem \ref{thm::nodewise} will be used for the proof of Theorem 1 as in Section \ref{sec:thm1}.

\begin{theorem}
\label{thm::nodewise}
Consider the Kronecker sum model as in \eqref{eq::dataplus}.
Suppose all conditions in Theorem~\ref{thm::lasso} hold, except that
we drop \eqref{eq::snrcond} and replace~\eqref{eq::psijune} with
\begin{equation}
\label{eq::psijune15}
 \lambda^{(i)} \ge 4 \psi_i\sqrt{\frac{\log m}{n}} \; \; \text{ where } \;\;
\psi_i := C_0 D_0' K^2  \left(\tau_B^{+/2} \twonorm{\beta^{i}} + \sigma_{V_i}  \right),
\end{equation}
where $\sigma_{V_i}^2 := A_{ii}-A_{i,-i} A_{-i,-i}^{-1} A_{-i,i}$. 
Suppose that for $0< \phi \le 1$ and $C_A := \inv{160 M_{+}^2}$,
\ben
\label{eq::doracle}
&& d:= \max_{1 \le i \le m}\abs{\supp(\beta^{i})} \le C_A \frac{n}{\log m} \left\{c' c'' D_{\phi}
  \wedge 8 \right\},  \;\;\text{ where }  \\
\nonumber
&&    
c'' := \frac{\twonorm{B} + a_{\max}}{\varpi(s_0+1)^2} , \; 
 D_{\phi}
 = \frac{K^2 M^2_{\e}}{b_0^2}  + K^4 \tau_B^{+}  \phi,   
\een
and $c', \phi, b_0$ as defined in
Theorem~\ref{thm::lasso}. 
We obtain $m$ vectors of  $\hat{\beta}^i, i=1, \ldots, m$ 
by solving \eqref{eq::hatTheta} in the main paper with $\lambda = \lambda^{(i)}$,
\bens
 \; \hat\Gamma^{(i)} & = &\onen X_{\minus i}^T  X_{\minus i} - \hat\tau_B I_{m-1} 
\; \text{ and } \; \hat\gamma^{(i)}
 \; = \; \onen  X_{\minus i}^T X_i, 
\eens
for each $i$. 
Let $b_1 := b_0 \sqrt{d}$.
Then for all $d$-sparse vectors  $\beta^{i} \in \R^{m-1}, i=1, \ldots, m$, such that
$\phi b_0^2 \le \twonorm{\beta^{i}}^2 \le b_0^2$, 
we have with probability  at least $1- 16/m^2$,
\bens
\twonorm{\hat{\beta}^{i} -\beta^{i}} \leq \frac{20}{\alpha}
\lambda^{(i)}  \sqrt{d} \; \;
\text{ and } \; \norm{\hat{\beta}^{i} -\beta^{i}}_1 \leq \frac{80}{\alpha}
\lambda^{(i)} d.
\eens
\end{theorem}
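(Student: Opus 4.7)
The plan is to apply the single-response EIV result of Theorem \ref{thm::lasso} separately to each of the $m$ nodewise regressions and combine via a union bound over $i$. The crucial reduction uses the Gaussian conditional decomposition \eqref{eq::regr}--\eqref{eqLreggg2}: for each $i$, one writes
\begin{equation*}
X_i = X_{0,-i}\beta^{i} + \epsilon_i, \qquad \epsilon_i := V_{0,i} + W_i, \qquad X_{-i} = X_{0,-i} + W_{-i},
\end{equation*}
where $V_{0,i} \sim \N(0, \sigma_{V_i}^2 I_n)$ is independent of $X_{0,-i}$, and $W_i$ is independent of $X_0$ and of $W_{-i}$ because the columns of $W$ are independent. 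Consequently $\epsilon_i$ is independent of the observed design $X_{-i}$. This matches the EIV framework \eqref{eq::oby} exactly, now with design covariance $A_{-i,-i}$ (a principal submatrix of $A$) and noise covariance $I_{m-1} \otimes B$ inherited from $X$.

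Next I would transfer the hypotheses of Theorem \ref{thm::lasso} to this submatrix problem. Restricted- and sparse-eigenvalue conditions on $A_{-i,-i}$ follow from those on $A$ by Cauchy interlacing, at the cost of absolute constants only. The noise subgaussian norm satisfies $\|\epsilon_{i,j}\|_{\psi_2} = O(\sigma_{V_i} + K\sqrt{\tau_B^{+}})$, so the split $K\|\beta^*\|_2 + M_\epsilon$ in \eqref{eq::psijune} is replaced by $\tau_B^{+/2}\|\beta^{i}\|_2 + \sigma_{V_i}$ in \eqref{eq::psijune15}: $\sigma_{V_i}$ plays the role of $M_\epsilon$, while the $K$-factor multiplying $\|\beta^{i}\|_2$ (which arises from the effective measurement-error term $W_{-i}\beta^{i}$) is replaced by the column-noise scale $\tau_B^{+/2}$, and the factor $D_0'$ plays the role of $D_2$. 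Because $\hat\tau_B$ is computed from the full data $X$ rather than $X_{-i}$, Lemma \ref{lemma::trBest} yields an $i$-independent concentration bound, so a single high-probability event controls $\hat\tau_B$ uniformly across all $m$ subproblems.

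The sparsity condition requires slightly more care because the SNR assumption \eqref{eq::snrcond} of Theorem \ref{thm::lasso} is dropped here. Its role is absorbed into the two-regime factor $D_\phi = K^2 M_\epsilon^2/b_0^2 + K^4 \tau_B^{+}\phi$: if $\phi \ge M_\epsilon^2/(K^2 b_0^2)$ held, the second summand would dominate and one recovers the original $K^4\phi$ scaling of \eqref{eq::dlasso}; otherwise the first summand activates to keep the bound valid in the low-SNR regime. This produces the stated sparsity threshold involving $C_A$, $c''$ and $D_\phi$. With all hypotheses verified, Theorem \ref{thm::lasso} applied to each $i$ delivers the $\ell_2$ and $\ell_1$ oracle inequalities with failure probability at most $16/m^3$; a union bound over $i \in \{1,\ldots,m\}$ upgrades this to $16/m^2$.

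The main obstacle is bookkeeping rather than conceptual: carefully justifying the replacement of the per-entry subgaussian moment $K$ by the column-noise scale $\tau_B^{+/2}$ in the deviation bound that underlies $\psi_i$, and making sure that all universal constants and $O(1)$ factors accumulated from the interlacing step, from the subgaussian decomposition of $\epsilon_i$, and from the concentration of $\hat\tau_B$ multiply through cleanly to match the stated constants $20/\alpha$ and $80/\alpha$ rather than weaker inflated versions. This constant-tracking is precisely where the corresponding analysis in \citet{RZ15} (adapted here to Gaussian random ensembles) becomes delicate and would need to be mirrored.
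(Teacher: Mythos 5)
Your overall reduction is the paper's: the paper itself presents Theorem~\ref{thm::nodewise} as Theorem~6 of \citet{RZ15} adapted to the nodewise, Gaussian-ensemble setting, and your identification of the role swaps ($\sigma_{V_i}$ in place of $M_\ve$, $\tau_B^{+/2}$ in place of the $K$-factor multiplying $\twonorm{\beta^{i}}$, $D_0'$ in place of $D_2$, and uniform-in-$i$ control of $\hat\tau_B$ via Lemma~\ref{lemma::trBest}) is accurate. However, there is one genuine gap: your claim that the nodewise problem ``matches the EIV framework \eqref{eq::oby} exactly,'' so that Theorem~\ref{thm::lasso} can be invoked per node as a black box, is false. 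Theorem~\ref{thm::lasso} assumes the regression noise has \emph{independent} entries with $\norm{\ve_j}_{\psi_2}\le M_\ve$, whereas the nodewise noise $\ve_i = V_{0,i}+W_i$ has correlated components, since $W_i\sim\N(0_n,B)$ with $B$ non-diagonal; the paper flags this explicitly (``despite the complication due to the dependence within components of $\ve_j$''). What you defer as constant bookkeeping---justifying the replacement of $K$ by $\tau_B^{+/2}$ in the deviation bound underlying $\psi_i$---is in fact the substantive new content of the proof: the paper re-derives the bound $\norm{\hat\gamma^{(i)}-\hat\Gamma^{(i)}\beta^{i}}_{\infty} \le \psi_i\sqrt{\log m/n}$ from scratch in Lemma~\ref{lemma::low-noise}, splitting the error into four terms $U_1+U_2+U_3+U_4$ and controlling the cross terms $\onen X_{0,\minus i}^T W_i$, $\onen W_{\minus i}^T W_i$, $\onen W_{\minus i}^T V_{0,i}$ and the quadratic form $\onen (Z^T B Z-\tr(B) I_m)\bar\beta^i$ via the Hanson--Wright-type bounds of Lemma~\ref{lemma::Tclaim1} together with the trace-estimator concentration of Lemma~\ref{lemma::trBest}.

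Relatedly, your probability accounting is off in mechanism even though the arithmetic coincides. The paper's $1-16/m^2$ is not obtained by a union bound over $m$ invocations of Theorem~\ref{thm::lasso} at level $1-16/m^3$; it is the probability of the single event $\B_0=\B_4\cap\B_5\cap\B_6\cap\B_{10}$, whose defining inequalities are stated uniformly over the index $i$ (``for every $j$''), after which the deterministic Lasso analysis (lower-$\RE$ for $\hat\Gamma^{(i)}$ plus the deviation bound) runs simultaneously for all $m$ regressions on that one event. Your union-bound route would deliver the same number only if the per-node application of Theorem~\ref{thm::lasso} were legitimate, which, by the dependence issue above, it is not without re-proving its key probabilistic inputs---which is precisely what the paper's Lemmas~\ref{lemma::AD} and~\ref{lemma::low-noise} accomplish.
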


\section{Main Lemmas}
\label{proof:main}
The following lemmas are essential to prove the main theorems.
Lemmas \ref{lemma::AD} and \ref{lemma::low-noise} are analogous to Theorem 26 and Corollary 13 of \citet{RZ15}, respectively.
We emphasize that Lemma \ref{lemma::AD} holds for a subgaussian model while Lemma \ref{lemma::low-noise} assumes Gaussian model.

\begin{lemma}
\label{lemma::AD}
Suppose (A1) holds. 
Let $A_{m \times m}$ and $B_{n \times n}$ be positive definite 
covariance matrices.  Let $Z$ and $X$ be $n \times m$ random matrices as defined in Theorem~\ref{thm::lasso}.
Let
\ben
\label{eq::defineAD}
\Delta :=
 \hat\Gamma_{A} -A := \onen X^TX - \hat\tau_B I_{m} -A.
\een
Let $D_1 :=   \frac{\fnorm{A}}{\sqrt{m}} +
\frac{\fnorm{B}}{\sqrt{n}}$,  $\vp = \twonorm{B}+  a_{\max}$, and $\psi_0 = \vp C_0 K^2/\sqrt{c'}$.   
Then with probability at least $1- 6 /m^2 - 6/m^3$,
\bens
\maxnorm{\Delta} 
 & \le &   12 C K^2 \vp \sqrt{\frac{\log  m}{n}}  = O\left(\psi_0\sqrt{\frac{\log  m}{n}} \right),
\eens
where     
$C_0$ is appropriately chosen and $c'$ sufficiently small.
\end{lemma}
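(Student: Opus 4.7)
\textbf{Proof proposal for Lemma \ref{lemma::AD}.}

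The plan is to expand $X^TX$ using the decomposition $X = Z_1 A^{1/2} + B^{1/2} Z_2$, so that
\begin{equation*}
\tfrac{1}{n} X^T X \;=\; A^{1/2} \tfrac{Z_1^T Z_1}{n} A^{1/2} \;+\; \tfrac{1}{n} A^{1/2} Z_1^T B^{1/2} Z_2 \;+\; \tfrac{1}{n} Z_2^T B^{1/2} Z_1 A^{1/2} \;+\; \tfrac{1}{n} Z_2^T B Z_2.
\end{equation*}
Taking entrywise expectations, the first term averages to $A$, the cross terms have mean zero (by independence of $Z_1,Z_2$), and the last term has mean $\tau_B I_m$. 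So $\Delta = T_1 + T_2 + T_2^T + T_3 + (\tau_B - \hat\tau_B) I_m$, where $T_1,T_2,T_3$ are the three centered blocks. The strategy is to control $\maxnorm{T_i}$ for each $i$ separately and combine via the triangle inequality, with a single union bound over the at most $m^2$ entries giving the $\sqrt{\log m/n}$ rate.

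The key bounds I would establish are: (i) for $T_1$, the entry $(T_1)_{ij} = e_i^T A^{1/2}(\tfrac{1}{n} Z_1^T Z_1 - I_n) A^{1/2} e_j$ is a centered subgaussian quadratic form in $Z_1$, so Hanson--Wright with union bound yields $\maxnorm{T_1} = O(K^2 a_{\max}\sqrt{\log m/n})$ since $\twonorm{A^{1/2} e_i}^2 = A_{ii} \le a_{\max}$; (ii) for $T_3$, the analogous argument gives $(T_3)_{ij} = \tfrac{1}{n}(z_2^i)^T B z_2^j - \tau_B \mathbf{1}\{i=j\}$, a centered (sub-)exponential quantity with parameter $K^2 \twonorm{B}$, yielding $\maxnorm{T_3} = O(K^2 \twonorm{B}\sqrt{\log m/n})$; (iii) for the cross term $T_2$, entry $(T_2)_{ij} = \tfrac{1}{n} (A^{1/2} e_i)^T Z_1^T B^{1/2} z_2^j$ is handled by conditioning on $Z_2$: given $z_2^j$, this is a sum of independent subgaussians in $Z_1$ with variance proxy $\tfrac{1}{n^2} a_{ii} \twonorm{B^{1/2} z_2^j}^2$, and a concentration bound on $\twonorm{B^{1/2} z_2^j}^2$ around its mean $\tr(B)$ gives $\maxnorm{T_2} = O(K^2 \sqrt{a_{\max} \tau_B}\sqrt{\log m/n})$. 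Combining with $\sqrt{a_{\max} \tau_B} \le \tfrac{1}{2}(a_{\max} + \tau_B) \le \tfrac{1}{2}(a_{\max} + \twonorm{B}) = \vp/2$ absorbs everything into a single $\vp$-factor.

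Finally, Lemma \ref{lemma::trBest} (which is cited in the paper and was established in \citet{RZ15} under assumption (A1)) gives $|\hat\tau_B - \tau_B| = O(\psi_0 \sqrt{\log m/n})$ with the required probability, so the diagonal correction $(\tau_B - \hat\tau_B) I_m$ contributes an $O(\psi_0 \sqrt{\log m/n})$ term to $\maxnorm{\Delta}$. Adding the four pieces and taking the worst constant yields $\maxnorm{\Delta} \le 12 C K^2 \vp \sqrt{\log m/n}$, with total failure probability at most $6/m^2 + 6/m^3$ after choosing constants in the Hanson--Wright tails appropriately and taking a union bound over $m^2$ pairs.

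The main obstacle I anticipate is the cross-term $T_2$: unlike $T_1$ and $T_3$, it is a product of two \emph{independent} subgaussian matrices weighted by $A^{1/2}$ and $B^{1/2}$, and the naive entrywise subexponential bound involves the random quantity $\twonorm{B^{1/2} z_2^j}$. Handling this cleanly requires a two-stage argument (a high-probability bound on $\twonorm{B^{1/2} z_2^j}^2 \lesssim \tr(B) + \twonorm{B}\log m$ from Hanson--Wright applied to $Z_2$, followed by conditional subgaussian concentration in $Z_1$), and then verifying that the resulting $\sqrt{a_{\max}\tau_B}$ factor combines with the other pieces to give the stated $\vp$ rather than a larger constant. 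The diagonal versus off-diagonal cases of $T_3$ also need to be treated separately since only the diagonal has nonzero mean $\tau_B$, but this is routine once the Hanson--Wright framework is set up.
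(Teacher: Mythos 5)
Your proposal is correct and follows essentially the same route as the paper: the paper's proof uses the identical decomposition of $\Delta$ into the centered signal quadratic $A^{1/2}(\onen Z_1^TZ_1 - I)A^{1/2}$, the cross terms $\onen(X_0^TW + W^TX_0)$, the centered noise quadratic $\onen Z_2^TBZ_2 - \tau_B I_m$, and the trace-estimator error $\onen\abs{\hat\tr(B)-\tr(B)}$, bounding them by exactly the Hanson--Wright-type entrywise concentration events you describe (its events $\B_3$, $\B_{10}$, $\B_6$, imported from \citet{RZ15}), including the same $\sqrt{\tau_B\, a_{\max}}$ rate for the cross term and Lemma~\ref{lemma::trBest} for $\abs{\hat\tau_B-\tau_B}$. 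The only cosmetic difference is that you re-derive these tail bounds directly where the paper cites them as a sketch.
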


\begin{lemma}
\label{lemma::low-noise}
Suppose (A1) holds. 
Suppose that $m \ge 16$ and
$\frac{\fnorm{B}^2}{\twonorm{B}^2} \ge \log m.$   
Let $\hat\Gamma^{(i)}$ and $\hat\gamma^{(i)}$ be as defined in Algorithm 1 in the main paper.
Let $D_0, D_0', D_{\ora}$, and 
$\tau_B^{+/2}$  be as defined in~\eqref{eq::defineD0} and~\eqref{eq::defineDtau}.
On event $\B_0$, we have for   
some absolute constant $C_0$, for all $i$,
\ben
\norm{\hat\gamma^{(i)}
 - \hat\Gamma^{(i)} \beta^{i}}_{\infty}
& \le & 
\label{eq::psioracle} 
\psi_i \sqrt{\frac{\log m}{n}},
\een
where 
\bens
\label{eq::psi1}
\psi_i := 
C_0 K^2 D_0'\left(\tau_B^{+/2} \twonorm{\beta^{i}} + \sigma_{V_i}  \right) \le C_0 K^2 D_0' (\tau_B^{+/2} \kappa(A) + \sigma_{V} ),
\eens
where recall $D_0'=\twonorm{B}^{1/2} + a_{\max}^{1/2}$ and $\sigma_V := \max_{j} \sigma_{V_j}$. 
Then, $\prob{\B_0} \ge 1- 16/m^2$.
\end{lemma}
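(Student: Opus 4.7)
The plan is to write $\hat\gamma^{(i)} - \hat\Gamma^{(i)}\beta^{i}$ as an explicit sum of cross-product terms, each mean-zero or with a controlled centering, and then bound each one in max-norm at rate $\sqrt{\log m / n}$ by standard subgaussian/subexponential concentration. Using $X = X_0 + W$ together with the nodewise identity $X_{0,i} = X_{0,-i}\beta^{i} + V_{0,i}$ with $V_{0,i}\sim \N(0,\sigma_{V_i}^2 I_n)$ independent of $X_{0,-i}$, one has $X_i - X_{-i}\beta^{i} = V_{0,i} + W_i - W_{-i}\beta^{i}$ and therefore
\begin{eqnarray*}
\hat\gamma^{(i)} - \hat\Gamma^{(i)}\beta^{i}
&=& \onen X_{-i}^T\bigl(V_{0,i} + W_i - W_{-i}\beta^{i}\bigr) + \hat\tau_B\beta^{i} \\
&=& T_1 + T_2 + T_3 + T_4 + T_5 + T_6 + T_7,
\end{eqnarray*}
where $T_1 = \onen X_{0,-i}^T V_{0,i}$, $T_2 = \onen X_{0,-i}^T W_i$, $T_3 = -\onen X_{0,-i}^T W_{-i}\beta^{i}$, $T_4 = \onen W_{-i}^T V_{0,i}$, $T_5 = \onen W_{-i}^T W_i$, $T_6 = -(\onen W_{-i}^T W_{-i} - \tau_B I_{m-1})\beta^{i}$, and $T_7 = (\hat\tau_B - \tau_B)\beta^{i}$. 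The first five are mean-zero by independence of $X_0$, $V_{0,i}$, and distinct columns of $W$; $T_6$ is centered by construction and $T_7$ captures the bias of $\hat\tau_B$.

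For each coordinate $j$, I would bound $|(T_k)_j|$ entrywise. Terms $T_1, T_2, T_4, T_5$ are inner products of two independent mean-zero Gaussian vectors in $\R^n$; conditioning on one factor reduces each to a scalar Gaussian whose variance concentrates by Gaussian norm concentration, and entrywise Bernstein yields with failure probability $\le 2 m^{-3}$ the bound $|(T_k)_j| \le C K^2 v_k \sqrt{\log m / n}$, where the variance proxies are $v_1 = \sqrt{a_{\max}}\,\sigma_{V_i}$, $v_2 \asymp \sqrt{a_{\max}}\,\twonorm{B}^{1/2}$, $v_4 \asymp \sigma_{V_i}\,\tau_B^{+/2}$, and $v_5 \asymp \twonorm{B}^{1/2}\,\tau_B^{+/2}$. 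Term $T_3$ is handled similarly by conditioning on $W_{-i}$: entrywise $(T_3)_j = \onen X_{0,j}^T(W_{-i}\beta^{i})$ is Gaussian with variance $A_{jj}\cdot\|W_{-i}\beta^{i}\|^2/n^2$, and $\|W_{-i}\beta^{i}\|^2/n$ concentrates around a $\twonorm{B}\|\beta^{i}\|_2^2$-type quantity, producing $|(T_3)_j| \le C K^2 \sqrt{a_{\max}}\,\twonorm{B}^{1/2}\|\beta^{i}\|_2 \sqrt{\log m/n}$. Term $T_6$ is the most delicate: its $j$-th coordinate is a centered quadratic form in $\mvec{W_{-i}}$, to which the entrywise Hanson--Wright inequality applies with variance proxy $\twonorm{B}\|\beta^{i}\|_2$, yielding the rate $\twonorm{B}\|\beta^{i}\|_2\sqrt{\log m/n}$; the hypothesis $\fnorm{B}^2/\twonorm{B}^2 \ge \log m$ is used precisely to keep the deviation in the Bernstein regime rather than the heavier sub-exponential tail at the confidence level $1/m^2$. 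Finally, $T_7$ is controlled via Lemma \ref{lemma::trBest} (from \cite{RZ15}) on $|\hat\tau_B - \tau_B|$, times $\|\beta^{i}\|_\infty \le \|\beta^{i}\|_2$.

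Defining $\B_0$ as the intersection of these seven events, each with failure probability at most $c/m^3$, taken uniformly over $j = 1,\ldots,m-1$ and $i = 1,\ldots,m$, a union bound gives $\pr(\B_0) \ge 1 - 16/m^2$. Summing the seven entrywise bounds and absorbing scales into $D_0' = \twonorm{B}^{1/2} + a_{\max}^{1/2}$ and $\tau_B^{+/2} = \sqrt{\tau_B} + D_{\ora}/\sqrt{m}$ gives
\begin{eqnarray*}
\|\hat\gamma^{(i)} - \hat\Gamma^{(i)}\beta^{i}\|_\infty
&\le& C_0 K^2 D_0'\bigl(\tau_B^{+/2}\|\beta^{i}\|_2 + \sigma_{V_i}\bigr)\sqrt{\log m / n}\;=\;\psi_i\sqrt{\log m/n},
\end{eqnarray*}
which is the claim. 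The main obstacle will be the entrywise control of $T_3$ and $T_6$: both involve the composite object $W_{-i}\beta^{i}$, so naive operator-norm bounds on $\onen W_{-i}^T W_{-i} - \tau_B I$ are loose by a factor of $\sqrt{m-1}$; obtaining the clean rate requires either a direct entrywise Hanson--Wright inequality or a careful two-step conditioning argument coupled with Gaussian concentration of $\|W_{-i}\beta^{i}\|_2$, and the Frobenius-to-operator ratio hypothesis on $B$ is what ensures the subgaussian tail dominates at the required confidence.
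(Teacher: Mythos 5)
Your decomposition into $T_1,\ldots,T_7$ is exactly the paper's: the paper groups $T_1+T_2+T_4+T_5$ as $\onen\norm{X_{0,\minus i}^T\ve_i + W_{\minus i}^T\ve_i}_\infty$ with $\ve_i = V_{0,i}+W_i$, handles $T_6+T_7$ through $\onen\norm{(W^TW-\hat\tr(B)I_m)\bar\beta^i}_\infty$ and $T_3$ as $\onen\norm{X_0^TW\bar\beta^i}_\infty$, delegating the per-term concentration to the events $\B_4,\B_5,\B_6,\B_{10}$ imported from \cite{RZ15}; your probability bookkeeping also matches. The genuine gap is quantitative: your variance proxies for the $B$-dependent terms use $\twonorm{B}$ where the claimed constant $\psi_i$ requires trace-type functionals of $B$. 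For $T_3$, since the columns of $W$ are i.i.d.\ $\N(0,B)$, one has $W_{\minus i}\beta^{i}\sim \N(0,\twonorm{\beta^{i}}^2 B)$, so $\norm{W_{\minus i}\beta^{i}}_2^2/n$ concentrates around $\twonorm{\beta^{i}}^2\tau_B=\twonorm{\beta^{i}}^2\tr(B)/n$, \emph{not} a $\twonorm{B}\twonorm{\beta^{i}}^2$-type quantity as you assert; the correct proxy is $\sqrt{a_{\max}\,\tau_B}\,\twonorm{\beta^{i}}$. For $T_6$, Hanson--Wright gives the Frobenius proxy $\fnorm{B}\twonorm{\beta^{i}}/\sqrt{n}\le \sqrt{\tau_B}\twonorm{B}^{1/2}\twonorm{\beta^{i}}$ (this is precisely the paper's event $\B_5$, combined with $\fnorm{B}\le\sqrt{\tr(B)}\twonorm{B}^{1/2}$), not $\twonorm{B}\twonorm{\beta^{i}}$; similarly the entry of $T_2$, conditioned on $W_i$, is Gaussian with variance $a_{\max}\norm{W_i}_2^2/n^2\approx a_{\max}\tau_B/n$, so $v_2\asymp\sqrt{a_{\max}\tau_B}$, not $\sqrt{a_{\max}}\twonorm{B}^{1/2}$. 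With your proxies the sum is of order $K^2D_0'\twonorm{B}^{1/2}\twonorm{\beta^{i}}\sqrt{\log m/n}$ rather than $K^2D_0'\tau_B^{+/2}\twonorm{\beta^{i}}\sqrt{\log m/n}$, and since the stable-rank hypothesis only guarantees $\tau_B\ge\twonorm{B}\log m/n$, the ratio $\twonorm{B}^{1/2}/\tau_B^{+/2}$ can be as large as order $\sqrt{n/\log m}$: the final step ``absorbing scales into $D_0'$ and $\tau_B^{+/2}$'' fails as stated. Ironically, you correctly flag $T_3$ and $T_6$ as the delicate terms and warn against naive operator-norm bounds, but the proxies you then write down are exactly the loose operator-norm ones.

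The repair is the two-step conditioning you sketch, executed with trace functionals: Gaussian norm concentration pins $\norm{W_{\minus i}\beta^{i}}_2$ and $\norm{W_i}_2$ near $\sqrt{\tr(B)}\twonorm{\beta^{i}}$ and $\sqrt{\tr(B)}$ respectively, and the stable-rank hypothesis $\fnorm{B}^2/\twonorm{B}^2\ge\log m$ is what keeps both this concentration and the Hanson--Wright deviation for $T_6$ in the subgaussian regime at confidence $1-O(m^{-3})$ (your stated role for this hypothesis is correct, and it also yields $n\ge r(B)\ge\log m$ as used at the start of the paper's proof). One further point worth flagging: $T_2$ and $T_5$ carry no factor of $\twonorm{\beta^{i}}$ or $\sigma_{V_i}$, so even with the sharp proxies $\sqrt{a_{\max}\tau_B}$ and $\sqrt{\tau_B}\twonorm{B}^{1/2}$ their absorption into $\psi_i$ requires $\twonorm{\beta^{i}}\gtrsim 1$ or $\sigma_{V_i}\gtrsim\tau_B^{+/2}$; this is consistent with the regime of Theorem~\ref{thm::nodewise} (where $\twonorm{\beta^{i}}^2\ge \phi b_0^2$), and your explicit seven-term accounting at least makes this absorption step visible, whereas the paper's final display retains only the $\sigma_{V_i}D_0$ portion of $U_1$.
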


\section{Proof of Theorem 1}
\label{sec:thm1}
Lemmas~\ref{lemma::AD} and~\ref{lemma::low-noise} show that the following conditions hold with probability at least  $1- 20 /m^2 - 6/m^3$:
\bens
\norm{\hat\gamma^{(i)} 
- \hat\Gamma^{(i)} \beta^{i}}_{\infty} \le C_\psi \sqrt{\frac{\log m}{n}},
\eens
\bens
\maxnorm{ \hat\Gamma_{A} -A } = \maxnorm{\onen X^TX - \hat\tau_B I_{m}
  -A} = O\left(C_\psi \sqrt{\frac{\log m}{n}}\right),
\eens
where $C_\psi := K^2 C_0 D_0'  \left(\tau_B^{+/2} \kappa(A)+ \sqrt{a_{\max}} \right)$.
Next we see that the lower-$\RE$ condition holds with  curvature $\alpha = 5\lambda_{\min}(A)/8$ and
tolerance $\tau =\frac{\lambda_{\min}(A)}{2s_0}$ uniformly over the
matrices $\hat\Gamma^{(i)}$ with scaling as required
 by Corollary 5 in~\cite{LW12}, namely,
\ben
\label{eq::taumain}
\sqrt{d} \tau \le \min \left\{\frac{\alpha}{32 \sqrt{d}}, \frac{\lambda^{(i)}}{4b_0} \right\}.
\een
The theorem follows from Corollary 5 in~\cite{LW12}, so long as 
we can show that condition~\eqref{eq::taumain} holds for $\lambda^{(i)} \ge 4 \psi_i\sqrt{\frac{\log m}{n}}$, 
where  the parameter $\psi_i$
is as defined in~\eqref{eq::psijune15}.
Condition \eqref{eq::taumain} can be easily checked using the definition of $s_0$ as in \eqref{eq::s0cond} and the lower bound of $\lambda^{(i)}$.
Hence, by Theorem \ref{thm::nodewise} and Corollary 5 in~\cite{LW12},  we have     
\begin{eqnarray*}
\twonorm{\hat\Theta - \Theta}   
&=& O_p\left( \frac{K^2(A)}{\lambda_{\min}^2(A)}  d \max_i \lambda^{(i)} \right).
\end{eqnarray*}

This completes the proof. \qed

\section{Consistency of $\hat{\Omega}$}
\label{sec:cons:omega}
Theorem \ref{coro::Omega} shows the consistency of $\hat\Omega$ in the operator norm. 
Using the similar argument presented in the proof of Theorem 1 with appropriate modifications, we can show consistency results of
$\Omega = B^{-1}$.  

\begin{theorem}
\label{coro::Omega}
Suppose the columns of $\Omega$ is $d_{\Omega}$-sparse, and suppose the
condition number $\kappa(\Omega)$ is finite.
Suppose all conditions of Theorem \ref{coro::Theta} in the main paper hold by changing $m$ and $n$ each other, and replacing $d$ with $d_{\Omega}$.
Then with probability at least $1- 26 /m^2$, 
\bens
\twonorm{\hat\Omega - \Omega}  =  O_p\left( \frac{K^2(B)}{\lambda_{\min}^2(B)}  d_{\Omega} \max_{i \le n} \lambda^{(i)} \right).
\eens
\end{theorem}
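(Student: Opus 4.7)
The plan is to exploit the symmetry of the Kronecker sum model under matrix transposition. If $\mvec{X} \sim \N(0, A \oplus B)$ with $X$ an $n \times m$ matrix, then a permutation of coordinates yields $\mvec{X^T} \sim \N(0, B \oplus A)$ where $X^T$ is $m \times n$. Hence estimating $\Omega = B^{-1}$ from $X$ is exactly the same problem as estimating the ``$\Theta$''-analog from $X^T$, but with the roles of $A$ and $B$ and of $m$ and $n$ interchanged throughout. The plan is therefore to re-run Algorithm 1 on $X^T$ and apply the proof of Theorem \ref{coro::Theta} verbatim under this swap. One pleasant simplification: assumption (A1) states that $\tr(A) = m$ is known, so the analog of the trace estimator \eqref{eq::trBest} in the swapped problem collapses to the exact value $\hat\tau_A = 1$, meaning no error enters through the trace correction.

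First I would apply Lemma \ref{lemma::AD} with the swap $A \leftrightarrow B$, $m \leftrightarrow n$ to conclude that the corrected Gram matrix $\hat\Gamma_B := X X^T/m - I_n$ satisfies $\maxnorm{\hat\Gamma_B - B} = O(\psi'_0 \sqrt{\log n / m})$ with high probability, where $\psi'_0$ is the analog of $\psi_0$ with $(\twonorm{A}, a_{\max})$ in place of $(\twonorm{B}, b_{\max})$. Similarly, Lemma \ref{lemma::low-noise} after the swap gives $\norm{\hat\gamma^{(i)} - \hat\Gamma_B^{(i)} \tilde\beta^{i}}_\infty \le \tilde\psi_i \sqrt{\log n/m}$ for each nodewise regression target $\tilde\beta^{i} = -(B_{ii} - B_{i,-i} B_{-i,-i}^{-1} B_{-i,i})\,\Omega_{-i,i}$.

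Next I would verify the lower- and upper-$\RE$ conditions for the corrected $\hat\Gamma_B^{(i)}$ under the swapped forms of (A2)--(A4), which now constrain $\lambda_{\min}(B)$, $\kappa(B)$, $\fnorm{A}^2/\twonorm{A}^2$, and $\tr(A)/\twonorm{A}$. Invoking Theorem \ref{thm::nodewise} and Corollary 5 of \citet{LW12} on each of the $n$ nodewise regressions then yields $\twonorm{\hat\beta^{i} - \tilde\beta^{i}} \le (20/\alpha_B)\sqrt{d_\Omega}\,\lambda^{(i)}$ on a common high-probability event. Aggregating across nodes through the identities $\Omega_{ii} = (B_{ii} - B_{i,-i}\tilde\beta^{i})^{-1}$ and $\Omega_{i,-i} = -\Omega_{ii}\tilde\beta^{i}$, exactly mirroring steps (2)--(3) of Algorithm 1, produces the claimed operator-norm bound on $\hat\Omega - \Omega$.

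The main obstacle lies in the probability bookkeeping after the swap. The statement claims probability at least $1 - 26/m^2$, but the natural tail bounds inherited from Lemmas \ref{lemma::AD} and \ref{lemma::low-noise}, once the roles of $m$ and $n$ are swapped, come out at rate $1 - O(1/n^2)$, with a union bound over $n$ (rather than $m$) parallel nodewise regressions. I would need to reconcile this either by an implicit ordering assumption such as $n \asymp m$ or by restating the bound in the dimension that actually controls the tail probability. Beyond this, one must check that the sparsity budget $d_\Omega = O(m/\log n)$ and the sparse-eigenvalue scaling analogous to \eqref{eq::s0cond} (now for $B$) remain compatible with (A2)--(A4) swapped; this is routine but deserves explicit verification.
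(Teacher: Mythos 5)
Your proposal follows essentially the same route as the paper: the paper's own proof of this theorem is a one-line appeal to the proof of Theorem~1 under the swap $m \leftrightarrow n$, $A \leftrightarrow B$, $d \to d_{\Omega}$, which is exactly the transpose-symmetry argument you spell out (including the correct observation that under (A1) the trace correction $\tau_A = 1$ is known exactly, so no analog of the $\hat{\tr}(B)$ estimation error enters). Your concern about the probability bookkeeping is also well founded: after the swap the tail bounds inherited from the lemmas come out as $1 - O(1/n^2)$ with a union over $n$ nodewise regressions, so the paper's stated $1 - 26/m^2$ reads as an unswapped artifact of the Theorem~1 statement rather than a substantive discrepancy with your argument.
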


\section{Proof of lemmas}
\label{sec:pre}

The large deviation bounds in Lemmas \ref{lemma::trBest} and \ref{lemma::Tclaim1} are the key results in proving Lemmas \ref{lemma::AD} and \ref{lemma::low-noise}. 
Let $C_0$ be an absolute constant appropriately chosen.
We first prove Lemma~\ref{lemma::low-noise} followed by Lemma~\ref{lemma::AD}. 

\subsection{Preliminary Results}
\label{sec::lownoiseproof}
Let $Z$ be a Gaussian ensemble such that $Z_{j k} \sim \N(0, 1)$ for all $j, k$.
First we note that for a mean zero normal random variable $V_{0, j}$ with variance 
$\sigma_{V_j}^2$, $V_{0, j}/\sigma_{V_j} \sim Z_{ij}$, for which it holds that $\norm{Z_{ij}}_{\psi_2} = K$.
Thus we have 
\bens
\norm{V_{0, j}/\sigma_{V_j}}_{\psi_2} = K \; \; \text{and thus}\;\;
\norm{V_{0, j}}_{\psi_2} = \sigma_{V_j} K  =: M_{V_j}
\eens
which reflects the relative strength of the noise in $V_0$ relative to $Z_{ij}$  
and $M_V := \max_j  M_{V_j} = K \max_j \sigma_{V_j}$ is the upper bound on $\psi_2$ norm for
$V_{0,j},  j = 1, \ldots, m$.
 
Throughout this section, we denote by:
\bens
r_{m,n} =  C_0 K^2 \sqrt{\frac{\log m}{n}} \; \; \text{ and } \; \
r_{m,m} =  2 C_0 K^2 \sqrt{\frac{\log m}{mn}}.
\eens
We first define some events $\B_4, \B_5, \B_6, \B_{10}$ which are
adapted from Lemmas 5 and 11 of~\cite{RZ15}.

Denote by $\B_0 := B_4 \cap \B_5 \cap \B_6 \cap B_{10}$, which we use throughout
this paper.  Denote by $\bar\beta^j \in \R^m$ the zero-extended $\beta^{j}$ in $\R^m$ such that 
$\bar\beta^j_i = \beta^{j}_i \; \; \forall i \ne j$ and $\bar\beta^j_j = 0$.

\begin{lemma} [Lemma 11 of \citet{RZ15}]
\label{lemma::Tclaim1} 
Assume that the stable rank of $B$, $\fnorm{B}^2/\twonorm{B}^2 \ge \log m$.
Let $Z, X_0$ and $W$ as defined in Theorem~\ref{thm::lasso}.
Let $Z_0, Z_1$ and  $Z_2$ be independent copies of $Z$.
Let $V_j^T \sim Y_j  \sigma_{V_j}$ where $Y_j := e_j^T Z_0^T$.
Denote by $\B_4$ the event such that 
\bens
\text{ for every } \; \; j, \quad
\onen \norm{A^{\half} Z_1^T V_j}_{\infty}
& \le & r_{m,n} \sigma_{V_j} a_{\max}^{1/2} \\
 \; \text{ and } \; 
\onen \norm{ Z_2^T B^{\half} V_j}_{\infty}
& \le &   r_{m,n} \sigma_{V_j}\sqrt{\tau_B}.
 \eens
Then $\prob{\B_4} \ge  1 - 4/m^2$.
Moreover, denote by $\B_5$ the  event such that 
\bens
\text{ for every } \; \; j, \quad
\onen \norm{(Z^T B Z- \tr(B) I_{m}) \bar\beta^j}_{\infty} & \le &  
r_{m,n}  \twonorm{\beta^{j}}  \frac{\fnorm{B}}{\sqrt{n}} \\
\text{and} \;\;\onen \norm{X_0^T W \bar\beta^j}_{\infty}
& \le &   r_{m,n} \twonorm{\beta^{j}}  \sqrt{\tau_B} a^{1/2}_{\max}.
\eens
Then $\prob{\B_5} \ge 1 - 4/m^2$.

Finally, denote by $\B_{10}$ the event such that
\bens
\onen \norm{(Z^T B Z- \tr(B) I_{m})}_{\max} & \le &  r_{m,n} \frac{\fnorm{B}}{\sqrt{n}} \\
\text{and} \;\; \onen \norm{X_0^T W}_{\max}
& \le &  r_{m,n} \sqrt{\tau_B} a^{1/2}_{\max}.
\eens
Then $\prob{\B_{10}} \ge 1 -  4/m^2$.
\end{lemma}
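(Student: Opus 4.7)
The plan is to treat each of the three events as a collection of concentration statements for linear, bilinear, or quadratic forms in the independent subgaussian ensembles $Z, Z_0, Z_1, Z_2$, and to derive the stated entrywise bounds via Bernstein/Hanson--Wright inequalities followed by union bounds over the $m$ coordinates (for $\B_4, \B_5$) or over $m^2$ entries (for $\B_{10}$). Throughout, the key structural fact I will exploit is that $V_j$ is a function of $Z_0$ alone and is therefore independent of both $Z_1$ and $Z_2$.

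For $\B_4$, I would fix $V_j$ and examine the $k$-th coordinate
\[
(A^{1/2} Z_1^T V_j)_k = \sum_{i=1}^n \sum_{l=1}^m (A^{1/2})_{kl}\,(Z_1)_{il}\,V_{j,i}.
\]
Conditional on $V_j$ this is a sum of independent mean-zero subgaussian random variables with $\psi_2$ norm at most $K\,\|V_j\|_2\,\sqrt{A_{kk}} \le K\,\|V_j\|_2\,a_{\max}^{1/2}$. A standard Hanson--Wright bound shows $\|V_j\|_2^2 \le 2 n \sigma_{V_j}^2$ with probability $1 - 2/m^3$. A subgaussian tail bound, after dividing by $n$ and union-bounding over $k=1,\ldots,m$, yields the claimed deviation $r_{m,n}\,\sigma_{V_j}\,a_{\max}^{1/2}$ on an event of probability at least $1 - 2/m^2$. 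The second inequality in $\B_4$ is obtained identically after replacing $A^{1/2} Z_1^T$ with $Z_2^T B^{1/2}$; here the column norms of $B^{1/2}$ contribute $\sqrt{\tau_B}$ in place of $a_{\max}^{1/2}$.

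For $\B_5$ and $\B_{10}$, the core is a Hanson--Wright-plus-Bernstein decomposition. For fixed $u = \bar\beta^j$, the $k$-th coordinate of $(Z^T B Z - \tr(B) I_m) u$ equals
\[
Z_k^T B Z u - \tr(B) u_k = u_k\bigl(Z_k^T B Z_k - \tr(B)\bigr) + Z_k^T B \sum_{l \neq k} u_l Z_l.
\]
The first summand has subexponential tail controlled by the Hanson--Wright inequality with variance proxy $\lesssim u_k^2\,K^4\,\fnorm{B}^2$; the second is a bilinear form in the two independent pieces $Z_k$ and $\sum_{l\neq k} u_l Z_l$, which by the sub-exponential Bernstein inequality has variance proxy $\lesssim K^4\,\fnorm{B}^2\,\|u\|_2^2$. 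Combining and dividing by $n$, each coordinate deviates by at most $r_{m,n}\,\twonorm{u}\,\fnorm{B}/\sqrt{n}$ with probability $1-O(m^{-3})$, and a union bound over $k=1,\ldots,m$ gives the first half of $\B_5$. For the cross term $\onen X_0^T W \bar\beta^j = \onen A^{1/2} Z_1^T B^{1/2} Z_2 \bar\beta^j$, I condition on $Z_2 \bar\beta^j$ (which is independent of $Z_1$) so the conditional coordinate is a Gaussian/subgaussian linear form with $\psi_2$ scale $K\,a_{\max}^{1/2}\,\|B^{1/2} Z_2 \bar\beta^j\|_2$; a Hanson--Wright bound on $\|B^{1/2} Z_2 \bar\beta^j\|_2^2 \lesssim n\,\tau_B\,\twonorm{\bar\beta^j}^2$ then gives the second half of $\B_5$. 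The event $\B_{10}$ is obtained by running the exact same argument without projecting onto the $u$-direction and instead taking a union bound over all $m^2$ entries; the union bound costs an extra factor $2$ inside the $\sqrt{\log m}$, which is absorbed into $C_0$.

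The main obstacle is bookkeeping the dependencies: in the quadratic piece $(Z^T B Z)u$ the random matrix $Z$ appears on both sides, so one cannot condition $Z$ away, and the decomposition splitting off $u_k Z_k$ from $\sum_{l \neq k} u_l Z_l$ is essential to recover a genuine bilinear form between independent subgaussian vectors and hence apply Bernstein. Once this split is made, the remaining steps are standard Hanson--Wright and Bernstein estimates together with the sub-exponential concentration of the $\ell_2$ norms of $V_j$ and $Z_2 \bar\beta^j$. Since $\B_0 = \B_4 \cap \B_5 \cap \B_6 \cap \B_{10}$ is an intersection of four such events, a final union bound yields $\pr(\B_0) \ge 1 - 16/m^2$. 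This argument is exactly the specialization of Lemma 11 of \citet{RZ15} to the Gaussian ensemble setting used here, and we inherit the constants from that reference.
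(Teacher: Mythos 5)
The paper never proves this lemma: it is imported verbatim as Lemma 11 of \citet{RZ15}, specialized to the Gaussian nodewise-regression setting (the supplement says exactly this), so there is no internal proof to compare against. Your reconstruction is, in substance, the standard argument behind the cited result: Hanson--Wright for the diagonal quadratic terms $Z_k^T B Z_k - \tr(B)$, a decoupled bilinear bound for $Z_k^T B \sum_{l\neq k} u_l Z_l$ after splitting off the $k$-th column, conditioning plus norm concentration for the cross terms built from $X_0^T W = A^{1/2} Z_1^T B^{1/2} Z_2$, and union bounds over $m$ coordinates (resp.\ $m^2$ entries for $\B_{10}$). The split isolating $u_k Z_k$ is indeed the essential step and your variance proxies are the right ones. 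Two small points of hygiene: what you call the ``sub-exponential Bernstein inequality'' for the bilinear piece is really the decoupled Hanson--Wright bound for $x^T B y$ with independent subgaussian $x,y$ (the summands $B_{ii'} x_i y_{i'}$ are not independent, so plain Bernstein does not apply directly, though the stated proxy $K^4 \fnorm{B}^2 \twonorm{u}^2$ is what decoupling gives); and it is the subexponential branch of each such bound where the stable-rank hypothesis $\fnorm{B}^2/\twonorm{B}^2 \ge \log m$ is actually consumed, via $\tr(B)/\twonorm{B} \ge \fnorm{B}^2/\twonorm{B}^2 \ge \log m$. Also, since $\B_4$ and $\B_5$ must hold \emph{for every} $j$, the union bound has to run over the $m$ choices of $j$ as well as the $m$ coordinates; this costs only a constant absorbed into $C_0$, but you mention only the union over $k$.

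One mechanism in your sketch is wrong as stated, though easily repaired. For the second inequality of $\B_4$ you assert that ``the column norms of $B^{1/2}$ contribute $\sqrt{\tau_B}$.'' No deterministic bound of that kind is available: conditional on $V_j$, the $\psi_2$ scale of a coordinate of $Z_2^T B^{1/2} V_j$ is of order $K \| B^{1/2} V_j \|_2$, and bounding this through column norms of $B^{1/2}$ (or through $\twonorm{B}^{1/2} \|V_j\|_2$) produces $b_{\max}^{1/2}$ or $\twonorm{B}^{1/2}$ in place of $\sqrt{\tau_B}$, which is strictly weaker since $\tau_B = \tr(B)/n \le \twonorm{B}$. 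The correct route is the one you yourself use for the $\B_5$ cross term: apply Hanson--Wright to the quadratic form $V_j^T B V_j$, whose mean is $\sigma_{V_j}^2 \tr(B) = n \tau_B \sigma_{V_j}^2$, to conclude $\| B^{1/2} V_j \|_2 \le C \sigma_{V_j} \sqrt{n \tau_B}$ with probability $1 - O(m^{-3})$ (the stable-rank condition again controlling the subexponential branch); the conditional subgaussian tail then yields exactly the factor $\sqrt{\tau_B}$. With that substitution, your proposal is a correct, self-contained proof of the statement that the paper handles purely by citation.
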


Next we state the following result from~\cite{RZ15}. 
\begin{lemma}[Lemma 5 of \citet{RZ15}]
\label{lemma::trBest}
Let $m \ge 2$. Let $X$ be defined as in~\eqref{eq::addmodel}. 
Suppose that $n \vee (r(A)  r(B)) > \log m$.
Denote by $\B_6$ the event such that 
\bens
\abs{\hat\tau_B - \tau_B} & \le &  
2 C_0 K^2 \sqrt{\frac{\log m}{m n}}
\left(\frac{\fnorm{A}  }{\sqrt{m}} +\frac{\fnorm{B}}{\sqrt{n}}
\right) =:  D_1 r_{m,m} \\ 
\text{ where} \; \; D_1 & := &  \frac{\fnorm{A}}{\sqrt{m}} +
\frac{\fnorm{B}}{\sqrt{n}} \; 
\text{ and } \; \;   r_{m,m} = 2 K^2 C_0 \sqrt{\frac{\log m}{m n}}.
\eens 
Then $\prob{\B_6} \ge 1-\frac{3}{m^3}$.
If we replace $\sqrt{\log m}$ with $\log m$ in the definition of event
$\B_6$, then we can drop the condition on $n$ or $r(A)r(B) =
\frac{\tr(A)}{\twonorm{A}} \frac{\tr(B)}{\twonorm{B}}$ to achieve 
the same bound on event $\B_6$.
\end{lemma}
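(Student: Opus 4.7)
The plan is to start from the identity
\[
\|X\|_F^2 \;=\; \tr\!\bigl(A^{1/2} Z_1^T Z_1 A^{1/2}\bigr) \;+\; 2\tr\!\bigl(A^{1/2} Z_1^T B^{1/2} Z_2\bigr) \;+\; \tr\!\bigl(Z_2^T B Z_2\bigr),
\]
which follows by expanding $\|Z_1 A^{1/2}+B^{1/2}Z_2\|_F^2$. A direct moment calculation (using $\E Z_1^T Z_1 = n I_m$, $\E Z_2 Z_2^T = m I_n$, and independence of $Z_1,Z_2$) gives $\E\|X\|_F^2 = n\tr(A)+m\tr(B)$. Ignoring the $(\cdot)_+$ truncation for now, we can then write
\[
m\,\hat\tr(B) - m\,\tr(B) \;=\; \bigl[\tr(Z_1 A Z_1^T)-n\tr(A)\bigr] \;+\; 2\tr(A^{1/2}Z_1^T B^{1/2}Z_2) \;+\; \bigl[\tr(Z_2^T B Z_2)-m\tr(B)\bigr],
\]
and control each of the three pieces separately.

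For the two quadratic pieces I would write $\tr(Z_1 A Z_1^T)=\sum_{i=1}^n Y_i^T A Y_i$ with $Y_i$ the i.i.d. rows of $Z_1$, apply Hanson--Wright to each term (mean $\tr(A)$, subexponential norm $\lesssim K^2\|A\|_F$ in the Gaussian/subgaussian tail and $\lesssim K^2\|A\|_2$ in the sub-exponential tail), and sum via Bernstein's inequality to get
\[
\prob{\,|\tr(Z_1 A Z_1^T)-n\tr(A)|>t\,} \;\le\; 2\exp\!\Bigl(-c\min\!\bigl(t^2/(nK^4\|A\|_F^2),\ t/(K^2\|A\|_2)\bigr)\Bigr),
\]
and analogously for $\tr(Z_2^T B Z_2)-m\tr(B)$. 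Choosing $t\asymp K^2\|A\|_F\sqrt{n\log m}$ (resp.\ $K^2\|B\|_F\sqrt{m\log m}$) makes the sub-Gaussian branch active under the hypothesis $n\vee r(A)r(B)>\log m$, and yields tail $\le m^{-3}$ up to constants. For the bilinear cross term, conditioning on $Z_1$ turns it into a mean-zero subgaussian linear functional of $Z_2$ with variance $\tr(A^{1/2}Z_1^T B Z_1 A^{1/2})$; one application of Hanson--Wright to bound this variance by its mean $\tr(A)\tr(B)$ plus fluctuations, combined with a subgaussian tail in $Z_2$, gives a bound of order $K^2\sqrt{\tr(A)\tr(B)\log m}$ with probability $\ge 1-m^{-3}$. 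Finally, an AM--GM step
\[
2(mn)^{1/4}\sqrt{\|A\|_F\|B\|_F} \;\le\; \|A\|_F\sqrt{n} + \|B\|_F\sqrt{m}
\]
together with $\tr(A)\le\sqrt{m}\,\|A\|_F$ and $\tr(B)\le\sqrt{n}\,\|B\|_F$ absorbs the cross term into the two main ones, producing the stated bound $D_1 r_{m,m}$ after dividing by $mn$.

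The only remaining step is to justify dropping the positive part: under the event on which the three deviation bounds hold, one verifies $\|X\|_F^2 - n\tr(A) \ge m\tr(B) - C K^2\sqrt{\log m}(\|A\|_F\sqrt{n}+\|B\|_F\sqrt{m})$, and the right side is nonnegative precisely when $n\vee r(A)r(B)>\log m$ (using $\tr(B)\ge\sqrt{r(B)}\,\|B\|_F$). Consequently $\hat\tr(B)=(\|X\|_F^2-n\tr(A))/m$ on this event, and the earlier bound transfers to $|\hat\tau_B-\tau_B|$. A union bound over the three high-probability events gives the claimed $1-3/m^3$. The main obstacle I expect is purely bookkeeping: matching the absolute constant $C_0$ and the $K^2$ prefactors across the three Hanson--Wright applications and the AM--GM combination so that the final bound comes out in the symmetric form $D_1 r_{m,m}$ rather than with separate constants on the $\|A\|_F$ and $\|B\|_F$ pieces. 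The remark about replacing $\sqrt{\log m}$ by $\log m$ to drop the condition on $n\vee r(A)r(B)$ is immediate, since with the larger threshold the sub-exponential tail of Bernstein (rather than the sub-Gaussian tail) dominates for the quadratic pieces, and no lower bound on $r(A)r(B)$ is needed to ensure the truncation is inactive.
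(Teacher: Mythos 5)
Your proposal is essentially the correct and standard argument for this lemma; note the paper itself does not prove it but imports it verbatim as Lemma 5 of \citet{RZ15}, whose proof follows the same route you describe: decompose $\fnorm{X}^2$ into the two quadratic forms and the bilinear cross term, apply Hanson--Wright-type concentration to each (the quadratic forms via the vectorized forms with matrices $A\otimes I_n$ and $I_m\otimes B$, the cross term conditionally on $Z_1$), and recombine using $\tr(A)\le \sqrt{m}\,\fnorm{A}$, $\tr(B)\le\sqrt{n}\,\fnorm{B}$ and AM--GM, with the hypothesis $n\vee r(A)r(B)>\log m$ keeping the subgaussian branch of the tail active. One slip worth fixing in your truncation step: the claim that the lower bound on $\fnorm{X}^2-n\tr(A)$ is nonnegative ``precisely when'' $n\vee r(A)r(B)>\log m$ is false --- the hypothesis does not prevent, say, $K^2\fnorm{A}\sqrt{n\log m}$ from exceeding $m\tr(B)$ --- but the step is unnecessary, since $x\mapsto(x)_+$ is $1$-Lipschitz and $m\tr(B)\ge 0$, so $\abs{(\fnorm{X}^2-n\tr(A))_+ - m\tr(B)}\le\abs{\fnorm{X}^2-\E\fnorm{X}^2}$ and the deviation bound transfers to $\abs{\hat\tau_B-\tau_B}$ with no side condition at all.
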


\subsection{Proof of Lemma~\ref{lemma::low-noise}}
Clearly the condition on the stable rank of $B$ guarantees that 
$$n \ge r(B) = \frac{\tr(B)}{\twonorm{B}} =
\frac{\tr(B)\twonorm{B}}{\twonorm{B}^2}\ge \fnorm{B}^2 /\twonorm{B}^2 
 \ge \log m.$$
Thus the conditions in
Lemmas \ref{lemma::trBest} and \ref{lemma::Tclaim1}  hold.   
A careful examination of the proof for Theorem~1 in~\cite{RZ15} shows
that the key new component is in analyzing the following term
$\hat\gamma^{(i)}$ for all $i$.
First notice that for all $i$,
\[
n \hat\gamma^{(i)} 
 =   {X_{\minus i}^T X_i } \\
 =  
(X_{0, \minus i}^T +   W_{\minus i}^T)(X_{0, \minus i} \beta^{i} + \ve_i), 
\]
where   
$\ve_i :=  V_{0, i} + W_i$, and
\bens
\hat\Gamma^{(i)} \beta^{i}
& = &{\frac{1}{n}(X_{\minus i}^T X_{\minus i} - \hat\tr(B) I_{m-1}) \beta^{i}} \\
& = & 
\inv{n} (X_{0, \minus i}^T X_{0, \minus i}
 + W_{\minus i}^T X_{0,\minus i} + 
X_{0, \minus i}^T W_{\minus i} 
+ W_{\minus i}^T W_{\minus i} - \hat\tr(B) I_{m-1})\beta^{i}
\eens
Thus 
\bens
&&\norm{\hat\gamma^{(i)} - \hat\Gamma^{(i)} \beta^{i}}_{\infty} 
 \le 
\inv{n}\norm{ X_{0, \minus i}^T \ve_i +  W_{\minus i}^T \ve_i 
- (X_{0, \minus i}^T W_{\minus i} + W_{\minus i}^T W_{\minus i} - \hat\tr(B) I_{m-1})\beta^{i}}_{\infty}\\
 & \le & 
\inv{n}\norm{ X_{0, \minus i}^T \ve_i +  W_{\minus i}^T \ve_i }_{\infty} 
+ \inv{n}\norm{(W^T W-  \hat\tr(B) I_{m}) \bar\beta^i}_{\infty} 
+\norm{\inv{n} X_0^T W \bar\beta^i}_{\infty}  \\
 & \le & 
\inv{n}\norm{ X_{0, \minus i}^T \ve_i +  W_{\minus i}^T \ve_i }_{\infty} 
+ \inv{n}\norm{(Z^T B  Z- \tr(B) I_{m}) \bar\beta^i}_{\infty} + \inv{n} \norm{X_0^T W
     \bar\beta^i}_{\infty} \\
&& + \inv{n} \abs{\hat\tr(B) - \tr(B)} \norm{\beta^{i}}_{\infty} =:  U_1+  U_2 + U_3 + U_4.
\eens

By  Lemma \ref{lemma::Tclaim1},      
on event $\B_{10}$,
\[
\max_{i\not=j} \onen \ip{X_{0, j}, W_i}  \le 
 \frac{1}{n}\maxnorm{X_0^T W} \le C_0 K^2 
\sqrt{\tau_B} \sqrt{ a_{\max}} \sqrt{\frac{\log m}{n}} 
= \sqrt{\tau_B} \sqrt{ a_{\max}} r_{m,n}, 
\]
\[
 \max_{i \not= j} \onen \ip{W_i, W_j}  \le 
 C_0 K^2 \inv{\sqrt{n}}
\fnorm{B} \sqrt{\frac{\log m}{n}} = \inv{\sqrt{n}}\fnorm{B} r_{m,n}.
\]
On event $\B_4$,  we have  for every $i$,
\bens
\max_{j \not=i} \onen \ip{X_{0, j}, V_{0,i}} &  \le & 
C_0 K^2 \sigma_{V_i} 
\sqrt{ a_{\max}} \sqrt{\frac{\log m}{n}} = 
\sigma_{V_i} r_{m,n}\sqrt{ a_{\max}}  \\
\text{ and } \; \; 
\max_{j \not= i} \onen \ip{W_j, V_{0, i}} & \le & 
 C_0 K^2 \sigma_{V_i} \sqrt{\tau_B}\sqrt{\frac{\log m}{n}}
= r_{m,n}\sigma_{V_i} \sqrt{\tau_B},
\eens
where       
$C_0$ is adjusted so
that the error probability hold for $D_0 = \sqrt{\tau_B} + a_{\max}^{1/2}$.
On event $\B_5$ for  $D'_0 :=\sqrt{\twonorm{B}} + a_{\max}^{1/2}$, for all $j$,
\bens
\lefteqn{U_2 +  U_3 = 
\onen \norm{(Z^T B Z- \tr(B) I_{m}) \bar\beta^j}_{\infty} +
\onen \norm{X_0^T W \bar \beta^j}_{\infty}} \\
& \le & r_{m,n} \twonorm{\beta^{j}}
\left(\frac{\fnorm{B}}{\sqrt{n}} + \sqrt{\tau_B}
  a^{1/2}_{\max}\right) \le r_{m,n} \twonorm{\beta^{j}}  \tau_B^{1/2} D_0',
\eens
where recall $\fnorm{B} \le \sqrt{\tr(B)}\twonorm{B}^{1/2}$.
Denote by $\B_0 := \B_4 \cap \B_5 \cap \B_6 \cap \B_{10}$.
Suppose that event $\B_0$ holds.
By Lemmas \ref{lemma::trBest} and \ref{lemma::Tclaim1},
under (A1) and $D_1$ defined therein, for all $i$,
\ben
\nonumber
\norm{\hat\gamma^{(i)} - \hat\Gamma^{(i)} \beta^{i}}_{\infty} 
& \le & 
\nonumber
 r_{m,n} \sigma_{V_i} D_0 + D_0' \tau_B^{1/2} r_{m,n} \twonorm{\beta^{i}} 
+ \onen \abs{\hat\tr(B) - \tr(B)} \norm{\beta^{i}}_{\infty} \\
& \le & 
\nonumber
D_0 \sigma_{V_i} r_{m,n} + D_0' \tau_B^{1/2} \twonorm{\beta^{i}} r_{m,n}  +
D_1 \norm{\beta^{i}}_{\infty} r_{m,m} \\
& \le & 
\label{eq::oracle}
D_0 \sigma_{V_i} r_{m,n} + D_0' \tau_B^{1/2} \twonorm{\beta^{i}} r_{m,n}
+ 2 D_1 \inv{\sqrt{m}} \norm{\beta^{i}}_{\infty} r_{m,n}
\een
By~\eqref{eq::oracle} and that fact that 
\[
2 D_1 := 2\left(\frac{\fnorm{A} }{\sqrt{m}} +\frac{\fnorm{B}  }{\sqrt{n}}
\right)\le  2(\twonorm{A}^{1/2} + \twonorm{B}^{1/2}) (\sqrt{\tau_A} +
\sqrt{\tau_B}) \le 
D_{\ora} D_0',
\]
we have  on $\B_0$ and under (A1), for all $i$
\bens
\label{eq::oracleII}
\norm{\hat\gamma^{(i)} - \hat\Gamma^{(i)} \beta^{i}}_{\infty}
& \le &
D_0' \tau_B^{1/2} \twonorm{\beta^{i}} r_{m,n} 
+ 2D_1\inv{\sqrt{m}} \norm{\beta^{i}}_{\infty} r_{m,n} + D_0 \sigma_{V_i} r_{m,n} \\
& \le &
D_0' \twonorm{\beta^{i}} r_{m,n}
\left( \tau_B^{1/2}  +  \frac{D_{\ora}}{\sqrt{m}} \right) + D_0  \sigma_{V_i} r_{m,n}  \\
& \le &
D_0' \left( \tau_B^{1/2}  +  \frac{D_{\ora}}{\sqrt{m}} \right) \twonorm{\beta^{i}} r_{m,n}
+ D_0  \sigma_{V_i} r_{m,n} \\
&\le &
{D_0' \left(\tau_B^{+/2}  \twonorm{\beta^{i}} 
+   \sigma_{V_i} \right) r_{m,n}.}
\eens
Hence the lemma holds for $m \ge 16$ and 
$\psi_i = C_0 D_0' K^2 \left(\tau_B^{+/2} \twonorm{\beta^{i}} + \sigma_{V_i} \right)$.
Finally, we have by the union bound, $\prob{\B_0} \ge 1 -  16/m^2$.
This completes the proof. \qed

\subsection{Proof of Lemma~\ref{lemma::AD}}
 Lemma~\ref{lemma::AD} follows from the proof of Theorem 26
of~\cite{RZ15}.     
Hence we only provide a proof sketch.

Recall the following for $X_0 = Z_1 A^{1/2}$,
\bens
\lefteqn{
\Delta := \hat\Gamma_{A} -A := \onen X^TX - \onen \hat\tr(B) I_{m} -A} \\
& = & (\onen X_0^T X_0 -A)+  
 \onen \big(W^T X_0 + X_0^T W\big) + \onen \big(W^T W  - \hat\tr(B) I_{m}\big).
\eens
First notice that 
\bens
&&\lefteqn{
\maxnorm{\hat\Gamma_A -A}} \\
& \le &  
\maxnorm{\onen X_0^T X_0 - A} + 
\maxnorm{\onen (W^T X_0 + X_0^T W)} + 
\maxnorm{\onen W^T W - \frac{\hat\tr(B)}{n} I_{m}}\\
& \le &  
\maxnorm{A^{1/2}\onen Z_1^T Z_1 A^{1/2} - A} +
\maxnorm{\onen(W^T X_0 + X_0^T W)} \\
&& +
\maxnorm{\onen Z_2^T B Z_2 - \tau_B I_{m}} + 
\onen \maxnorm{\hat\tr(B) - \tr(B)} \\
&=:& I + II + III + IV.
\eens
Denote by $\B_{10}$ the event such that 
\bens
\inv{n} 
\maxnorm{X_0^T W} & \le &  C_0 K^2  \sqrt{\tau_B a_{\max} }
\sqrt{\frac{\log  m}{n}} \\
\maxnorm{\onen Z^T B Z - \tr(B) I_m/n} & = & \onen \maxnorm{W^T W - \tr(B)
  I_m} \\
& \le &  C_0 K^2  \frac{\fnorm{B}}{\sqrt{n}}  \sqrt{\frac{\log  m}{n}} \le  C_0 K^2  \twonorm{B} \sqrt{\frac{\log  m}{n}}.
\eens
Then, $\B_{10}$ holds with probability at least $1 - 2 /m^2$.  See Lemma \ref{lemma::Tclaim1} of~\cite{RZ15}.

Denote by event $\B_3$ the event such that 
\bens
\onen \maxnorm{X_0^T X_0 - A} \le 4 C \ve a_{\max}, 
\eens
where $\ve = K^2 \sqrt{\frac{\log  m}{n}} < 1/C$;
Then $\prob{\B_3} \ge 1 - 2 /m^2$ so long as $n \ge c' K^4 \log (3em/\ve)/\ve^2$, which in turn holds so
long as $c'$ is small enough; See Corollary 42 of~\cite{RZ15}.

Putting all together, we have under $\B_{10} \cap \B_{3} \cap \B_{6}$
\bens
\maxnorm{\hat\Gamma_A -A} & \le &  8 C K^2 (a_{\max} + \twonorm{B}) \sqrt{\frac{\log  m}{n}} + 4 C_0 D_1 K^2 
\sqrt{\frac{\log  m}{m n}}.
\eens
This completes the proof. \qed

\section{R-squared analysis}
\label{sec:R}

Consider the following regressions as considered in \eqref{eq:reg:reg} and \eqref{eq:EIV:reg} in the main paper:
\begin{eqnarray*}
y &=& X \beta_1 + \epsilon, \quad \ X\ \rm{and} \ y  \ \rm{are \ observable,} \\
y &=& X_0 \beta_2 + \epsilon, \quad  X=X_0+W,  \quad  X\ \rm{and} \ y \ \rm{are \ observable}.
\end{eqnarray*}
We calculate the explanatory power of $X$ for $y$ using  
$R^2_X = \corr\left(X\hat{\beta}_1, y \right)^2$, where $\hat{\beta}_1$ is the Lasso or the ridge regression estimator.

Similarly, define the explanatory power of $X_0$ for $y$ using the EIV estimator $\hat{\beta}_2$:
\begin{eqnarray}
R^2_{X_0}  &=& \corr\left(X_0 \hat{\beta}_2, y \right)^2 = \frac{\cov(y, X_0 \hat{\beta}_2)^2}{\var(y)  \var(X_0 \hat{\beta}_2)}. \label{eq:R2}
\end{eqnarray}
The following Lemma \ref{lem:r-sqaured} implies that when the model follows EIV with fixed $m$, $R^2_{X_0}$ becomes larger than $R^2_X$ as $n$ increases, i.e.,
the proposed R-squared metric asymptotically choose a correct model between EIV and the regular regression model.
\begin{lemma}
\label{lem:r-sqaured}
Suppose EIV model as in (1) and (3) in the main paper with $E \epsilon^2=\sigma^2$ and $\E \epsilon =0$. 
Suppose $X_0 = Z_1 A^{1/2}$ and  $W = B^{1/2} Z_2$ for independent random matrices $Z_1$ and $Z_2$ with i.i.d. $N(0,1)$ entries,   for $1 \le i \le j \le m$.
Let $\hat{\beta}$ be the corrected Lasso (i.e. EIV) and $\hat{\beta}_L$ and $\hat{\beta}_R$ be the Lasso and the ridge regression estimator with regularization parameters $\lambda_L$ and $\lambda_R$ respectively, as defined in Section \ref{subsec:analy:R-sq} of the main paper.
Then as $n \to \infty$ with fixed $m$,  $\lambda_L \to 0$, and $\lambda_R/n \to 0$,  it holds that
$R^2_{X_0}  -  R^2_X \to c$ in probability
for some $c>0$; $c$ depends only on $A$, $B$, $\beta_2$, and $\sigma^2$.
\end{lemma}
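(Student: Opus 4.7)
The plan is to reduce both sample R-squared statistics to explicit functions of population moments by letting $n \to \infty$ with $m$ fixed, identify the probability limits of the three estimators (ridge, plain Lasso, corrected Lasso) as the pseudo-true parameters of their respective population problems, and then verify the resulting gap is strictly positive via a direct algebraic identity.

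First I would establish the limits of the relevant sample moments. Since the rows of $X_0$ are i.i.d.\ $N(0,A)$, the LLN gives $\tfrac{1}{n} X_0^T X_0 \cip A$. Writing $\tfrac{1}{n} W^T W = \tfrac{1}{n} Z_2^T B Z_2$, which has mean $\tau_B I_m$ with entrywise fluctuations of order $\fnorm{B}/n$ by Gaussian chaos bounds, yields $\tfrac{1}{n} W^T W \cip \tau_B I_m$. Combining these with the independence of $X_0$, $W$, and $\epsilon$, standard LLN arguments give $\tfrac{1}{n} X^T X \cip M := A + \tau_B I_m$, $\tfrac{1}{n} X^T y \cip A \beta_2$, and $\tfrac{1}{n} y^T y \cip \beta_2^T A \beta_2 + \sigma^2$; Lemma~\ref{lemma::trBest} specialized to fixed $m$ shows $\hat\tau_B \cip \tau_B$.

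Next I would convert these moment limits into convergence of estimators. The closed form $\hat\beta_R = (X^T X + \lambda_R I_m)^{-1} X^T y$, combined with $\lambda_R/n \to 0$ and continuous mapping, yields $\hat\beta_R \cip \beta_1^\ast := M^{-1} A \beta_2$; since $m$ is fixed and $M \succ 0$, a standard argmin-continuity argument gives $\hat\beta_L \cip \beta_1^\ast$ as $\lambda_L \to 0$. For the corrected Lasso, $\hat\Gamma \cip A \succ 0$ and $\hat\gamma \cip A \beta_2$, so the limiting objective $\tfrac{1}{2}\beta^T A \beta - \beta^T A \beta_2$ has a unique minimizer $\beta_2$, giving $\hat\beta \cip \beta_2$. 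Continuous mapping then delivers the population R-squared limits: $\tfrac{1}{n}(X_0 \beta_2)^T y \cip \beta_2^T A \beta_2$, $\tfrac{1}{n}\|X_0 \beta_2\|_2^2 \cip \beta_2^T A \beta_2$, and the limit of $\tfrac{1}{n}\|y\|_2^2$ above imply $R^2_{X_0} \cip \beta_2^T A \beta_2/(\beta_2^T A \beta_2 + \sigma^2)$; analogously $\tfrac{1}{n}(X \beta_1^\ast)^T y \cip \beta_2^T A M^{-1} A \beta_2$ and $\tfrac{1}{n}\|X \beta_1^\ast\|_2^2 \cip \beta_1^{\ast T} M \beta_1^\ast = \beta_2^T A M^{-1} A \beta_2$, so $R^2_X \cip \beta_2^T A M^{-1} A \beta_2/(\beta_2^T A \beta_2 + \sigma^2)$.

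Finally I would subtract and simplify. Because $M - A = \tau_B I_m$ and $A$ commutes with $M$, the identity $A - A M^{-1} A = \tau_B A M^{-1}$ yields
\begin{equation*}
R^2_{X_0} - R^2_X \cip c := \frac{\tau_B\, \beta_2^T A M^{-1} \beta_2}{\beta_2^T A \beta_2 + \sigma^2},
\end{equation*}
which depends only on $A$, $B$ (through $\tau_B$), $\beta_2$, and $\sigma^2$, and is strictly positive whenever $\tau_B > 0$ and $\beta_2 \neq 0$, since $A M^{-1} = M^{-1/2} A M^{-1/2}$ is positive definite. The hard part will be verifying the consistency of the plain Lasso $\hat\beta_L \cip \beta_1^\ast$ under non-i.i.d.\ rows of $X$ (induced by the common column covariance $B$ being shared across rows of $W$); however, because $m$ is fixed and the empirical Gram matrix $\tfrac{1}{n} X^T X$ has the positive-definite population limit $M$, this reduces to a routine Knight--Fu style argmin-consistency argument once the moment limits in Step 1 are established.
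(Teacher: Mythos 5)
Your proposal is correct and follows essentially the same route as the paper's proof: the same sample-moment limits, the Knight-style argmin-convergence argument for the Lasso (the paper cites \citet{Knight:00}), the ridge closed form, and the limiting-objective argument for the corrected Lasso. Your final identity $A - AM^{-1}A = \tau_B A M^{-1}$ is just an equivalent packaging of the paper's positivity step, which computes the eigenvalues $\lambda_i(A)\tau_B/(\lambda_i(A)+\tau_B)$ of $A - A(A+\tau_B I_m)^{-1}A$ directly.
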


\begin{proof}   
Throughout the proof,  for a sequence of numbers $\{a_n\}_{n \ge 1}$ and a number $b$, we use $a_n \probto b$ when $a_n$ converges to $b$ in probability. 
For a sequence of matrices $\{\Sigma_n\}_{n \ge 1}$ and a matrix $\Sigma$, we use $\Sigma_n  \probto \Sigma$ 
when every element of $\Sigma_n$ converges to the corresponding element of $\Sigma$ in probability, 
which implies that $\|\Sigma_n - \Sigma\|_F \probto 0$.
Note that 
\[
\tilde{\beta}_L = \argmin_{\beta \in \R^m}   \frac{1}{2n} \beta^T X^T X \beta - \frac{1}{n} y^T X \beta + \lambda \|\beta\|_1 := \argmin_{\beta \in \R^m}  g_n(\beta).
\]
Since $X_0 = Z_1 A^{1/2}$ and  $W = B^{1/2} Z_2$ for random matrices $Z_1$ and $Z_2$ having i.i.d. $N(0,1)$ entries,   for $1 \le i \le j \le m$,   
\begin{eqnarray*}
\frac{1}{n} X^T X  \probto   A+  \tau_B I_m\quad\text{ and} \quad
\frac{1}{n} y^T X  \probto  \beta_2^T A.
\end{eqnarray*}
Combining the above with $\lambda \to 0$,
\[
g_n(\beta) \probto g(\beta) :=  \frac{1}{2}  \beta^T\left( A + \tau_B I_m \right)\beta  -  \beta_2^T A \beta.
\]
Since $g_n(\beta)$ is convex,  it follows from \citet{Knight:00} that 
\[
\argmin_{\beta\in \R^m}  g_n(\beta) \probto \argmin_{\beta \in \R^m}  g(\beta).
\]
Hence $\tilde{\beta}_L  \probto  (A+\tau_B I_m)^{-1} A \beta_2$.

Note that 
\[
\tilde{\beta}_R =\left(X^TX + \lambda_R I_p\right)^{-1} X^Ty = \left(\frac{X^TX}{n} + \frac{\lambda_R}{n} I_m\right)^{-1} \frac{X^Ty}{n}.
\]
Since $\frac{X^TX}{n}  \probto  A+\tau_B I_m$, $\frac{\lambda_R}{n} \to 0$, and $\frac{X^Ty}{n} \probto A \beta_2$, we have 
\[
\tilde{\beta}_R \probto (A+\tau_B I_m)^{-1} A \beta_2.
\]
Since it holds that
\bens
&& \var(y) \probto \sigma^2 + \beta_2^T A \beta_2 \quad \text{and} \\
&& 
\var(X\tilde{\beta}_R), \  
\cov(y, X\tilde{\beta}_R) \probto \beta_2^T A (A+\tau_B I_m)^{-1} A \beta_2,
\eens
and 
\bens
\var(X_0 \hat{\beta}), \  \cov(y, X_0 \hat{\beta}) \probto \beta_2^T A \beta_2, 
\eens
we have
\bens
R^2_{X}   = \frac{\cov(y, X\tilde{\beta}_R)^2}{\var(y)  \var(X\tilde{\beta}_R)} \probto \frac{\beta_2^T A(A+\tau_B I_m)^{-1} A \beta_2}{\beta_2^T A \beta_2+\sigma^2} :=\tilde{r}^*,
\eens
and
\bens
R^2_{X_0}  = \frac{\cov(y, X_0 \hat{\beta})^2}{\var(y)  \var(X_0 \hat{\beta})} \probto \frac{\beta_2^T A \beta_2}{\beta_2^T A \beta_2+\sigma^2} := r^*.
\eens
Therefore, 
\bens
(r^* -\tilde{r}^*)(\beta_2^T A \beta_2+\sigma^2) & = & 
\beta_2^T \left( A- A(A+\tau_B I_m)^{-1} A \right) \beta_2  \\
&\ge& \|\beta_2\|_2^2 \frac{\lambda_{\min}(A) \tau_B} {\lambda_{\min}(A) + \tau_B},
\eens
where the last inequality follows from the fact that $A- A(A+\tau_B I_m)^{-1} A$ is positive definite since it has positive eigenvalues $\frac{\lambda_i(A) \tau_B} {\lambda_i(A) + \tau_B} (i=1,\cdots, m)$.
This completes the proof.
\end{proof}

In practice, $R^2_{X_0}$ defined in \eqref{eq:R2} should be estimated since $X_0$ is not observed.
Under the setting in Theorem \ref{coro::Theta} in the main paper,
we obtain a lower bound of $R^2_{X_0}$ by the function of $\hat{\beta}$, $\hat{A}_+$, and $\hat{B}_+$, 
where $\hat{A}_+:=\hat{\Theta}_+^{-1}$ and $\hat{B}_+ :=\hat{\Omega}_+^{-1}$.

Consider the subgaussian model~\eqref{eq::addmodel}.
Then, we have with high probability
\begin{eqnarray}
\var(X_0 \hat{\beta}_2) &=& \hat{\beta}_2^T \left(\frac{X_0^T X_0}{n}-   \frac{X_0^T 1_n 1_n^T X_0}{n^2}   \right) \hat{\beta}_2 \label{eq:ineq:var}\\
&=& \hat{\beta}_2^T A^{1/2} \left(\frac{Z_1^T Z_1}{n}-   \frac{Z_1^T 1_n 1_n^T Z_1}{n^2}   \right) A^{1/2} \hat{\beta}_2 \nonumber\\
&\le& \lambda_{\max} (A) \lambda_{\max} (Z_1^T Z_1/n) \|\hat{\beta}_2\|^2  \nonumber\\
&\le& \left(1+ \frac{m \log n}{4n}\right) \lambda_{\max} (A)  \|\hat{\beta}_2\|^2 \nonumber \\
&\le&2\left(1+ \frac{m \log n}{4n}\right) \lambda_{\max} (\hat{A}_+)  \|\hat{\beta}_2\|^2 \nonumber 
\end{eqnarray}
where the first inequality uses $\|I_n - 1_n 1_n^T/n\|_2 \le 1$,  the second inequality follows from Lemma \ref{lem:eig}, and the last inequality uses the fact that $\lambda_{\max} (A) \le 2 \lambda_{\max} (\hat{A}_+)$, which holds with high probability; with regard to the last claim, see the proof of Theorem \ref{coro::Theta}.

When  $1_n^T y=0$, the numerator of \eqref{eq:R2} is 
\begin{eqnarray*}
\cov(y, X_0 \hat{\beta}_2)^2 &=& \left(\frac{1}{n} \hat{\beta}_2^T X_0^T  y\right)^2.   
\end{eqnarray*}
Thus we have with probability $1-2n^{-\frac{cm}{8K^2}}$,  
\begin{eqnarray}
&&\frac{1}{n} \hat{\beta}_2^T X_0^T  y= \frac{1}{n} \hat{\beta}_2^T X^T  y - \frac{1}{n} \hat{\beta}_2^T W^T y
=\frac{1}{n} \hat{\beta}_2^T X^T  y - \frac{1}{n} \hat{\beta}_2^T Z_2^T B^{1/2}  y  \label{eq:var1} \\
&\ge& \frac{1}{n} \hat{\beta}_2^T X^T y - \frac{1}{n} \|\hat{\beta}_2\|_2 \|Z_2^T B^{1/2}  y\|_2  \nonumber \\
&\ge& 
 \frac{1}{n} \hat{\beta}_2^T X^T  y - \frac{1}{\sqrt{n}} \sqrt{1+ \frac{m \log n}{4n}}\|\hat{\beta}_2\|_2  \|y\|_2  \|B\|_2^{1/2} \nonumber \\
 &\ge&  \frac{1}{n} \hat{\beta}_2^T X^T  y - \frac{\sqrt{2}}{\sqrt{n}}  \sqrt{1+ \frac{m \log n}{4n}}\|\hat{\beta}_2\|_2  \|y\|_2  \|\hat{B}_+\|_2^{1/2},\nonumber
\end{eqnarray}
where we use Lemma \ref{lem:eig} and Theorem \ref{coro::Theta} in the main paper.
Hence, we obtain a lower bound of $\cov(y, X_0 \hat{\beta}_2)^2$. 
Together with \eqref{eq:ineq:var}, we obtain a lower bound of $R^2_{X_0}$, as expressed in \eqref{eq:R*} in the main paper.

\begin{lemma}
\label{lem:eig}
Let $Z$ be an $n \times m$ random matrix with independent entries $Z_{ij}$ satisfying $E(Z_{ij})=0$, $E(Z_{ij}^2)=1$, and $\|Z_{ij}\|_{\psi_2} \le K$. 
Suppose for the absolute constant $c>0$ defined in \eqref{def:absol}, the following holds:
\[
\frac{4n}{m \log n} \vee 32\left(\sqrt{\frac{m}{n}} + \frac{m}{n}\right) \frac{n}{m \log n} \le \frac{1}{3} n^{\frac{c}{8K^2}}.
\]
Then, we have with probability at least $1-2n^{-\frac{cm}{8K^2}}$, $\|\frac{1}{n}Z^T Z\|_2 \le 1+ \frac{m \log n}{4n}$.
\end{lemma}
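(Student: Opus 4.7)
The plan is to reduce the claim to a concentration bound on $\|Z^T Z/n - I\|_2$ via the trivial inequality $\|Z^T Z/n\|_2 \le 1 + \|Z^T Z/n - I\|_2$, and then to bound the latter operator norm by the standard $\epsilon$-net plus Bernstein argument for sample covariance matrices of subgaussian data.

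First I would fix a $1/4$-net $\mathcal{N}$ of the unit sphere $S^{m-1}$ with $|\mathcal{N}| \le 9^m$. For the symmetric matrix $M = Z^T Z/n - I$, the standard net inequality gives $\|M\|_2 \le 2\max_{x \in \mathcal{N}} |x^T M x|$, so it suffices to bound $\bigl|n^{-1}\|Zx\|_2^2 - 1\bigr|$ uniformly over $\mathcal{N}$. For fixed $x \in S^{m-1}$, the variables $Y_i := \langle z_i, x\rangle$ are independent, mean zero, with $\E Y_i^2 = 1$ and $\|Y_i\|_{\psi_2} \le C K$, so $Y_i^2 - 1$ is centered subexponential with $\psi_1$-norm at most $CK^2$. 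Bernstein's inequality then gives, for any $s > 0$,
\[
\prob{\Bigl|\tfrac{1}{n}\|Zx\|_2^2 - 1\Bigr| \ge s} \le 2\exp\left(-c n \min\left(\tfrac{s^2}{K^4}, \tfrac{s}{K^2}\right)\right).
\]
Choosing $s = m\log n/(8n)$, so that $2s = m\log n/(4n)$, and taking a union bound over $\mathcal{N}$ yields $\|Z^T Z/n\|_2 \le 1 + m\log n/(4n)$ on the complementary event, with overall failure probability at most $9^m \cdot 2\exp\bigl(-cn\min(s^2/K^4, s/K^2)\bigr)$.

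The main obstacle is matching the two branches of the maximum in the hypothesis to (i) the two regimes of Bernstein's inequality and (ii) the $9^m$ net correction. The branch $4n/(m\log n)$ corresponds to the condition that places $s = m\log n/(8n)$ in the linear regime $s \le K^2$, where the Bernstein exponent becomes $cns/K^2 = cm\log n/(8K^2)$ and each per-point tail is at most $n^{-cm/(8K^2)}$. The branch $32(\sqrt{m/n}+m/n)\cdot n/(m\log n)$ echoes the familiar deviation term from covariance-matrix concentration (cf.\ Theorem~5.39 of Vershynin, 2012) and ensures that the $9^m$ net cardinality is absorbed into $n^{c/(8K^2)}$ without degrading the final exponent. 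Substituting $s$, taking logarithms of both branches of the hypothesis, and verifying the resulting scalar inequalities would finally deliver the stated probability bound $2n^{-cm/(8K^2)}$.
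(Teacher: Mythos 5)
Your overall architecture is the same as the paper's (a pointwise Bernstein-type bound on $\lvert u^T (Z^TZ/n) u - 1\rvert$, which the paper imports as Theorem 31 of Rudelson--Zhou, followed by a net union bound), but the bookkeeping has a genuine gap that prevents you from reaching the stated probability. With your fixed $1/4$-net of cardinality $9^m$ you are forced to test each net point at the halved threshold $s = m\log n/(8n)$, whose linear-regime exponent is $cns/K^2 = cm\log n/(8K^2)$ --- \emph{exactly} the exponent claimed in the conclusion. The union bound then yields failure probability $2\cdot 9^m\, n^{-cm/(8K^2)}$, and the factor $9^m$ can never be absorbed: you have zero slack, and nothing in the hypothesis rescues you, since (as you note yourself) you only ``would verify'' the scalar inequalities at the end. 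The paper avoids this by doing two things differently: it tests at the \emph{full} threshold $t = m\log n/(4n)$, giving per-point exponent $cm\log n/(4K^2)$, and it uses a polynomially fine net $\epsilon = n^{-c_1}$ rather than a constant-resolution one. The upper constraint $n^{c_1} \le \tfrac13 n^{c/(8K^2)}$ then makes the net-cardinality cost $m\log(3/\epsilon) \le \tfrac{c}{8K^2}\, m\log n$, i.e.\ exactly half of the per-point exponent, leaving the other half as the advertised $n^{-cm/(8K^2)}$.

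This also means you have misidentified the roles of the two branches in the hypothesis. They are not a ``linear-regime'' condition and a ``$9^m$-absorption'' condition; both are \emph{lower} bounds on $n^{c_1} = 1/\epsilon$, forcing the net to be fine enough that the passage from the net maximum to the operator norm costs only a negligible additive error (controlled via the a priori deviation scale $\sqrt{m/n}+m/n$ from standard covariance concentration), rather than the multiplicative factor $2$ you pay with a $1/4$-net --- which is precisely why you had to halve $s$ and lost the exponent. The cardinality absorption is done by the \emph{upper} constraint on $n^{c_1}$, not by either branch. A further point to be careful about (the paper is itself terse here, stating its net bound ``for $t > K^2$''): the first branch of the hypothesis only gives $m\log n \gtrsim n^{1-c/(8K^2)}$, which does not by itself place $s = m\log n/(8n)$ in the linear regime of Bernstein's inequality, so your claim that it does would need justification. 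To repair your proof, replace the fixed net by $\epsilon = n^{-c_1}$ with $c_1$ sandwiched as in the paper, test at $t = m\log n/(4n)$, and handle the net-to-sphere transfer additively using the a priori bound on $\lVert Z^TZ/n - I\rVert_2$.
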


\begin{proof}
By Theorem 31 of \citet{RZ15}, we have for any $u \in S^{m-1}$ and $t>0$,
\[
P\left(\left|u^T \frac{Z^T Z}{n} u - 1  \right| > t \right) \le 2\exp \left( -c \min\left(\frac{nt^2}{K^4}, \frac{nt}{K^2}   \right)\right).
\]
Consider the $\epsilon$-net of $S^{m-1}$. Let $S_{\epsilon}$ be the set of centers of the $\epsilon$-net. 
Then,  there exists an absolute constant $c>0$ such that for $t>K^2$, 
\begin{equation}
\label{def:absol}
P\left(\exists u \in S_{\epsilon} \mid  \left|u^T \frac{Z^T Z}{n} u - 1  \right| > t \right) \le 2\exp \left(m \log (3/\epsilon) -c \frac{nt}{K^2}\right).
\end{equation}

Set $\epsilon = n^{-c_1}$ and $t=\frac{m \log n}{4n}$, where $c_1>0$ satisfies 
\[
\frac{4n}{m \log n} \vee 32\left(\sqrt{\frac{m}{n}} + \frac{m}{n}\right) \frac{n}{m \log n} \le n^{c_1} \le \frac{1}{3} n^{\frac{c}{8K^2}}.
\]

Then, we have
\[
P\left(\left\|\frac{1}{n}Z^T Z - I_m\right\|_2 \ge \frac{m \log n}{4n} \right) \ge 1-2n^{-\frac{cm}{8K^2}}.
\]
This completes the proof.
\end{proof}

\section{Additional Real Data Analysis}
\label{sec:add:real}
In this section, we include the additional real data analysis.

\subsection{R-squared analysis}
Figure \ref{fig3cov_comp_BB_sJqwef2} displays the R-squared values by varying regularization parameters. 
It is observed that EIV always has larger maximum $R^*$ values than maximum R-squared values of Lasso and EIV. 
Considering the fact that $R^*$ is a lower bound of R-squared value of EIV,  the relations of the torque data $\tilde{X}$ and the neural signal difference $\Delta$ 
is well explained by EIV regression than regular regression.

\begin{figure}[H]
\centering
  \begin{tabular}{@{}cc@{}} 
            \includegraphics[width=5cm, height=4cm]{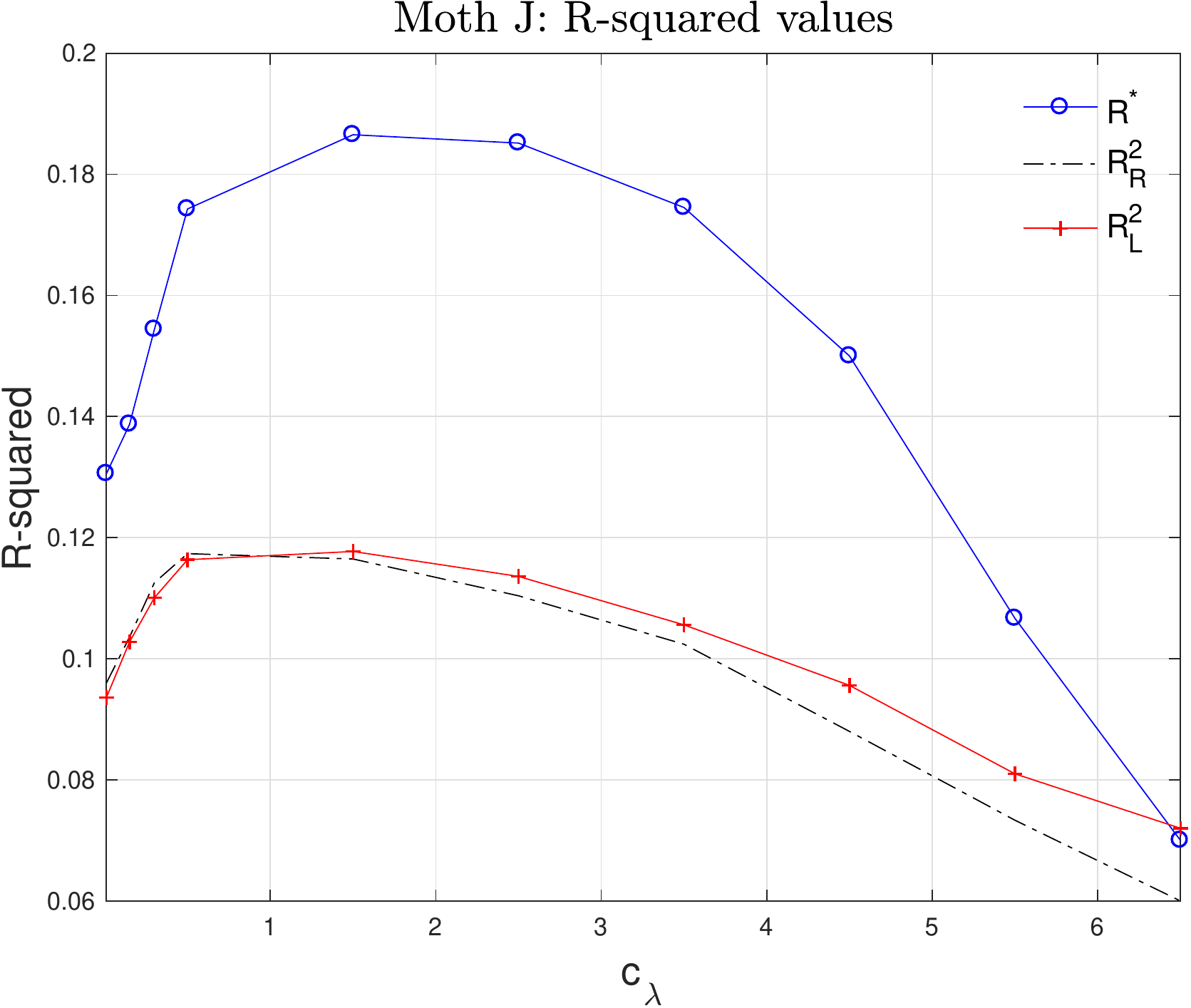}  &  \includegraphics[width=5cm, height=4cm]{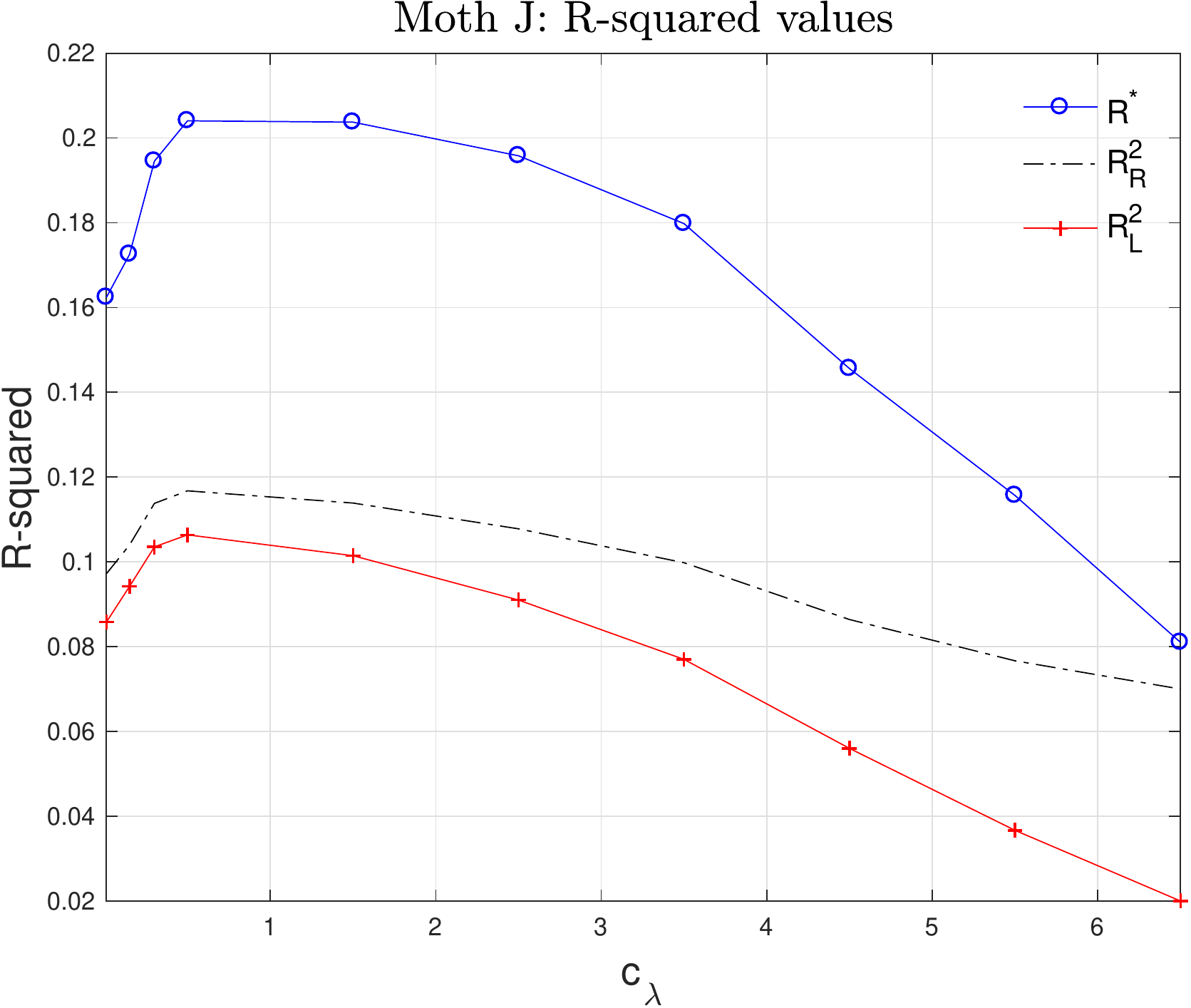}\\ 
            \includegraphics[width=5cm, height=4cm]{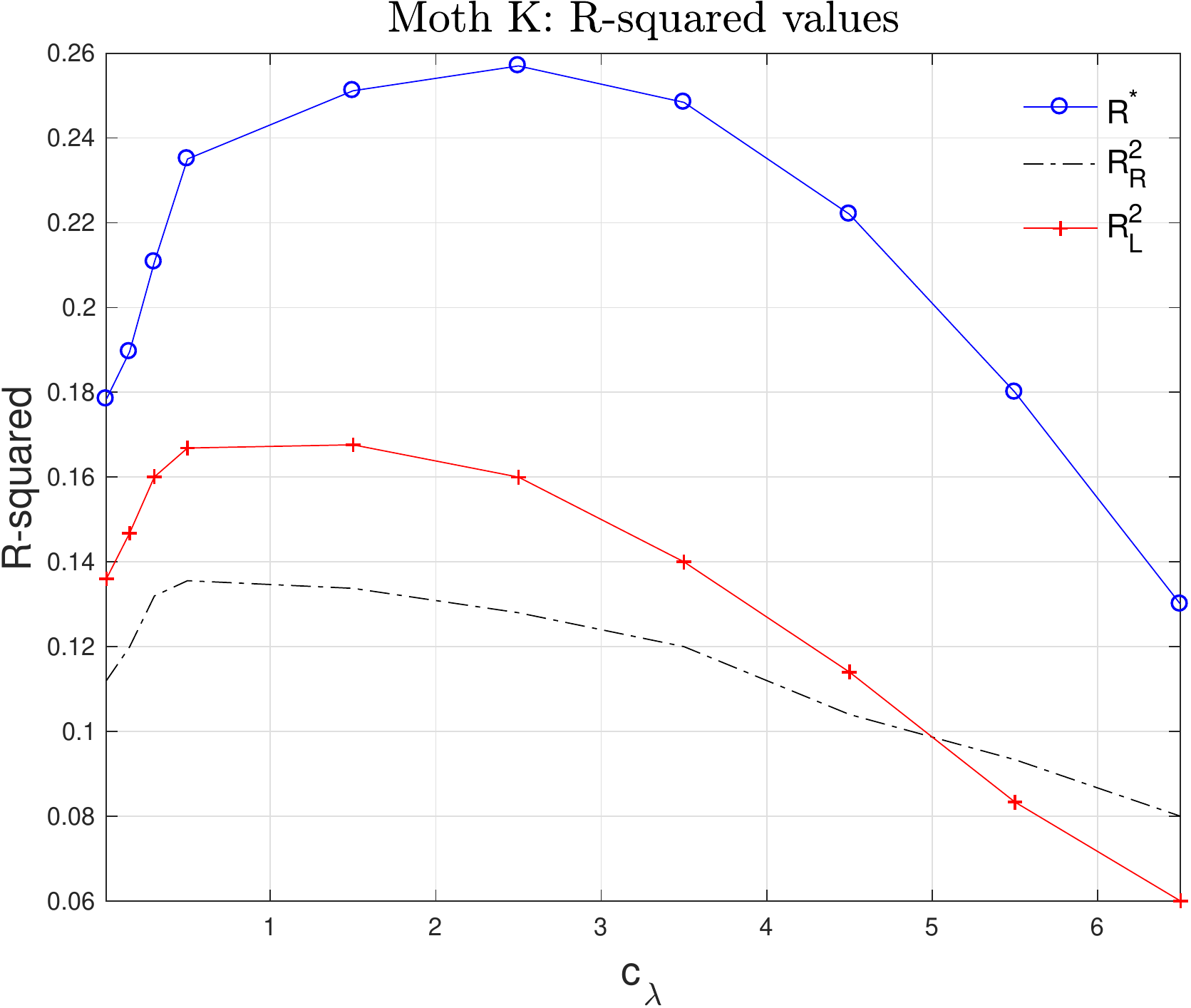}  &  \includegraphics[width=5cm, height=4cm]{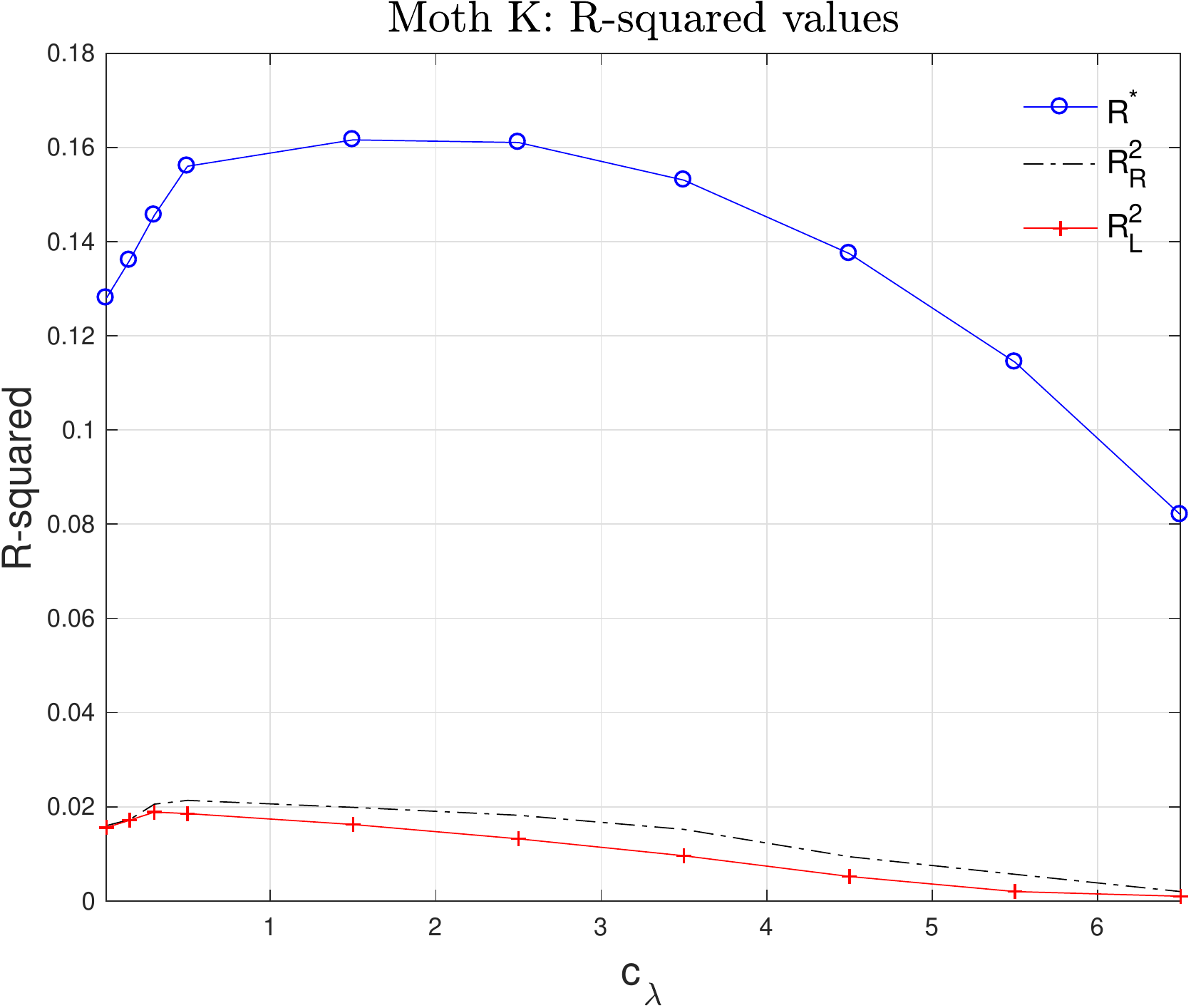}\\ 
            \includegraphics[width=5cm, height=4cm]{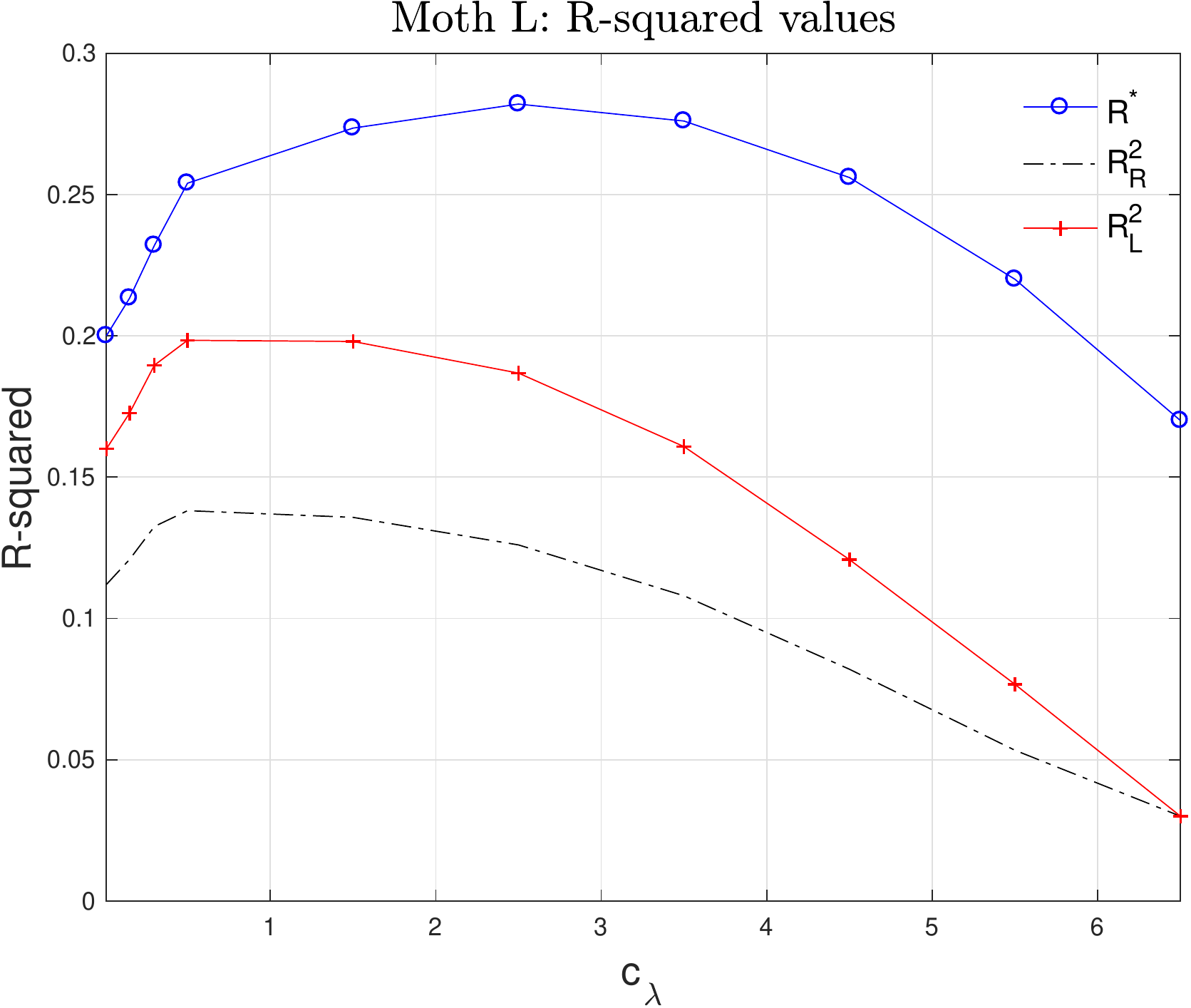}  &  \includegraphics[width=5cm, height=4cm]{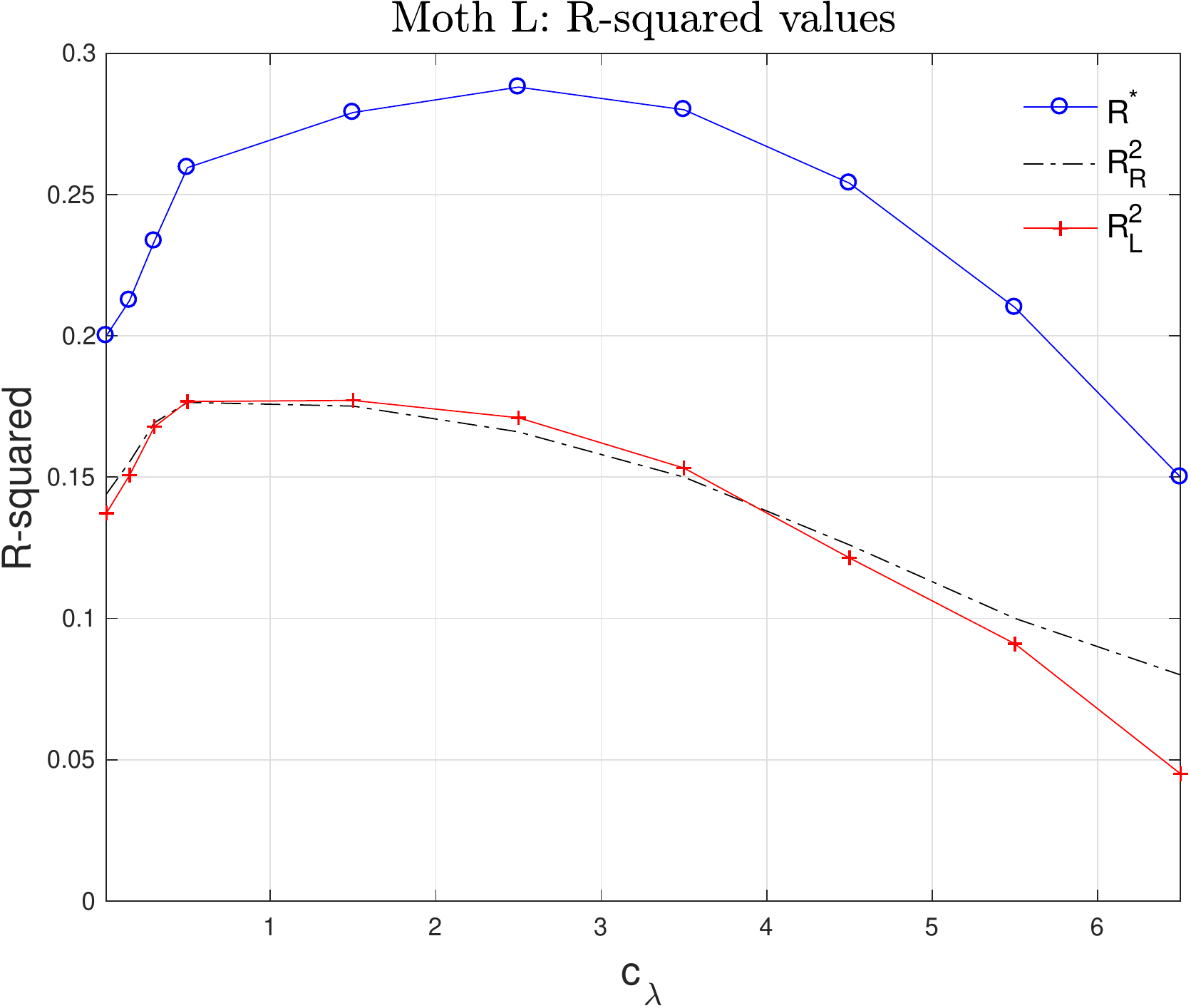}
  \end{tabular}
    \caption{R-squared values (Lasso and Ridge) and $R^*$ (EIV) values by varying regularization parameters. For the EIV,  we set optimal $\hat{\tau}_A$ as in Table \ref{table:mothpower}
    in the main paper.  
    }
 \label{fig3cov_comp_BB_sJqwef2}
\end{figure}

\subsection{Supplementary of Regression Analysis}
\label{subsect:reg:a}
Recall that in Subsection \ref{sec:moth:reg} of the main paper, we fit regression models including EIV regression relating the residualized neural spike time differences and torque values:  
\begin{eqnarray}
&&\tilde{\Delta} = \tilde{X} \Pi \zeta_1+ \epsilon, \quad \ \tilde{X}\Pi\ \rm{and} \ \tilde{\Delta}  \ \rm{are \ observable,} \label{eq:reg22}\\
&&\tilde{\Delta} = X_0 \Pi \zeta_2 + \epsilon, \quad  \tilde{X} \Pi =X_0 \Pi +W \Pi ,  \quad  \tilde{X}\Pi\ \rm{and} \ \tilde{\Delta} \ \rm{are \ observable}. \label{eq:nonsm2}
\end{eqnarray}
Suppose the model \eqref{eq:reg22}. More specifically, let
$E[\tilde{X}]=0$ and 
$\cov(\mvec{\tilde{X}}) = \Sigma \otimes I$. 
Since $\mvec{\tilde{X} \Pi} = (\Pi^T \otimes I) \mvec{\tilde{X}}$, we have
\begin{eqnarray*}
\cov(\mvec{\tilde{X}\Pi}) &=& (\Pi^T \otimes I) E[ \mathrm{vec}(\tilde{X})  \mathrm{vec}(\tilde{X})^T] (\Pi \otimes I) \\
&=& (\Pi^T \otimes I) (\Sigma \otimes I)  (\Pi \otimes I)\\
&=& \Pi^T \Sigma \Pi  \otimes I,
\end{eqnarray*}
that is, \eqref{eq:reg22} also follows the regular linear model.

Suppose the model \eqref{eq:nonsm2}. We will show that $\tilde{X} \Pi$ in \eqref{eq:nonsm2} has the Kronecker sum covariance structure, i.e., \eqref{eq:nonsm2} still follows EIV model.
Since $E[\tilde{X}]=0$ and 
$\cov(\mathrm{vec}(\tilde{X})) =A \oplus B$,   
by using the similar argument in the above with $\Pi^T \Pi = I$, we have
\begin{eqnarray*}
\cov(\mvec{\tilde{X}\Pi}) &=& (\Pi^T \otimes I) E[ \mathrm{vec}(\tilde{X})  \mathrm{vec}(\tilde{X})^T] (\Pi \otimes I) \\
&=& (\Pi^T \otimes I) (A \otimes I + I \otimes B)  (\Pi \otimes I)\\
&=& \Pi^T A \Pi  \otimes I +  \Pi^T \Pi \otimes B \\
&=&   \Pi^T A \Pi \oplus B.
\end{eqnarray*}

\section{Additional Tables and Figures}
\label{sec:add:fig}

Figures \ref{fig3covb1}-\ref{fig3covb2} show the  
graphical structure of $\hat{\Omega}$ for moth J and L for the Kronecker sum model.
Figure \ref{fig3_spat_corr} shows heat maps of the sample
correlation matrix of $500$ temporal points and the sample correlation
matrix of wing strokes, showing dependencies in both time points and among the wing strokes.
These figures are referenced in Section \ref{sec:realdata} of the main paper.

\begin{figure}[H]
\centering
  \begin{tabular}{@{}c@{}}
                           \includegraphics[width=12cm, height=15cm]{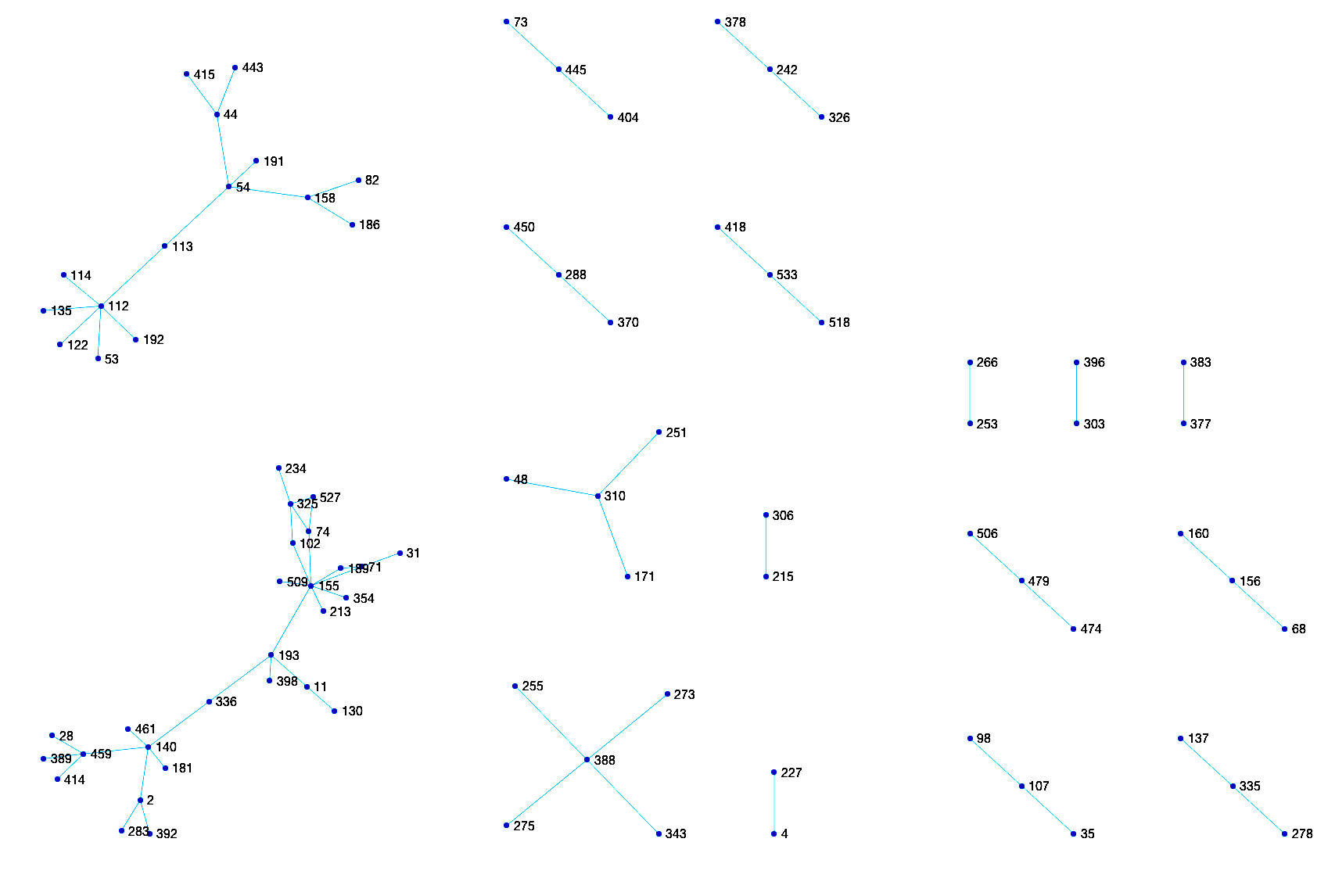}  
                                   \end{tabular}
  \caption{The estimated graphical structure of $\Omega$ for Moth J for the 
Kronecker sum model when $\hat{\tau}_A=1.5$ and 
$\lambda^{(i)}_B=0.1 \sqrt{\frac{\log n}{500}} \asymp  0.01$.
Singletons (nodes with edges) are not included in the graph.}
\label{fig3covb1}
\end{figure}

\begin{figure}[H]
\centering
  \begin{tabular}{@{}c@{}}
                  \includegraphics[width=12cm,height=15cm]{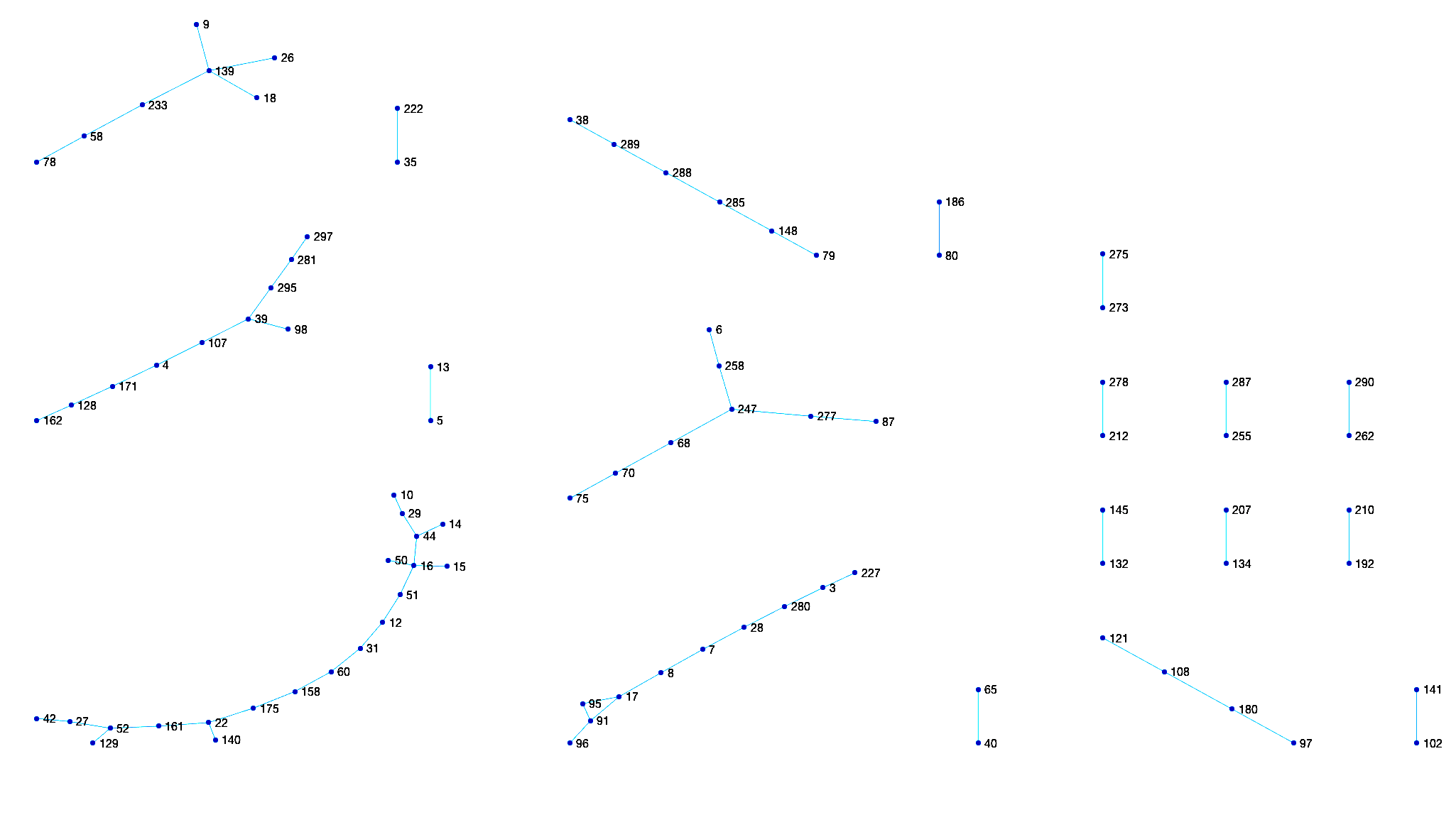}          
        \end{tabular}
  \caption{The estimated graphical structure of $\Omega$ for Moth L for the Kronecker sum model when $\hat{\tau}_A=1.5$ and $\lambda^{(i)}_B =0.1 \sqrt{\frac{\log n}{500}} \asymp 0.01$. 
Nodes that have no connected edges are not included in the graph. 
  }
\label{fig3covb2}
\end{figure}

\begin{figure}[H]
\centering
  \begin{tabular}{@{}cc@{}}
    \includegraphics[width=.5\textwidth]{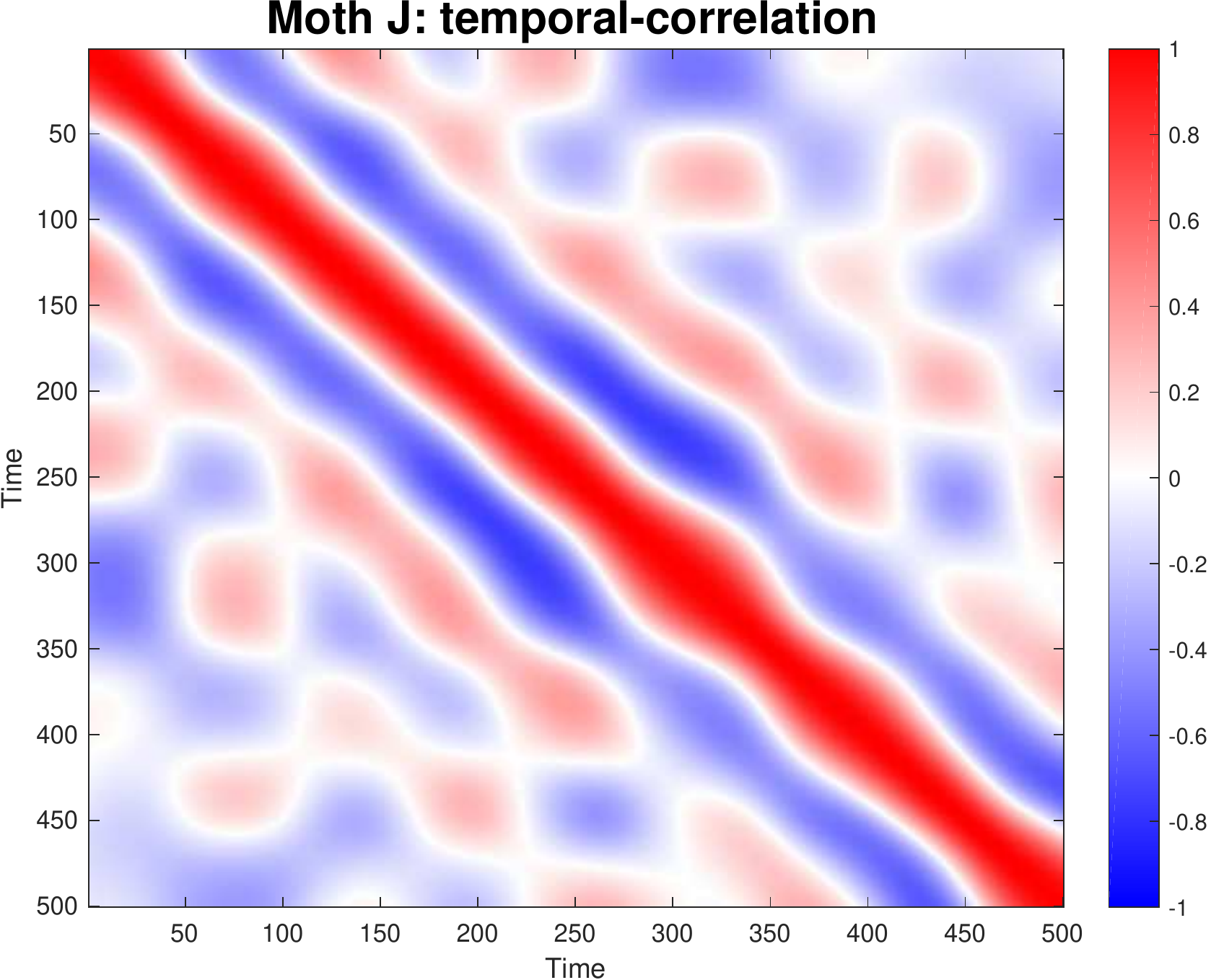}  &   \includegraphics[width=.5\textwidth]{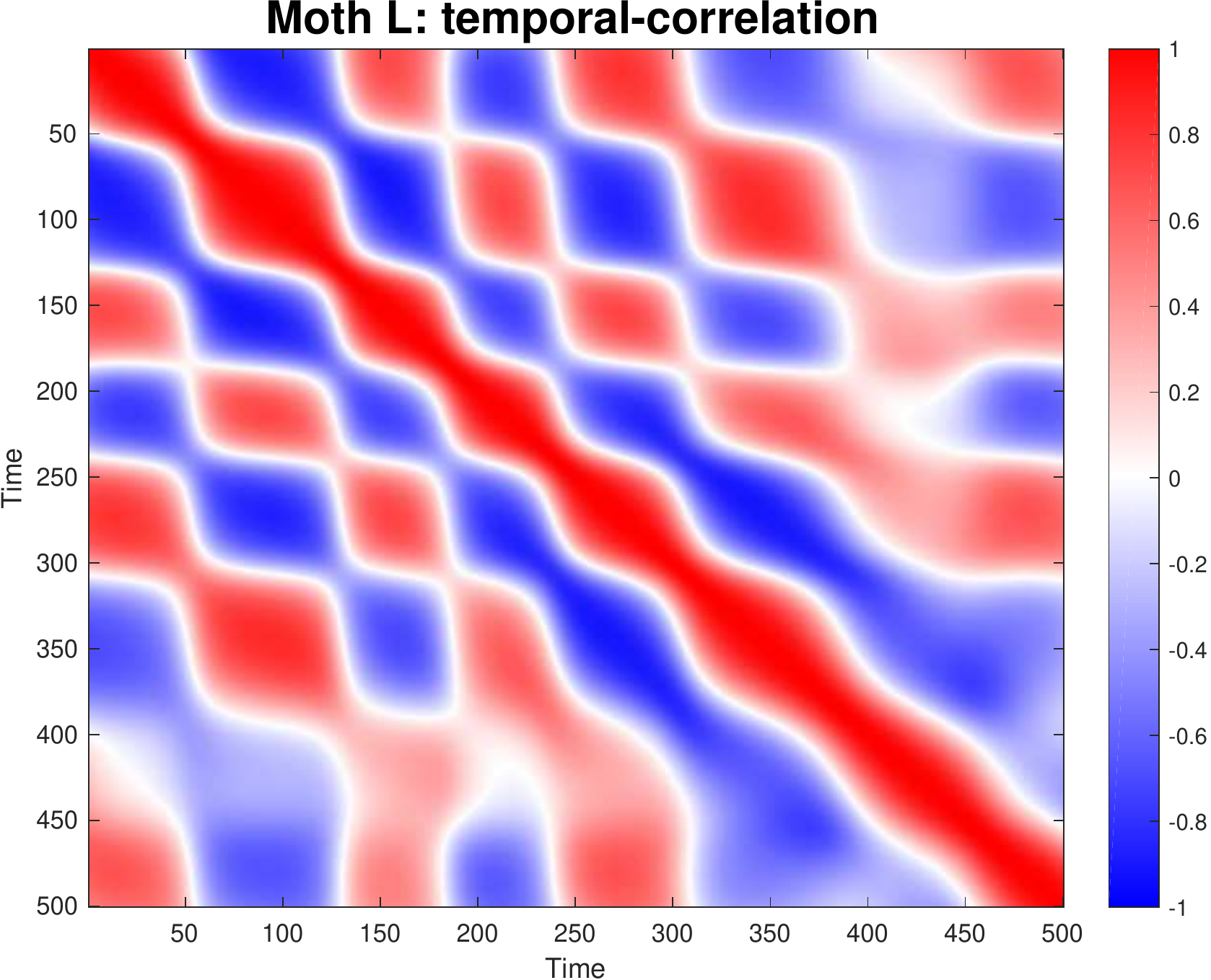}\\
  \includegraphics[width=.5\textwidth]{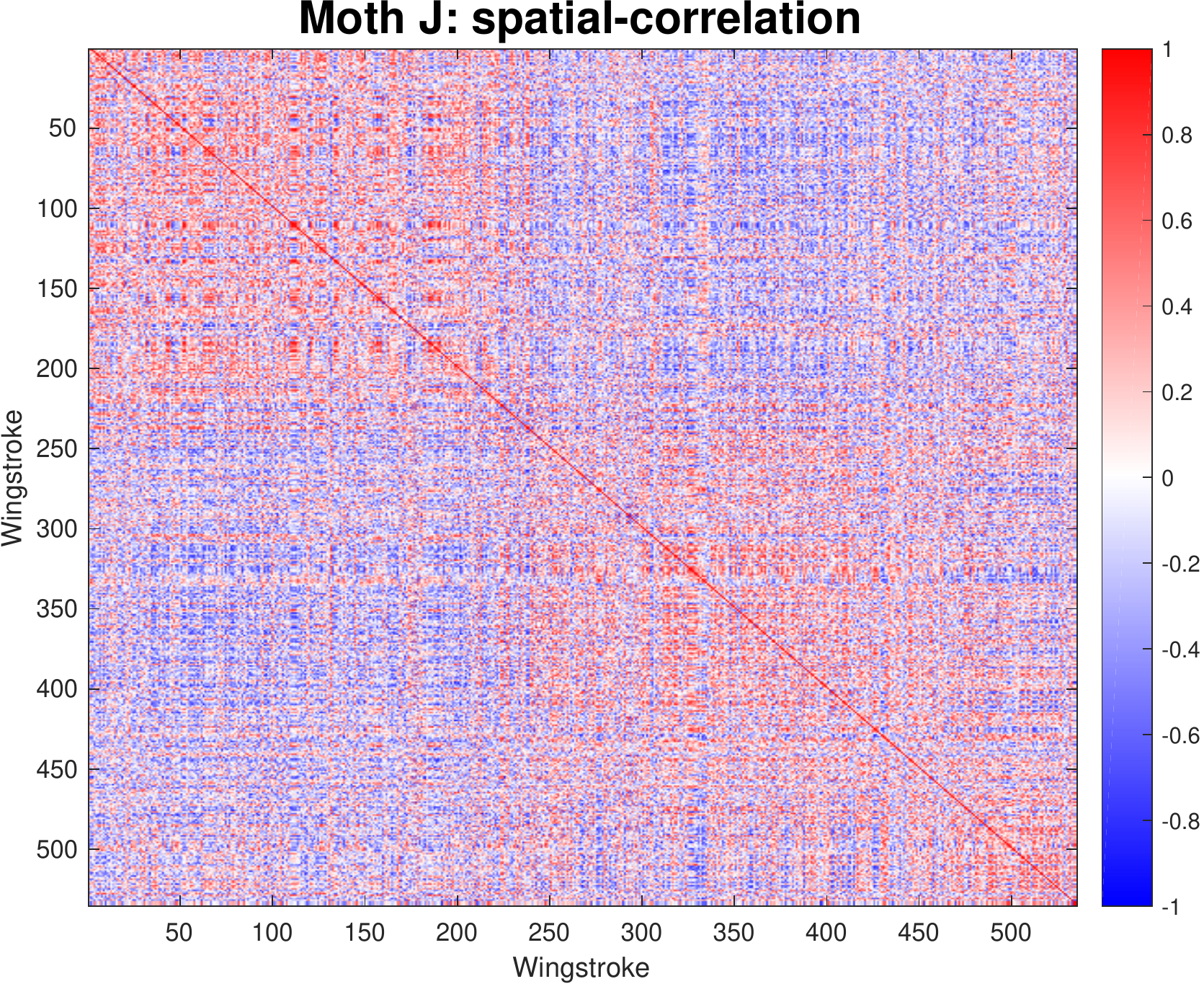}  &   \includegraphics[width=.5\textwidth]{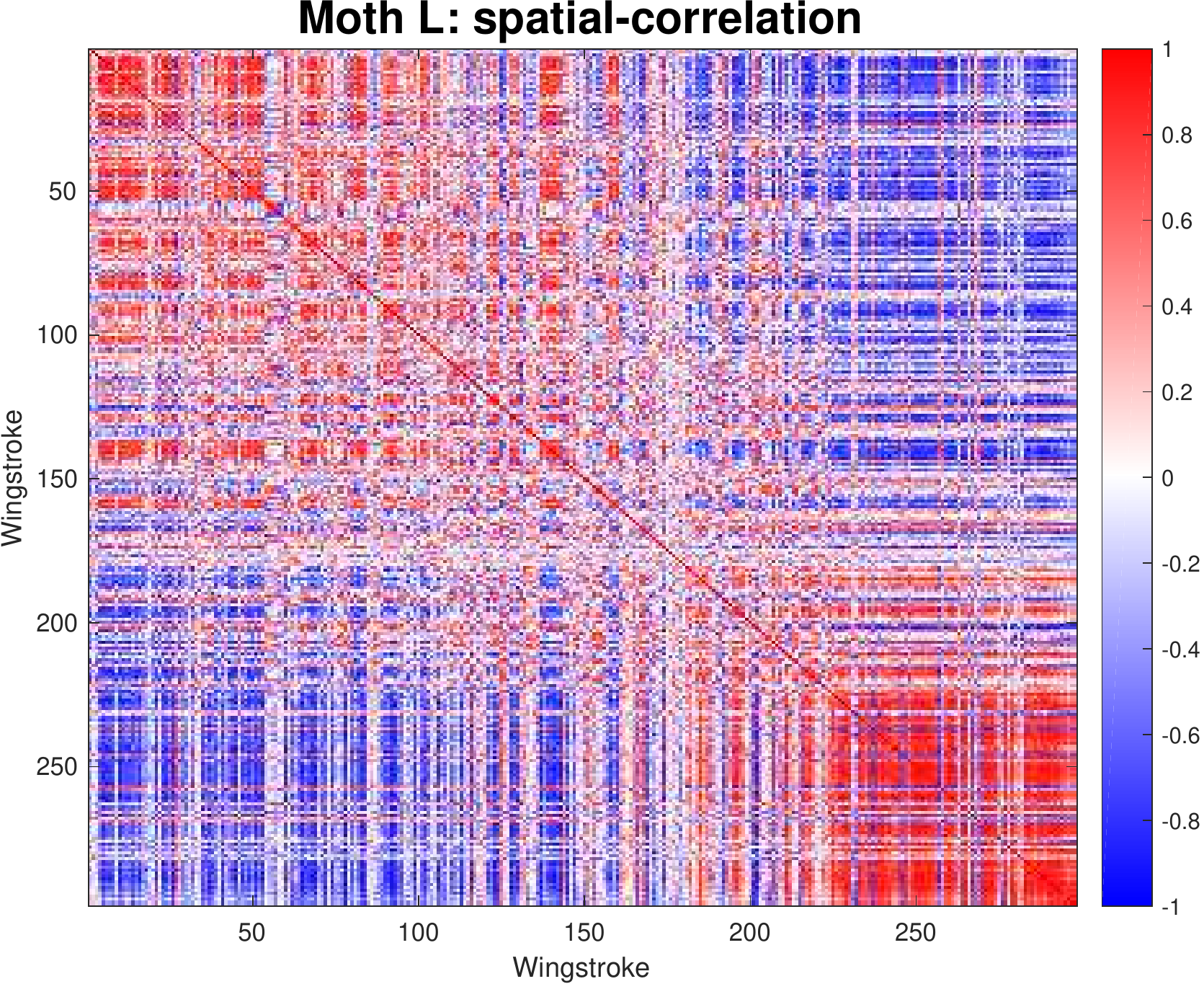}
    \end{tabular}
  \caption{\label{fig3_spat_corr}Heat maps of the sample correlation matrix of 500 temporal points (top) and the sample correlation matrix of wingstrokes (bottom) for moth J (left) and moth L (right).}

\end{figure}

\bibliography{subgaussian2}

\end{document}